\newtheorem{thm}{Theorem}
\newtheorem{conj}{Conjecture}
\newtheorem{cor}{Corollary}
\newtheorem{lemma}{Lemma}
\newtheorem{assumption}{Assumption}
\newtheorem{defi}{Definition}
\newtheorem{prop}{Proposition}
\newtheorem{exx}{Example}
\newtheorem{remm}{Remark}
\newenvironment{corollary}{\begin{cor}\rm }{\hfill \hspace*{1pt} \hfill $\lrcorner$ \end{cor}}
\newenvironment{theorem}{\begin{thm}\rm }{\hfill \hspace*{1pt} \hfill $\lrcorner$ \end{thm}}
\newenvironment{definition}{\begin{defi}\rm }{\hfill \hspace*{1pt} \hfill $\lrcorner$ \end{defi}}
\newenvironment{remark}{\begin{remm}\rm }{\hfill \hspace*{1pt} \hfill $\lrcorner$\end{remm}}
\newenvironment{example}{\begin{exx}\rm }{\hfill \hspace*{1pt} \hfill $\lrcorner$ \end{exx}}
\newenvironment{proofof}{\noindent {\em Proof of }}{\hfill \hspace*{1pt}
\hfill $\blacksquare$}
\newenvironment{proof}{\noindent {\em Proof.}}{\hfill \hspace*{1pt} \hfill $\square$}
\newenvironment{conjecture}{\begin{conj}\rm }{\hfill \hspace*{1pt} \hfill $\lrcorner$ \end{conj}}
\newcommand\real{\ensuremath{{\mathbb R}}}
\newcommand\realn{\ensuremath{{\mathbb{R}^n}}}
\newcommand{\dom}{\mathrm{dom}\,}
\newcommand\mymatrix[2]{\left[\begin{array}{#1} #2 \end{array}\right]}
\newcommand{\calA}{\mathcal{A}}
\newcommand{\calB}{\mathcal{B}}
\newcommand{\calC}{\mathcal{C}}
\newcommand{\calE}{\mathcal{E}}
\newcommand{\calR}{\mathcal{R}}
\newcommand{\calS}{\mathcal{S}}
\newcommand{\calU}{\mathcal{U}}
\newcommand{\calV}{\mathcal{V}}
\newcommand{\calK}{\mathcal{K}}
\newcommand{\calW}{\mathcal{W}}
\newcommand{\calY}{\mathcal{Y}}
\newcommand{\calX}{\mathcal{X}}
\newcommand{\calZ}{\mathcal{Z}}
\begin{document}

\title{Differentially positive systems
\thanks{
F. Forni and R. Sepulchre are with the University of Cambridge, Department of Engineering, 
Trumpington Street, Cambridge CB2 1PZ, and with the Department of Electrical Engineering and Computer Science, 
University of Li{\`e}ge, 4000 Li{\`e}ge, Belgium, \texttt{ff286@cam.ac.uk|r.sepulchre@eng.cam.ac.uk}.
The research is supported by FNRS.
The paper presents research results of the Belgian Network DYSCO
(Dynamical Systems, Control, and Optimization), funded by the
Interuniversity Attraction Poles Programme, initiated by the Belgian
State, Science Policy Office. The scientific responsibility rests with
its authors. }}
\author{F. Forni, R. Sepulchre}
\date{\today}

\maketitle

\begin{abstract}  
The paper introduces and studies differentially positive systems, that is, systems whose linearization along an arbitrary trajectory is positive. A generalization of Perron Frobenius theory is developed in this differential framework to show that the property induces a (conal) order that strongly constrains the asymptotic behavior of solutions. The results illustrate that behaviors constrained by local order properties extend beyond the well-studied class of linear positive systems and monotone systems, which both require a constant cone field and a linear state space. 
\end{abstract}

\section{Introduction}

Positive systems are linear behaviors that leave a cone invariant \cite{Bushell1973}. They have a rich history both because of the relevance of the property in applications (e.g., when modeling a behavior with positive variables \cite{Luenberger1979,Farina2000,Hirsch2003}) and because the property significantly restricts the behavior, as established by Perron Frobenius theory: if the cone invariance is strict, that is, if the boundary of the cone is eventually mapped to the interior of the cone, then the asymptotic behavior of the system lies on a one dimensional object. Positive systems find many applications in systems and control, ranging from specific stabilization properties \cite{Willems1976, Muratori1991,Farina2000,DeLeenheer2001,Knorn2009, Roszak2009}
to observer  design \cite{Hardin2007,Bonnabel2011}, and to 
distributed control \cite{Moreau2004, Rantzer2012, Sepulchre2010}. 

Motivated by the importance of positivity in linear systems theory, the present paper investigates the behavior of differentially positive systems, that is, systems whose linearization along trajectories is positive. We discuss both the relevance of the property for applications and how much the property restricts the behavior, by generalizing Perron Frobenius theory to the differential framework. The conceptual picture is that a cone is attached to every point of the
state space, defining a cone field, and that contraction of that cone field along the flow eventually constrains the behavior to be one-dimensional.

Differential positivity reduces to the well-studied property of monotonicity when the state-space is a linear vector space and when the cone field is constant. First studied for closed systems \cite{Smith1995,Hirsch1995,Hirsch1988,Dancer1998} and later extended to open systems 
\cite{Angeli2003,Angeli2004a,Angeli2004b}, the concept of monotone systems encompasses cooperative and competitive systems \cite{Hirsch2003,Piccardi2002} and is extensively adopted in biology and chemistry for modeling and control purposes \cite{DeLeenheer2004,Enciso2005, DeLeenheer2007,Sontag2007,Angeli2008,Angeli2012}.
Differential positivity is an infinitesimal characterization of monotonicity. The differential viewpoint allows for a generalization of monotonicity because the state-space needs not be linear and the cone needs not be constant in space. The generalization is relevant in a number of applications. In particular, non-constant cone fields in linear spaces and invariant cone fields on nonlinear spaces are two situations frequently encountered in applications. Like monotonicity, differential positivity induces an order between solutions. But in contrast to monotone systems, the \emph{conal} order needs not to induce a partial order \emph{globally}, allowing for instance to (locally) order solutions on closed curves, such as along limit cycles or in nonlinear spaces such as the circle.

A main contribution of the paper is to generalize Perron-Frobenius theory in the differential framework. The Perron-Frobenius vector of linear positive systems here becomes a vector field and the integral curves of the Perron-Frobenius vector field shape the attractors of the system. A main result of the paper is to provide a characterization of 
limit sets of differentially positive systems akin to Poincar{\'e}-Bendixson
theorem for planar systems. Differentially positive systems can model multistable behaviors, excitable behaviors, oscillatory behaviors, but preclude for instance the existence of attractive homoclinic orbits, and a fortiori of strange attractors. In that sense, differentially positive systems single out a significant class of nonlinear systems that have a simple asymptotic behavior.

The paper is organized as follows. Section \ref{sec:nutshell} introduces the main ideas of differential positivity on familiar phase portraits and at an intuitive level. It aims at showing that the differential concept of positivity is a natural one. Section \ref{sec:notation} covers some mathematical preliminaries and notations while Section \ref{sec:conal_manifold} summarizes the main mathematical notions of order on manifolds. The next three sections contain the main results of the paper: the formal notion of differentially positive system, differential Perron-Frobenius theory, and a characterization of limit sets of differentially positive systems. 
Section \ref{sec:pendulum} illustrates several important points of the paper on the popular nonlinear pendulum example. 
{Proofs are in appendix}. Our treatment of differential positivity is for continuous-time and discrete-time open systems. The important topic of interconnections of differentially positive systems is a rich one and will be discussed in a separate paper.

\section{Differential positivity in a nutshell}
\label{sec:nutshell}

Figure \ref{fig:phase_portraits} illustrates four different phase portraits of (closed) differentially positive systems. Two of the phase portraits are represented in two different set of coordinates. The figure illustrates that for each of the phase portraits, a cone can be attached at any point in such a way that the cone is infinitesimally contracted by the flow
{(i.e. the cone angle shrinks under the action of the flow)}.
Furthermore, the cones can be patched to each other to define a smooth cone field.

\begin{figure}[tbp]
\centering
\includegraphics[width=1\columnwidth]{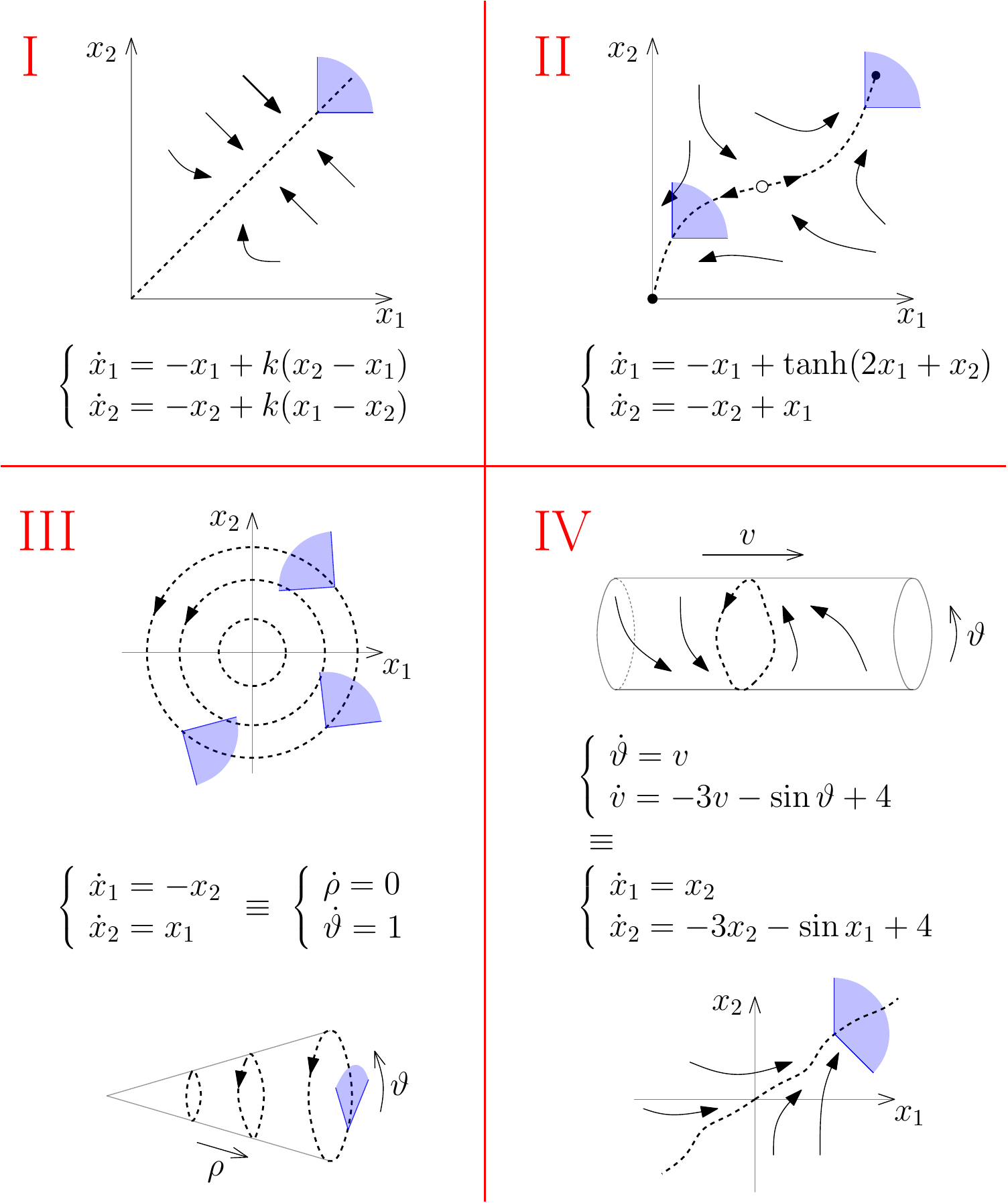}  
\caption{The phase portraits of four different planar differentially positive systems.
(I) a linear consensus model. 
(II) a monotone bistable model.
(III) the harmonic oscillator. 
(IV) the nonlinear pendulum with strong damping and torque.
}
\label{fig:phase_portraits}
\end{figure}

In the first two examples, the cone is actually the same everywhere, defining a \emph{constant} cone field in a \emph{linear} space. In the third example, both the state-space and the dynamics are linear but the cone rotates with the flow. It defines a \emph{non constant} cone field in a \emph{linear} space. In the fourth example, the cone field must be defined infinitesimally because the state space is nonlinear. At each point, a cone is defined in the tangent space. The \emph{nonlinear} cylindrical space $\mathbb{S}\times\real$ 
is a Lie group and the cone is moved from point to point by (left) translation. The analogy between the third and fourth examples is apparent when studying the phase portrait of the harmonic oscillator in polar coordinates. The nonlinear change of coordinates makes the cone invariant on the conic nonlinear space $\mathbb{R}_+\times\mathbb{S}$. The analogy between the first, second, and fourth examples is apparent when unwrapping the phase portrait of the nonlinear pendulum in the plane. In cartesian coordinates, that is, unwrapping the angular coordinate $\vartheta$ on the real line, the cone field becomes constant in a linear space.

The first phase portrait is the phase portrait of a linear system that leaves the positive orthant invariant. It is a strictly positive system. Its behavior is representative of consensus behaviors extensively studied in the recent years \cite{Moreau2004,Olfati-Saber2007,Sepulchre2010a}. The second phase portrait leaves the same cone invariant but the dynamics are nonlinear. Here the cone invariance can be characterized differentially: the linearization along any trajectory is a positive linear system with respect to the positive orthant. It is an example of monotone system, representative of bistable behaviors extensively studied in decision-making processes, see e.g. \cite{Trotta2012}. The third example is the phase portrait of the harmonic oscillator. Solutions cannot be globally ordered in the state space because the trajectories are closed curves. But the positivity of the linearization is nevertheless apparent in polar coordinates. The corresponding order property will be characterized by the notion of conal order on manifolds developed in Section IV. The fourth example is the phase portrait of the nonlinear pendulum with strong damping. Positivity of the linearization and differential positivity of the nonlinear pendulum is studied in details in the last Section of the paper.

The main message of the paper is that differential positivity constrains the asymptotic behavior of the four different phase portraits in a similar way. For linear positive systems, this is Perron- Frobenius theory. The Perron-Frobenius vector attracts all solutions to a one-dimensional ray. For differentially positive systems, the generalized object is a Perron-Frobenius curve, an integral curve of the Perron-Frobenius vector field characterized in Section V. In the second phase portrait, this is the heteroclinic orbit connecting the two stable equilibria and the unstable saddle equilibrium. In the third phase portrait, every trajectory is a Perron-Frobenius curve. The differential positivity is not strict in that case. In the fourth phase portrait, all solutions
except the unstable equilibrium are attracted to a single Perron-Frobenius curve, the limit cycle. 

The convergence properties of differentially positive systems are a consequence of the infinitesimal contraction of cones along trajectories. The significance of the property is that it can be checked locally but that it discriminates among different types of global behaviors. The smoothness of the cone field is what connects the local property to the global property. A most important feature of differential positivity is that it allows saddle points such as in Figure \ref{fig:phase_portraits}\hspace{-0.5mm}.II, because the local order is compatible with a global smooth cone field, but that it does not allow saddle points such as in Figure \ref{fig:homoclinic}. The homoclinic orbit makes the local order dictated by the saddle point incompatible with a global smooth cone field. This incompatibility has been recognized since the early work of Poincar{\'e} as the essence of complex behaviors. 
In contrast, the limit sets
of differentially positive systems are simple, 
in a sense that is made precise in 
Section \ref{sec:limit_sets}.

\begin{figure}[b]
\centering
\includegraphics[width=0.55\columnwidth]{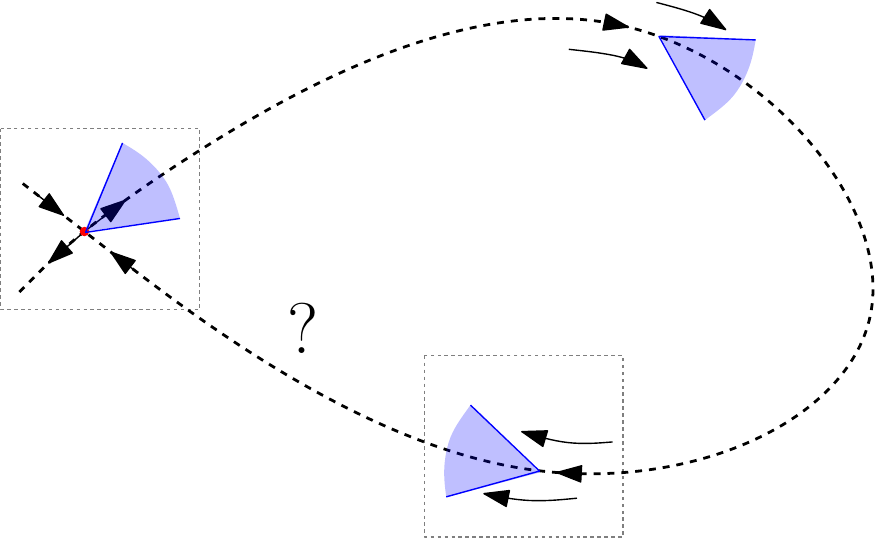}  
\caption{Strict differential positivity excludes homoclinic orbits on saddle points whose unstable manifold has dimension one.
The local order imposed by the saddle point cannot be extended 
globally to a smooth cone field.}
\label{fig:homoclinic}
\end{figure}

\section{Mathematical preliminaries  \\ and basic assumptions}
\label{sec:notation}

A ($d$-dimensional) manifold $\calX$ is a couple $(\calX,\mathcal{A}^+)$
where $\calX$ is a set and $\mathcal{A}^+$ is a maximal atlas of $\calX$
into $\real^d$, such that the topology induced by $\mathcal{A}^+$ is Hausdorff and 
second-countable {(we refer to this topology as the manifold topology).
Throughout the paper every manifold is connected.} 
$T_x\calX$ denotes the tangent space at $x$ and 
$T\calX:= \bigcup_{x\in\calX} \{x\}\times T_x\calX$ denotes the tangent bundle.
$\calX$ is endowed with a Riemannian metric tensor, represented by
a (smoothly varying) inner product $\langle \cdot,\cdot\rangle_x\,$.
$|\delta x|_x := \sqrt{\langle \delta x, \delta x \rangle_x}$, for any $\delta x\in T_x\calX$.
{The Riemannian metric endows the manifold with the 
Riemannian distance $D$. We assume that $(\calX,D)$ is a complete 
metric space (see e.g. \cite[Section 3.6]{Absil2008}).
The metric space topology and the manifold topology agree
\cite[Theorem 3.1]{boothby2003}.}

Given two smooth manifolds $\calX_1$, $\calX_2$,
the differential  of $f:\calX_1\to\calX_2$
at $x$ is denoted by
$\partial f(x) :T_x \calX \to T_{f(x)}\calX$.
Given $\calX_1 =\calZ\times\calY$, {the operator
$\partial_z f(z,y) $ satisfies} $\partial_z f(z,y) \delta z := \partial f(z,y) [\delta z,0]$ 
for each $(z,y)\in\calX_1$ and each $[\delta z, 0] \in T_{(z,y)} \calX_1$, where $\delta z \in T_z\calZ$.
Finally, we write $f(\cdot,y)$ to denote the function in $\calZ \to \calX_2$ 
mapping each $z\in \calZ$ into $f(z,y)\in \calX_2$. 

A \emph{curve} or $\gamma$ on $\calX$, 
{
is a mapping $\gamma :I \to \calX$ where 
either $I \subseteq \real$ or $I \subseteq \mathbb{Z}$}.  
$\mathrm{dom}\gamma$ and $\mathrm{im}\gamma$ denote domain and image of $\gamma$.
We say that a curve $\gamma:I\to\calX$ is bounded
if $\mathrm{im} \gamma$ is a bounded set. 
We sometime use
$\dot{\gamma}(s)$ or $\frac{d\gamma(s)}{ds}$ to denote $\partial \gamma(s)1$,
for $s \in \mathrm{dom} \gamma$. 

Given a set $\mathcal{S}\subseteq\calX$, 
$\mathrm{int}\mathcal{S}$ and $\mathrm{bd}\calS$ denote 
{interior} and {boundary} of $\calS$, respectively. 
Given a vector space $\calV$, a set $\mathcal{S}\subseteq\calV$,
and a constant $\lambda \in \real$, 
$\lambda \calS$ denotes the set $\{\lambda x\in \mathcal{V}\,|\, x \in \calS\}$.
$\calS+\calS$ denotes the set $\{x+y\in \mathcal{V}\,|\, x,y \in \calS\}$.
{ Given a point $y\in \calS$, 
$\calS\!\setminus\!\!\{y\} := \{x\in \calS\,|\, x\neq p\}$.
Given a sequence of sets $\calS_n$, 
$ \lim_{n\to\infty} \calS_n$ is the usual set-theoretic limit based on the 
Painlev{\'e}-Kuratowski convergence \cite[Chapter 4]{Rockafellar2004}.}

Let $\Sigma$ be an open continuous dynamical system with
{
(smooth) state manifold $\calX$ and input manifold $\calU$
},
represented by
$ \dot{x} = f(x,u)$, $(x,u)\in\calX\times\calU $,
where $f$ is a (input-dependent) vector field that assigns to each 
$(x,u) \in \calX\times\calU$ a tangent vector $f(x,u) \in T_x \calX$.
We make the \emph{standing assumptions} that
the vector field $f$ and $u(\cdot)$ are $C^2$ functions.
Following \cite[Chapter 4, Section 4]{boothby2003}, 
two differentiable curves $x(\cdot):I\subseteq\real\to\calX$ (trajectory) and
$u(\cdot):I\subseteq\real\to\calU$ (input)
are a solution pair $(x(\cdot),u(\cdot)) \in \Sigma$ 
if $\dot{x}(t) = f(x(t),u(t))$ for all $t\in I$. 
An open discrete dynamical system $\Sigma$ is
represented by the recursive equation
$x^+ = f(x,u)$, $(x,u)\in \calX \times\calU$.
We make the \emph{standing assumption} 
that { $f:\calX\times\calU \to \calX$} is a $C^1$ function.
$(x(\cdot),u(\cdot)) : [t_0,\infty)\subseteq\mathbb{Z} \to \calX\times\calU$ is 
a solution pair of $\Sigma$ if $x(\cdot)$ and $u(\cdot)$ satisfy 
$x(t+1) = f(x(t),u(t))$ for each $t\in [t_0,\infty)$.

In what follows, we make the simplifying assumption of
\emph{forward completeness} of the solution space, namely
that every solution pair has domain $I = [t_0,\infty) \subseteq \real$ ($\subseteq \mathbb{Z}$). 
Given the solution pair 
$(x(\cdot),u(\cdot)) : [t_0,\infty) \to \calX \times \calU \in \Sigma$
we say that $\psi(\cdot,t,x(t),u(\cdot)):= x(\cdot)$ is the \emph{trajectory} or
the \emph{integral curve} passing through 
$x(t)$ at time $t\geq t_0$ under the action of the input $u(\cdot)$. 
For constant inputs we simply write $\psi(\cdot,t,x(t),u)$
and for closed systems we use $\psi(\cdot,t,x(t))$. 
The \emph{flow} of $\Sigma$ is given by the quantity $\psi(t,t_0,\cdot,u)$ for any $t\geq t_0$.
For any curve $\gamma(\cdot)$ and set $\calS$,
$\psi(t,t_0,\gamma(\cdot),u)$ denotes the time evolution of 
$\gamma(\cdot)$ along the flow of the system at time $t$,
and $\psi(t,t_0,\calS,u)$ denotes the set $\{\psi(t,t_0,x,u) \,|\, x\in\calS\}$.
For closed systems
we say that $x_\omega\in\calX$ is an $\omega$-limit point of a trajectory $x(\cdot)$ 
if there exists a sequence of times $t_k \to \infty$ as $k \to \infty$ such that
$x_\omega = \lim_{k\to\infty} x(t_k)$. In a similar way, an
$\alpha$-limit point of a trajectory $x(\cdot)$ is given by 
$\lim_{k\to\infty} x(t_k)$ for some sequence 
$t_k \to -\infty$ as $k \to \infty$.
The $\omega$-limit set 
$\omega(x_0)$ ($\alpha$-limit set) 
is the union of the $\omega$-limit points ($\alpha$-limit points)
of the trajectory $x(\cdot)$ from the initial condition $x(t_0)=x_0$.

\section{Cone fields, conal curves, conal orders}
\label{sec:conal_manifold}
A \emph{conal manifold} $\calX$ is a smooth manifold endowed with a \emph{cone field} 
\cite{Lawson1989},
\begin{equation}
\calK_\calX(x)\subseteq T_{x}\calX \qquad\quad \forall x \in \calX \ .
\end{equation}
Like for vector fields, a cone field attaches to each point $x$ of the manifold a cone $\calK_\calX(x)$
defined in the tangent space $T_x\calX$. 
\emph{Throughout the paper}, each cone
$\calK_\calX(x)\subseteq T_x\calX$ is closed, pointed and convex (for each $x \in \calX$,
$\calK_\calX(x) + \calK_\calX(x) \subseteq \calK_\calX(x)$, 
$\lambda \calK_\calX(x) \subseteq \calK_\calX(x)$ for any $ \lambda\in\real_{\geq 0}$,
and $ \calK_\calX(x) \cap - \calK_\calX(x) = \{0\}$).
To avoid pathological cases, we assume that each cone is solid 
(i.e. {it contains $n$ independent tangent vectors, 
where $n$} is the dimension of the tangent space) and
there exists a linear invertible mapping  
$\Gamma(x_1,x_2):T_{x_1}\calX \to T_{x_2}\calX$ 
for each  $x_1, x_2\in\cal\calX$, 
such that 
$\Gamma(x_1,x_2)\calK_\calX(x_1) = \calK_\calX(x_2)$.
{

Note that the application of a linear invertible mapping to a cone is 
intended as an operation 
on the rays of the cone, that is,
$
\Gamma(x_1,x_2) \calK_\calX(x_1) := 
\{\lambda \Gamma(x_1,x_2)\delta x \in T_{x_2}\calX \, | \, \delta  x \in \calK_\calX(x_1), \lambda \geq 0 \}
$.
}

We make the \emph{standing assumption} that each cone field is smooth.
In particular, { in local coordinates},
\begin{equation}
\label{eq:basic_cone_field}
\calK_\calX(x) = \{
 \delta x \in T_{x}\calX 
 \ | \
 \forall i\in I , \
 k_i(x,\delta x) \geq 0 
\}
\end{equation}
where $I\subseteq\mathbb{Z}$ is an index set and $k_i:T\calX \to \real$ are functions;
and we say that a cone field is smooth if the functions $k_i$ are smooth.

A curve $\gamma:I\subseteq\real\to\calX$
is a \emph{conal curve} on $\calX$ if 
\begin{equation}
\label{eq:conal_curve}
\dot{\gamma}(s) \in \calK_\calX(\gamma(s)) \qquad\qquad \mbox{ for all } s\in I \ . 
\end{equation}
Conal curves are integral curves of the cone field, as shown
in Figure \ref{fig:conal_order}.  They endow
the manifold with a local partial order:
for each $x_1,x_2\in\calX$, $x_1 \sqsubseteq_{\calK_\calX} x_2$
if and only if there exists a conal curve $\gamma:I\subseteq\real \to \calX$ 
such that
$ \gamma(s_1) = x_1$ and $\gamma(s_2) = x_2 $ for some $s_1 \leq s_2$. 
\begin{figure}[b]
\centering
\includegraphics[width=0.5\columnwidth]{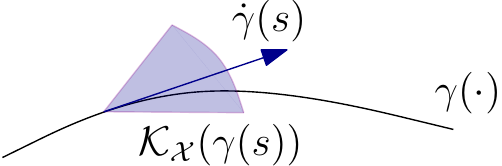}  
\caption{A conal curve satisfies \eqref{eq:conal_curve}. 
If $s_1 \leq s_2$ then $\gamma(s_1) \sqsubseteq_{\calK_\calX} \gamma(s_2)$.
}
\label{fig:conal_order}
\end{figure}

The \emph{conal order} $\sqsubseteq_{\calK_\calX}$ is the natural 
generalization on manifolds of the notion of partial order on vector spaces.
In fact $\sqsubseteq_{\calK_\calX}$ is a partial order 
when $\calX$ is a vector space and the cone field
$\calK_\calX(x) = \calK_\calX$ is constant:
two points $x,y \in \calX$ satisfy
$x\sqsubseteq_{\calK_\calX} y$ iff $y-x \in \calK_\calX$, 
as shown in \cite[Proposition 1.10]{Lawson1989},
which is the usual definition of a
partial order on vector spaces \cite[Chapter 5]{Schaefer1971}.
In general, $\sqsubseteq_{\calK_\calX}$ is not a (global) partial order on $\calX$ since
antisymmetry may fail.
The reader is referred to \cite{Lawson1989} and \cite{Neeb1993} 
for a detailed exposition of the relations among cone fields, 
ordered manifolds, and homogeneous spaces. 

\begin{example}
For the manifold $\mathbb{S}\times\real$ in Figure \ref{fig:phase_portraits}\hspace{-0.3mm}.IV, 
the conal order given by the cone field
$\delta \theta \geq 0$, $\delta \theta + \delta v \geq 0$ 
is not a partial order since, for any pair of points $x,y\in \mathbb{S}\times\real$, 
there exists a conal curve connecting $x$ to $y$ and viceversa. 
However, in a sufficiently small neighborhood of any point $x$, the conal order is a partial order.
\end{example}

\section{Differentially positive systems}
\label{sec:differentially_positive_systems}

\subsection{Definitions}
A dynamical system is differentially positive when 
its linearization is positive. Positivity is intended here in the sense of cone invariance \cite{Bushell1973}.
More precisely, a dynamical system $\Sigma$ on the conal state-input manifold 
$\calX\times\calU$ is differentially positive 
when the cone field
\begin{equation}
\label{eq:conal_field_XU}
\calK(x,u) = 
\underbrace{\calK_\calX(x,u)}_{\subseteq T_{x}\calX}
\times 
\underbrace{ \calK_\calU(x,u)}_{\subseteq T_{u}\calU} 
\subseteq T_{(x,u)}\calX \times \calU
\end{equation}
is invariant along the trajectories of the linearized system.
For discrete-time system $x^+ = f(x,u)$,
the invariance property has a simple formulation. 
The mapping
$f:\calX\times\calU \to \calX$, 
is differentially positive if, for all $x\in \calX$ and all $u,u^+ \in \calU$,
\begin{equation}
\label{eq:fundamental_discrete_condition}
 \partial f(x,u)  \calK(x,u)  \subseteq \calK_\calX(f(x,u),u^+) \ . 
\end{equation}
Indeed,
$\partial f(x,u)$ is a positive linear operator, mapping 
each tangent vector $(\delta x, \delta u) \in \calK(x,u) \subseteq T_{(x,u)}\calX\times\calU$
into $ \delta x^+ := \partial f(x,u)[\delta x,\delta u]  \in \calK_\calX(x^+,u^+) \subseteq T_{x^+}\calX$. 
A graphical representation for closed discrete systems is provided in 
Figure \ref{fig:discrete_differential_positivity}.
The relation between the positivity of the operator $\partial f(x,u)$ in \eqref{eq:fundamental_discrete_condition}
and the positivity of the linearization of $\Sigma$ is justified by the fact that
$\delta x^+ = \partial_x f(x,u) \delta x + \partial_u f(x,u) \delta u$,
which establishes the positivity of the linearized dynamics in the sense of
\cite{Bushell1973,Farina2000,DeLeenheer2001,Aliluiko2006}. 
\begin{figure}[htbp]
\centering
\includegraphics[width=0.63\columnwidth]{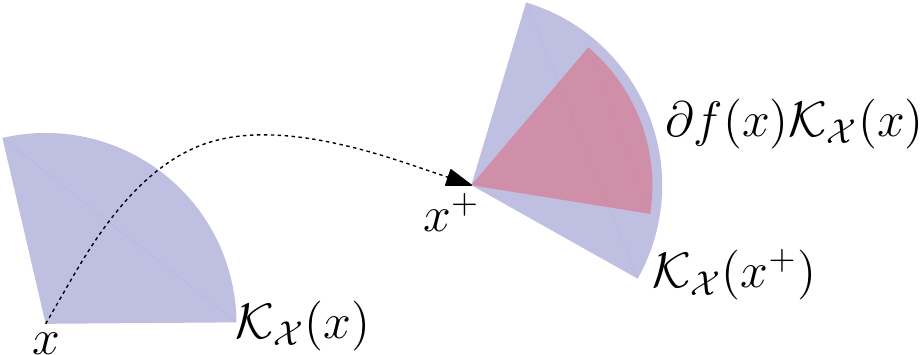}  
\caption{A graphical representation of condition 
\eqref{eq:fundamental_discrete_condition} 
for the (closed) discrete time system $x^+ = f(x)$.
{The cone field $\calK(x,u)$ reduces to  $\calK_\calX(x)$ in this case.}
}
\label{fig:discrete_differential_positivity}
\end{figure}

For general continuous-time
$\dot{x} = f(x,u) \in T_x\calX$ and discrete-time
$x^+ = f(x,u) \in \calX$ dynamical systems $\Sigma$ ($(x,u) \in \calX\times\calU$),
the definition of differential positivity involves
the \emph{prolonged system} $\delta \Sigma$ 
introduced in \cite{Crouch1987} 
\begin{equation}
\label{eq:vods}
\delta \Sigma :
\left\{
\begin{array}{rrcl}
(x^+) & \dot{x} &=& f(x,u) \\
(\delta x^+) & \dot{\delta x} &=& \partial_x f(x,u)\delta x + \partial_u f(x,u) \delta u \ . \\
\end{array}
\right.
\end{equation}
We call \emph{variational component} the second equation of \eqref{eq:vods}.

\begin{definition}
\label{def:diff+}
$\Sigma$ is a \emph{differentially positive} dynamical system 
(with respect to $\calK$ in \eqref{eq:conal_field_XU}) if, for all $t_0 \in \real$, any solution pair 
$((x,\delta x)(\cdot) , (u,\delta u)(\cdot) ): [t_0,\infty)  \to T\calX \times T\calU \in \delta\Sigma$
leaves the cone field $\cal{K}$ invariant. Namely, 
\begin{equation}
\label{eq:diff+}
\begin{array}{l}
\left\{
\begin{array}{rcl}
\delta x(t_0) &\in& \calK_\calX(x(t_0),u(t_0))  \\
 \ \delta u(t) &\in& \calK_\calU(x(t),u(t)) , \ \forall t\geq t_0
 \end{array}
 \right. \vspace{0.5mm}\\
\hspace{17mm} \Rightarrow \vspace{0.5mm} \\
\hspace{7.5mm}\delta x(t) \ \in \  \calK_\calX(x(t),u(t)) ,\ \forall t\geq t_0 \ . \vspace{-5mm}
\end{array}
\end{equation}
\end{definition} \vspace{1mm}

In continuous-time,
differential positivity of $\Sigma$ is thus positivity of the linearized system
$\dot{\delta x} = A(t) \delta x + B(t) \delta u$ along any solution pairs
$(x(\cdot),u(\cdot)) \in \Sigma$, where 
$A(t) :=  \partial_x f(x(t),u(t))$ and $B(t) := \partial_u f(x(t),u(t))$.
For closed systems $\dot{x} = f(x)$, with cone field $\calK_\calX(x)\subseteq T_x\calX$,
we have $\dot{\delta x} = A(t) \delta x$, where  $A(t) :=  \partial f(x(t))$,
along any given solution $x(\cdot):[t_0,\infty)\to\calX \in\Sigma$.
Therefore, the fundamental solution $\Psi_{x(\cdot)}(t,t_0)$ 
of the linearized dynamics \cite[Appendix C.4]{Sontag1998}
satisfies $\Psi_{x(\cdot)}(t,t_0)\calK_\calX(x) \subseteq \calK_\calX(x) $ 
for each $t\in[t_0,\infty)$, that is, $\Psi_{x(\cdot)}(t,t_0)$ is a positive linear operator. 

Strict differential positivity is to differential positivity what strict positivity
is to positivity. We anticipate that this
(mild) property will have a strong impact on the { asymptotic}
behavior of differentially positive systems, as shown
in Section \ref{sec:Perron-Forbenius}.
\begin{definition}
$\Sigma$ is (uniformly) \emph{strictly differentially positive} (with respect to $\calK$)
if differential positivity holds and 
there exists $T> 0$ and a cone field 
$\calR_\calX(x,u)\subseteq \mathrm{int}\calK_\calX(x,u)\cup\{0\}$ such that, for all $t_0 \in \real$,
any $((x,\delta x)(\cdot) , (u,\delta u)(\cdot) ): [t_0,\infty)  \to T\calX \times T\calU \in \delta\Sigma$
satisfies
\begin{equation}
\label{eq:USdiff+}
\begin{array}{l}
\left\{
\begin{array}{rcl}
\delta x(t_0) &\in& \calK_\calX(x(t_0),u(t_0))  \\
 \ \delta u(t) &\in& \calK_\calU(x(t),u(t)) , \ \forall t\geq t_0
 \end{array}
 \right. \vspace{0.5mm}\\
\hspace{17mm} \Rightarrow \vspace{0.5mm} \\
\hspace{8.5mm}\delta x(t) \in  \calR_\calX(x(t),u(t)) ,  \ {\forall t\geq t_0+T} \ . 
\end{array}
\end{equation}
We assume that the cone field $\calR_\calX$ also satisfies the following additional technical condition:
\begin{equation}
\label{eq:restriction_R}
\Gamma(x_1,u_1,x_2,u_2)\calR_\calX(x_1,u_1) = \calR_\calX(x_2,u_2) 
\end{equation}
for each  $(x_1,u_1),(x_2,u_2)\in\cal\calX\times\calU$, where
$\Gamma(x_1,u_1,x_2,u_2):T_{x_1}\calX \to T_{x_2}\calX$
is a linear invertible mapping such that 
$\Gamma(x_1,u_1,x_2,u_2)\calK_\calX(x_1,u_1) = \calK_\calX(x_2,u_2)$
(see Section \ref{sec:conal_manifold}).
\end{definition}

For open systems with output $h:\calX\times\calU \to \calY$ 
-- $\calY$ output manifold, endowed with the cone field $\calK_\calY(y)\in T_y\calY$
-- the notion of (strict) differential positivity requires the further condition that $h$
is a \emph{differentially positive mapping}, that is, 
$\partial h(x,u) \calK(x,u) \subseteq \calK_\calY(h(x,u))$, for each $(x,u)\in\calX\times\calU$.

\begin{remark}
\label{rem:geometric_conditions_for_positivity}
Differential positivity has a geometric characterization.  
restricting to closed systems for simplicity, consider the
cone field $ \calK_\calX(x) $ represented by \eqref{eq:basic_cone_field}
where $I$ is an index set and $k_i$ are smooth functions. 
Then, \eqref{eq:diff+} is equivalent to
require that $k_i(x(t),\delta x(t)) \geq 0$ along any solution 
$(x(\cdot),\delta x(\cdot)) \in \delta \Sigma$, for all $i\in I$.
Therefore, differential positivity for a discrete system can be established by 
testing that 
$\forall i \in I,\,k_i(x,\delta x) \geq 0$ implies 
$\forall i \in I,\,k_i^+ := k_i(f(x),\partial f(x)\delta x) \geq 0$,
for each $(x,\delta x) \in T\calX$.
In a similar way, for continuous systems, consider any pair 
$(x,\delta x)\in T\calX$ such that $k_i(x,\delta x) \geq 0$ for all $i\in I$
{and test that}, for any $j\in I$,
if $k_j(x,\delta x) = 0$ then $\dot {k}_j := \partial k_j(x,\delta x)[f(x),\partial f(x)\delta x] \geq 0$.
\end{remark}

\subsection{Examples}

\subsubsection{Positive linear systems are differentially positive}
\label{example:linear_positive_systems}

Consider the dynamics $\Sigma$ given by 
 $x^+ = A x$ on the vector space $\calV$.
 Positivity with respect to the cone $\calK_\calV\subseteq\calV$
reads $A\calK_\calV \subseteq\calK_\calV$, \cite{Bushell1973}. 
A typical example is provided by the case of a matrix $A$ with non-negative entries
which guarantees the invariance of the positive orthant $\calK_\calV := \real^n_+$.

Since each tangent space of a vector space can be identified to the vector space itself,
i.e. $T_x \calV= \calV$ for each $x \in \calV$, 
consider the manifold $\calX := \calV$ and define the lifting of the cone $\calK_\calV$ to the
cone field $\calK_{\calX}(x) := \calK_\calV\subseteq T_x\calX$, for each $x \in \calX$ (\emph{constant} cone field).
Then the linearized dynamics reads $\delta x^+ = A \delta x$ and
the prolonged system trivially satisfies $A \calK_\calX(x) \subseteq \calK_\calX(Ax)$.

\subsubsection{Monotone systems are differentially positive}
\label{example:monotone_systems}
A monotone dynamical system \cite{Smith1995,Angeli2003}
is a dynamical system whose trajectories preserve some 
partial order relation on the state space. 
Moving from closed \cite{Smith1995,Hirsch1995,Hirsch1988,Dancer1998,Hirsch2003,Piccardi2002} 
to open systems \cite{Angeli2003,Angeli2004a,Angeli2004b}, 
this wide class of systems is extensively adopted in biology and chemistry
both for modeling and control
\cite{DeLeenheer2004,Enciso2005, DeLeenheer2007,Sontag2007,Angeli2008,Angeli2012}.

The partial order $\preceq$ of a monotone system
is typically induced by a conic subset $\calK_\calV\subseteq\calV$ 
of the state (vector) space $\calV$. Precisely,
two points $x,\hat{x}\in \calV$ satisfy $x\preceq_{\calK_\calV} \hat{x}$
if and only if 
$\hat{x}\!-\!x \in \calK_\calV$. 
The preservation of the order along the system dynamics 
reads as follows: if $x(\cdot),\hat{x}(\cdot)\in \Sigma$
satisfy $x(t_0) \preceq_{\calK_\calV} \hat{x}(t_0)$ for some initial time $t_0$,
then $x(t) \preceq_{\calK_\calV} \hat{x}(t)$ for all $t\geq t_0$, \cite{Smith1995}.

To show that a monotone system is differentially positive, 
consider 
{$\calV$ as a manifold
endowed with the 
constant cone field $\calK_\calV(x) := \calK_\calV$, $x \in \calV$.
By monotonicity, the infinitesimal difference between two ordered {neighboring} solutions
$\delta x(\cdot) := \hat{x}(\cdot)-x(\cdot)$ satisfies 
$\delta x(t) \in {\calK_\calV}(x(t))$, for each $t\geq t_0$. Differential positivity follows from the fact
that $(x(\cdot),\delta x(\cdot))$ is a trajectory of the prolonged system $\delta \Sigma$.

\begin{theorem}
Given any cone $\calK_\calV$ on the vector space $\calV$, the partial order
$\preceq_{\calK_\calV}$, and the cone field $\calK_\calX(x) := \calK_\calV$,
a (closed) dynamical system is monotone if and only if is differentially positive.
\end{theorem}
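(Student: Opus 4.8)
The statement is the classical equivalence between monotonicity and positivity of the variational flow, so I would prove the two implications separately, relying only on $\calK_\calV$ being a closed convex cone together with $C^1$ dependence of the flow on initial conditions (guaranteed by the standing hypothesis $f\in C^2$).

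\emph{Differential positivity $\Rightarrow$ monotonicity.} Take two solutions $x(\cdot)$, $\hat x(\cdot)$ with $\hat x(t_0)-x(t_0)\in\calK_\calV$. The plan is to interpolate and propagate. Set $z(s,t_0):=x(t_0)+s\bigl(\hat x(t_0)-x(t_0)\bigr)$ for $s\in[0,1]$; this is a conal segment since $\partial_s z(s,t_0)=\hat x(t_0)-x(t_0)\in\calK_\calV$. Flow it forward, $z(s,t):=\psi(t,t_0,z(s,t_0))$, so that for each fixed $s$ the curve $z(s,\cdot)$ is a trajectory and $\delta x(s,t):=\partial_s z(s,t)$ solves the variational equation along $z(s,\cdot)$ with $\delta x(s,t_0)=\hat x(t_0)-x(t_0)\in\calK_\calV$. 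Differential positivity then yields $\delta x(s,t)\in\calK_\calV$ for all $t\ge t_0$ and all $s$. Integrating in $s$ and using $z(0,t)=x(t)$, $z(1,t)=\hat x(t)$ gives $\hat x(t)-x(t)=\int_0^1\delta x(s,t)\,ds\in\calK_\calV$, since a closed convex cone is stable under integration of cone-valued maps; this is exactly $x(t)\preceq_{\calK_\calV}\hat x(t)$.

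\emph{Monotonicity $\Rightarrow$ differential positivity.} Fix a trajectory $x(\cdot)$ and $\delta x(t_0)\in\calK_\calV$. For $\epsilon>0$ consider the neighboring trajectory $x_\epsilon(\cdot)$ through $x(t_0)+\epsilon\,\delta x(t_0)$. Because $\calK_\calV$ is a cone, $x_\epsilon(t_0)-x(t_0)=\epsilon\,\delta x(t_0)\in\calK_\calV$, so $x(t_0)\preceq_{\calK_\calV}x_\epsilon(t_0)$, and monotonicity propagates this to $x_\epsilon(t)-x(t)\in\calK_\calV$ for all $t\ge t_0$. By smooth dependence of the flow on initial data, the variational solution is the directional derivative $\delta x(t)=\lim_{\epsilon\to0^+}\bigl(x_\epsilon(t)-x(t)\bigr)/\epsilon$. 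Each difference quotient lies in $\calK_\calV$ (cone scaling), and closedness passes to the limit, giving $\delta x(t)\in\calK_\calV$, i.e. differential positivity.

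\emph{Discrete time and the main difficulty.} Both arguments transfer to discrete time: in one direction I would write $f(\hat x)-f(x)=\int_0^1\partial f\bigl(x+s(\hat x-x)\bigr)(\hat x-x)\,ds$ and apply $\partial f\,\calK_\calV\subseteq\calK_\calV$ under the integral, iterating over the time steps; in the other direction $\partial f(x)v=\lim_{\epsilon\to0^+}\bigl(f(x+\epsilon v)-f(x)\bigr)/\epsilon\in\calK_\calV$ for every $v\in\calK_\calV$. I expect the only genuine care to be in the regularity bookkeeping, namely justifying that $z(s,t)$ is $C^1$ in $s$ and that the difference quotient converges to the variational solution; both follow from the standing $C^2$ hypothesis on $f$ via the standard theorem on differentiable dependence on initial conditions, while the algebraic heart (closedness and convexity of $\calK_\calV$) is elementary.
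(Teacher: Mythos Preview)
Your proof is correct and follows essentially the same strategy as the paper: interpolate between ordered initial conditions with a curve, flow the curve forward, and exploit that the $s$-derivative of the flowed curve solves the variational equation. The only cosmetic difference is that the paper phrases the forward implication via the conal order $\sqsubseteq_{\calK_\calV}$ (the flowed curve stays conal) whereas you integrate the variational solution over $s$; these are equivalent on a vector space with constant cone, and your version is arguably more explicit about the role of convexity and closedness of $\calK_\calV$.
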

\begin{proof}
For constant cone fields on vector spaces recall that 
$\preceq_{\calK_\calV}$ and $\sqsubseteq_{\calK_\calV}$ are equivalent
relations (see Section \ref{sec:conal_manifold}). 
Consider a conal curve $\gamma(t_0,\cdot)$ connecting two ordered initial points
$\gamma(t_0,0) := x(t_0) \preceq_{\calK_\calV} \hat{x}(t_0) =: \gamma(t_0,1)$. 
Note that $\gamma(t_0,s_1) \sqsubseteq_{\calK_\calV} \gamma(t_0,s_2)$ 
for each $s_1\leq s_2$. For each $s\in[0,1]$, 
let $\gamma(\cdot,s)$ be a trajectory of $\Sigma$. Indeed,
$\gamma(t,\cdot)$ represents the time evolution of the curve $\gamma(t_0,\cdot)$
along the flow of the system.
[$\Leftarrow$] 
Differential positivity guarantees that $\gamma(t,\cdot)$ is a conal curve for each $t\geq t_0$.
This follows from the fact that 
the pair $(x_s(t), \delta x_s(t)) := (\gamma(t,s), \frac{d}{ds} \gamma(t,s))$ 
is a trajectory of the prolonged system 
$\delta \Sigma$ for each $s\in [0,1]$. 
Thus, $x(t) \sqsubseteq_{\calK_\calV} \hat{x}(t)$ for all $t\geq t_0$.
[$\Rightarrow$]
Monotonicity guarantees that $\gamma(t,s_1)\preceq_{\calK_\calV} \gamma(t,s_2)$
for all $s_1 \leq s_2$. By a limit argument, $\frac{d}{ds} \gamma(t,s) \in \calK_\calV(\gamma(t,s))$
for all $t\geq t_0$ and al $s \in [0,1]$. Thus $\gamma(t,\cdot)$ is a conal curve. 
Note that $(x_s(t), \delta x_s(t)) := (\gamma(t,s), \frac{d}{ds} \gamma(t,s))$ 
is a trajectory of the prolonged system. Since $\gamma(t_0,\cdot)$ is a generic
conal curve, \eqref{eq:diff+} follows.
\end{proof}
}

A similar result holds for open monotone systems,
which are typically characterized by introducing 
two orders
$\preceq_{\calK_\calX}$ and $\preceq_{\calK_\calU}$, respectively induced by the cone
$\calK_\calX$ on the state space $\calX$ and $\calK_\calU$ on the input space $\calU$,
\cite[Definition II.1]{Angeli2003}. Extending the argument above it is possible to
show that a dynamical system $\Sigma$ is
monotone with respect to $(\preceq_{\calK_\calX},\preceq_{\calK_\calU})$
\emph{if and only if} $\Sigma$ is differentially positive on the \emph{vector space}
$\calX\times\calU$ endowed with the \emph{constant} cone field
$\calK(x,u) := \calK_\calX \times\calK_\calU$, for each $(x,u) \in \calX\times\calU$.
In this sense, differential positivity on vector spaces and constant cone fields
is the differential formulation of monotonicity. 

\subsubsection{Differential positivity of cooperative systems and the Kamke condition}
A cooperative system $\dot{x} = f(x)$ with state space $\calX := \realn$
is monotone with respect to the partial order induced by the positive orthant 
$\real^n_+$, thus differentially positive with respect to the cone field
$\calK_\calX(x) := \real^n_+$, $x\in \calX$.
Exploiting the geometric conditions of Remark \ref{rem:geometric_conditions_for_positivity},
differentially positivity with respect to $\calK_\calX$ holds when 
\begin{equation}
\label{eq:cooperative_firstexample}
 [\partial f(x)]_{ij}  \geq  0 \qquad \mbox{for all } 1\leq i\neq j\leq n ,\, x\in \calX,
\end{equation}
where $[\cdot]_{ij}$ denotes the $ij$ component. 
To see this, define $E_i$ as the vector whose $i$-th element is equal to one and the remaining 
to zero and note that the positive orthant is defined by the set of $\delta x$ that satisfy
$\langle E_i, \delta x  \rangle \geq 0$.
Then, from Remark \ref{rem:geometric_conditions_for_positivity}, 
the invariance reads 
$\langle E_i, \delta x  \rangle = 0 \Rightarrow \langle E_i, \partial f(x) \delta x   \rangle  \geq 0 $
for any $x \in \calX$, $\delta x \in \calK_\calX(x) = \real^n_+$, and 
$i \in \{1,\dots, n\}$. \eqref{eq:cooperative_firstexample} 
follows by selecting $\delta x = E_j \neq E_i$.
Indeed, $\partial_x f(x)$ is a Metzler matrix for each $x\in\calX$
\cite[Section VIII]{Angeli2003}.

Cooperative systems typically satisfy \eqref{eq:cooperative_firstexample}, 
as shown in \cite[Remark 1.1]{Smith1995} on closed systems.
A similar result is provided in \cite[Proposition III.2]{Angeli2003} for open systems. 
In this sense, the pointwise geometric conditions in Remark
\ref{rem:geometric_conditions_for_positivity} revisit and extend the comparison 
between cooperative systems, 
incrementally positive systems of \cite[Section VIII]{Angeli2003}, and the 
Kamke condition provided in \cite[Chapter 3]{Smith1995}. 

\subsubsection{One dimensional continuous-time systems are differentially positive}
\label{sec:one_dimensional}
This property is well-known for systems in $\real$: solutions are partially ordered because 
they cannot ``pass each other''. It remains true on closed manifolds such as $\mathbb{S}$,
even though the conal order does not induce a (globally defined) partial order in that case. 

\subsubsection{Non-constant cones for oscillating dynamics}
\label{sec:first_nonlinear_oscillator}
Moving from constant to non-constant cone fields opens 
the way to the analysis of more general limit sets such a oscillations 
or limit cycles.
The harmonic oscillator studied in Section \ref{sec:nutshell}
provides a first simple example of differential positivity with respect to
a non-constant cone field. In particular, consider
{
$\calK_\calX(x) := 
\{ 
\delta x \in \real^2\setminus\{0\}
\,|\,
k_1(x,\delta x)  \geq 0 ,\,
 k_2(x,\delta x) \geq 0 
\} 
$,
where
$k_1(x,\delta x) := -(x_1+x_2) \delta x_1 + (x_1-x_2)\delta x_2$ and
$k_2(x,\delta x) := - (x_2-x_1)\delta x_1 + (x_1+x_2) \delta x_2$.
The cone field is well defined on the (invariant) manifold $\calX := \real^2 \setminus \{0\}$.}
Differential positivity with respect to $\calK_\calX(x)$ follows from the geometric conditions in 
Remark \ref{rem:geometric_conditions_for_positivity},
since $\dot{k}_1 = 0$ and $\dot{k}_2  = 0$ everywhere. 

The differential positivity of the harmonic oscillator with respect to $\calK_\calX(x)$
is not surprising if one looks at the representation of the oscillator in polar coordinates
$\dot{\vartheta} = 1$, $\dot{\rho} = 0$.
The state manifold becomes the cylinder $\mathbb{S} \times \real_{+}$
and the system decompose into two one-dimensional systems, 
which suggests the invariance of any cone field rotating with $\vartheta$, 
as shown in Figure \ref{fig:ho_invariance} (left). Indeed, 
polar coordinates suggest differential positivity for arbitrary 
decoupled dynamics $\dot{\vartheta} = f(\vartheta)$, $\dot{\rho} = g(\rho)$
with respect to the cone field
$
 \calK(\vartheta,\rho) := \{(\delta \vartheta, \delta \rho) \in \real^2)\,|\, 
 \delta \vartheta \geq 0 \,,\
 \delta \rho \geq 0\}
$.
In fact, the linearization reads $\dot{\delta \vartheta} = \partial f(\vartheta)  \delta \vartheta$,
$\dot{\delta \rho} = \partial g(\rho)  \delta \rho$, which guarantees
that $\dot{\delta \vartheta} = 0$ for $\delta \vartheta = 0$ and 
$\dot{\delta \rho} = 0$ for $\delta \rho= 0$,
as required by Remark \ref{rem:geometric_conditions_for_positivity}.

Possibly, the invariance of the cone field can be strengthened to 
contraction by combining the two uncoupled dynamics.
For example, when $f(\vartheta) = 1$ and $g(\rho) = \rho-\frac{\rho^3}{3}$,
the trajectories of the variational dynamics move towards the interior of the cone field
$
 \calK(\vartheta,\rho) := \{(\delta \vartheta, \delta \rho) \in \real^2)\,|\, 
 \delta \vartheta \geq 0 \,,\
 \delta \vartheta^2 - \frac{\delta \rho^2}{\rho^2} \geq 0\}
$.
In fact, $\frac{d}{dt} (\delta \vartheta^2 - \frac{\delta \rho^2}{\rho^2}) = (1+\frac{\rho^2}{3}) \frac{\delta \rho^2}{\rho^2} > 0$
for each $(\delta \vartheta,\delta \rho)\in \mathrm{bd}\calK(\vartheta,\rho)\setminus\{0\}$.
Figure \ref{fig:ho_invariance} (right) provides a representation of the (projective) contraction
of the cone. We anticipate that this contraction property is tightly connected to the existence 
of a globally attractive limit cycle.

\begin{figure}[htbp]
\centering
\includegraphics[width=0.45\columnwidth]{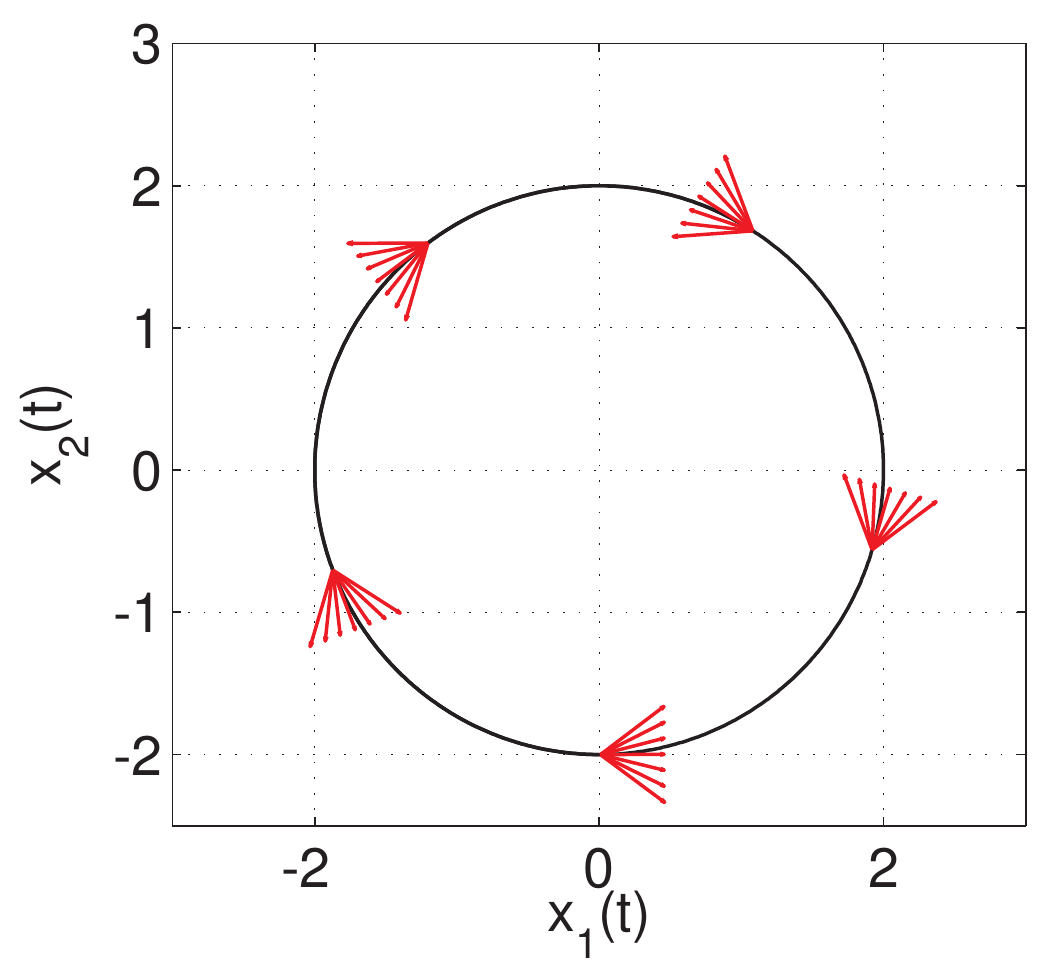}  
\includegraphics[width=0.45\columnwidth]{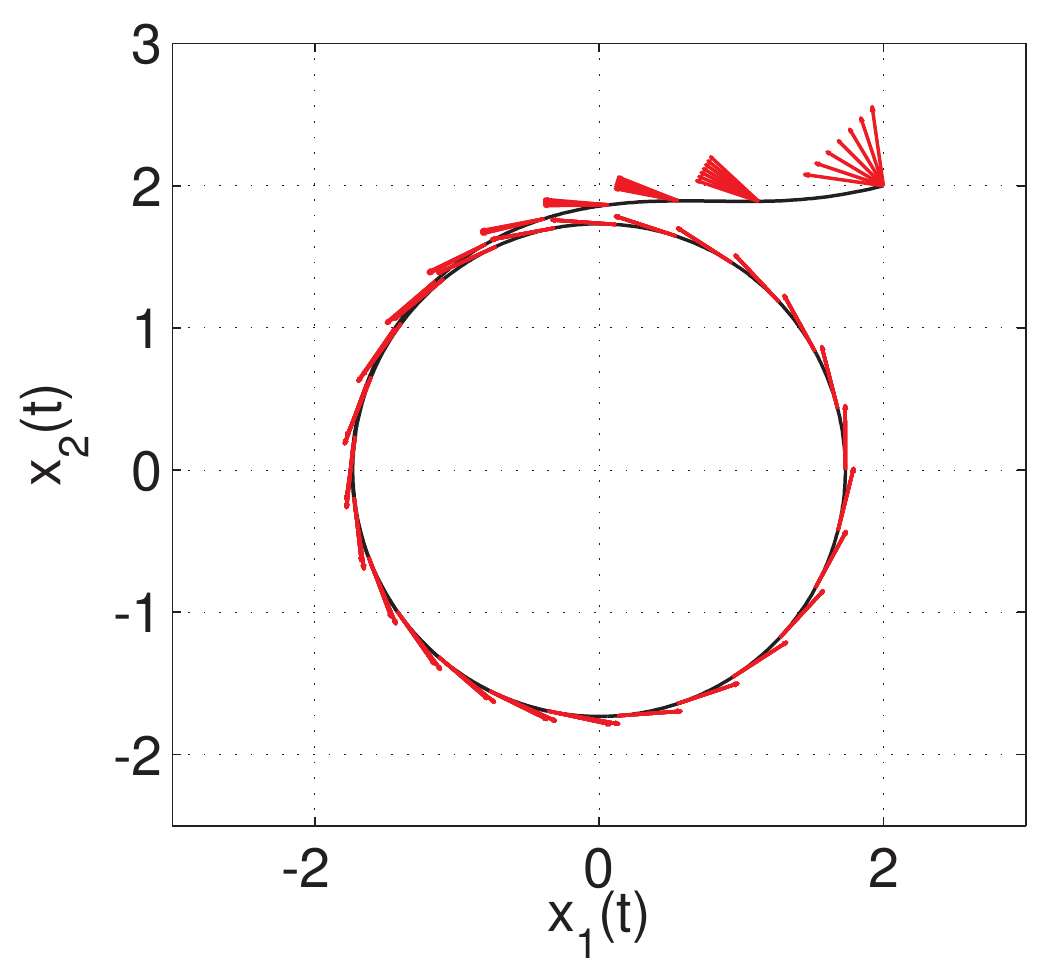}  
\caption{Simulations of the variational dynamics $\delta \Sigma$ 
of the two cases (i)~$\dot{\rho} = 0$ (left) and (ii)~$\dot{\rho} = (\rho-\frac{\rho^3}{3})$ (right),
from the same initial condition.
The red arrows represent the motion of the variational component
along the trajectory. 
}
\label{fig:ho_invariance}
\end{figure}
{
\begin{remark}
Differential positivity requires 
classical positivity of the linearized dynamics at fixed points. 
In fact,  
Definition \ref{def:diff+} shows that the cone field at any fixed point
$x^*$ is given by the invariant cone of the (positive) linearized dynamics at $x^*$.
The harmonic oscillator is not a positive linear system, because of 
the presence of the two complex eigenvalues. Thus, it is not
differentialy positive in $\real^2$.
However, polar coordinates reveal that it is
differentially positive in the manifold
$\calX = \real^2 \setminus\{0\}$.
\end{remark}
}

\section{Differential Perron-Frobenius theory}
\label{sec:Perron-Forbenius}

\subsection{Contraction of the Hilbert metric}
\label{subsec:lifted_Hilbert}
\label{subsec:variational_Perron-Forbenius}

Bushell \cite{Bushell1973} (after Birkhoff \cite{Birkhoff1957})
used the Hilbert metric on cones to show that the 
strict positivity of a mapping guarantees contraction
among the rays of the cone, 
opening the way to many  contraction-based results
in the literature of positive operators \cite{Bushell1973,Nussbaum1994,Sepulchre2010,Bonnabel2011,Lemmens2012},
among which the reduction of the Perron-Frobenius theorem to a special
case of the contraction mapping theorem \cite{Pratt1982,Bushell1973,Birkhoff1957}.
Taking inspiration from these important results, 
we rely on the infinitesimal contraction 
properties of the Hilbert metric 
to study the contraction properties of differentially positive systems.

Consider the product manifold $\calX \times \calU$ 
where $\calX$ is endowed with the cone field $\calK_\calX(x,u) \subseteq T_x\calX$,
for each $(x,u) \in \calX\times \calU$.
Following \cite{Bushell1973},
for any given $(x,u)$,
take any $\delta x, \delta y \in \calK_\calX(x,u)\setminus\{0\}$ and define 
the quantities 
\begin{equation}
\label{eq:Mm}
\begin{array}{c}
\hspace{-2.5mm} M_{\calK_\calX(x,u)}(\delta x,\delta y) :=
 \inf \{\lambda \in \real_{\geq 0} \,|\, \lambda \delta y - \delta x \in \calK_\calX(x,u) \} \vspace{1mm}\\
 m_{\calK_\calX(x,u)}(\delta x, \delta y) := 
 \sup \{\lambda \in \real_{\geq 0} \,|\, \delta x - \lambda \delta y  \in \calK_\calX(x,u) \} .
 \end{array}
\end{equation}
$M_{\calK_\calX(x,u)}(\delta x,\delta y) := \infty$
when $ \{\lambda \in \real_{\geq 0} \,|\, \lambda \delta y - \delta x \in \calK_\calX(x,u) \} = \emptyset$.
The Hilbert's metric $d_{\calK_\calX(x,u)}$ induced by $\calK_\calX(x,u)$ is given by
\begin{equation}
\label{eq:lifted_hilbert_metric}
 d_{\calK_\calX(x,u)}( \delta x, \delta y) = \log \left(\frac{ M_{\calK_\calX(x,u)}(\delta x,\delta y)}{ m_{\calK_\calX(x,u)}(\delta x,\delta y)}\right) \ .
\end{equation}
{
In each cone $\calK_\calX (x,u)$
$d_{\calK_\calX(x,u)}$ is a projective distance:
for each $\delta x,\delta y,\delta z \in \calK_\calX (x,u)$,
$d_{\calK_\calX(x,u)}(\delta x,\delta y) \geq 0$, 
$d_{\calK_\calX(x,u)}(\delta x,\delta y) = d_{\calK_\calX(x,u)}(\delta y,\delta x)$, 
$d_{\calK_\calX(x,u)}(\delta x,\delta y) \leq d_{\calK_\calX(x,u)}(\delta x,\delta z)+d_{\calK_\calX(x,u)}(\delta z,\delta y)$,
and 
$d_{\calK_\calX(x,u)}(\delta x,\delta y) = 0$
if and only if $\delta x=\lambda \delta y$ with $\lambda \geq 0$.
}

The following theorem is a generalization of Birkhoff result: it shows that
strict differential positivity guarantees the \emph{exponential contraction} 
of the metric when the input $u(\cdot)$ acts uniformly on the 
system (a feedforward signal).
The uniform action of the input is modeled by taking the variational input $\delta u(\cdot) = 0$,
since $\delta u$ represents the infinitesimal mismatch between two inputs.

For readability, in what follows we denote the Hilbert metric along a solution pair
$d_{\calK_\calX(x(t),u(t))}(\cdot,\cdot)$ with $d_{*(t)}(\cdot,\cdot)$.
{
\begin{theorem}
\label{thm:dsch}
Let $\Sigma$ be a dynamical system on the state/input manifold $\calX\times\calU$,
 differentially positive with respect to the cone field
$\calK(x,u) := \calK_\calX(x,u)\times \{0\}$,
where  $\calK_\calX(x,u) \subseteq T_x\calX$ for each $(x,u)\in \calX\times\calU$.
Then, for all $t\geq t_0$,
\begin{equation}
\label{eq:Hilbert_invariance}
 d_{*(t)}(\delta x_1(t),\delta x_2(t)) \leq d_{*(t_0)} (\delta x_1(t_0),\delta x_2(t_0)) 
\end{equation}
for any $(x(\cdot),\delta x_1(\cdot) , u(\cdot),0), (x(\cdot),\delta x_2(\cdot) , u(\cdot),0) \in \delta\Sigma$ 
with domain $[t_0,\infty)$ and
$\delta x_1(t_0),\delta x_2(t_0) \in \calK_\calX(x(t_0),u(t_0))$.

If $\Sigma$ is strictly differentially positive then
there exist $\rho\geq 1$ and $\lambda > 0$ such that, 
for all $t\geq t_0$, 
\begin{equation}
\label{eq:Hilbert_contraction}
 d_{*(t)}(\delta x_1(t),\delta x_2(t)) \leq \rho e^{-\lambda(t-t_0)} d_{*(t_0)} (\delta x_1(t_0),\delta x_2(t_0)) 
\end{equation}
for any $(x(\cdot),\delta x_1(\cdot) , u(\cdot),0), (x(\cdot),\delta x_2(\cdot) , u(\cdot),0) \in \delta\Sigma$ 
with domain $[t_0,\infty)$ and
$\delta x_1(t_0),\delta x_2(t_0) \in \calK_\calX(x(t_0),u(t_0))$.
Moreover, $d_{*(t)}(\delta x_1(t),\delta x_2(t)) < \infty$ for $t \geq t_0+T$.
\end{theorem}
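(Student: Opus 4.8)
The plan is to reduce both statements to the classical Birkhoff contraction theorem for positive linear operators, applied to the fundamental solution of the variational dynamics. Since $\calK_\calU = \{0\}$ forces $\delta u(\cdot) \equiv 0$, along the common base trajectory $x(\cdot)$ each variational solution satisfies $\delta x_i(t) = \Psi_{x(\cdot)}(t,t_0)\,\delta x_i(t_0)$, and Definition~\ref{def:diff+} guarantees that $\Psi_{x(\cdot)}(t,t_0)$ is a \emph{positive} linear operator, i.e. $\Psi_{x(\cdot)}(t,t_0)\,\calK_\calX(x(t_0),u(t_0)) \subseteq \calK_\calX(x(t),u(t))$. This single structural fact drives the whole argument.

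For the non-expansion bound \eqref{eq:Hilbert_invariance}, I would argue directly from the definitions \eqref{eq:Mm}. Writing $\Psi := \Psi_{x(\cdot)}(t,t_0)$, if $\lambda \delta y - \delta x \in \calK_\calX(x(t_0),u(t_0))$, then positivity gives $\lambda \Psi\delta y - \Psi\delta x \in \calK_\calX(x(t),u(t))$; hence the admissible set of $\lambda$'s can only grow, so $M_{*(t)}(\Psi\delta x, \Psi\delta y) \le M_{*(t_0)}(\delta x,\delta y)$. The symmetric inequality $m_{*(t)}(\Psi\delta x, \Psi\delta y) \ge m_{*(t_0)}(\delta x,\delta y)$ follows in the same way, so the ratio $M/m$ defining the metric in \eqref{eq:lifted_hilbert_metric} does not increase, which is exactly \eqref{eq:Hilbert_invariance}. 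Nothing beyond positivity is needed here.

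For the strict bound \eqref{eq:Hilbert_contraction}, I would exploit strict differential positivity \eqref{eq:USdiff+}: over any window of length $T$ the operator $\Psi_{x(\cdot)}(s+T,s)$ maps $\calK_\calX$ into $\calR_\calX \subseteq \mathrm{int}\,\calK_\calX \cup \{0\}$. The key quantity is the projective diameter $\Delta := \sup\{\, d_{\calK_\calX(x,u)}(\delta x,\delta y) : \delta x,\delta y \in \calR_\calX(x,u)\setminus\{0\}\,\}$; because $\calR_\calX$ sits strictly inside $\calK_\calX$ it has a compact cross-section in the interior, so $\Delta < \infty$, and condition \eqref{eq:restriction_R} makes the pair $(\calK_\calX,\calR_\calX)$ linearly equivalent at every $(x,u)$, so that $\Delta$ is a \emph{single} constant uniform over the manifold and over $t_0$. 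Birkhoff's theorem then yields a contraction factor $\kappa := \tanh(\Delta/4) < 1$ per time-$T$ step. Composing over $N := \lfloor (t-t_0)/T\rfloor$ steps and invoking Part~1 on the leftover sub-interval gives $d_{*(t)} \le \kappa^{N} d_{*(t_0)} \le \rho\, e^{-\lambda(t-t_0)} d_{*(t_0)}$ with $\rho := \kappa^{-1} \ge 1$ and $\lambda := -T^{-1}\log\kappa > 0$. Finally, for $t \ge t_0+T$ condition \eqref{eq:USdiff+} places both $\delta x_1(t),\delta x_2(t)$ in $\calR_\calX$, whence $d_{*(t)} \le \Delta < \infty$, establishing the last assertion.

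The main obstacle is the \emph{uniformity} of the contraction factor. The per-step Birkhoff bound is only as good as the projective diameter of the image cone, and to convert per-step contraction into a genuine exponential rate with fixed $\rho,\lambda$ one must know that $\Delta$ does not degrade as the trajectory wanders over $\calX$. This is precisely what the linear-equivalence hypothesis \eqref{eq:restriction_R}, together with the uniform $T$ and $\calR_\calX$ in the definition of strict differential positivity, is designed to supply: the Hilbert metric is invariant under the isomorphisms $\Gamma$ carrying $(\calK_\calX,\calR_\calX)$ from one tangent space to another, so the finite diameter computed at one point transfers verbatim to all of them. Verifying this invariance and the finiteness of $\Delta$ is the technical heart of the proof; the remainder is the classical Birkhoff--Bushell machinery applied pointwise.
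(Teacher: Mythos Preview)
Your proposal is correct and follows essentially the same route as the paper: reduce to the fundamental solution $\Psi_{x(\cdot)}(t,t_0)$ of the variational system, use positivity for non-expansion of the Hilbert metric, then invoke Birkhoff's projective-diameter bound $\tanh(\Delta/4)$ on each $T$-window and compose via the semigroup property. The only cosmetic difference is that the paper first pulls back through the isomorphisms $\Gamma$ to define an auxiliary operator $A_{(x(\cdot),u(\cdot))}[\tau_2,\tau_1] := \Gamma^{-1}\,\partial_{x(\tau_1)}\psi(\tau_2,\tau_1,\cdot)$ acting on the \emph{fixed} cone $\calK_\calX(x(\tau_1),u(\tau_1))$, so that the classical single-cone Birkhoff--Bushell statement applies verbatim; you instead argue the $M/m$ monotonicity directly for a positive map between two (possibly different) cones, which is the same computation and equally valid. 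Your identification of \eqref{eq:restriction_R} as the device that makes $\Delta$ a single uniform constant is exactly the role it plays in the paper's argument.
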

}

\subsection{The Perron-Frobenius vector field}
\label{sec:PF_vectorfield}

The Perron-Frobenius vector of a strictly positive linear map is a fixed point of the projective space. 
Its existence is a consequence of the 
contraction of the Hilbert metric, \cite{Bushell1973}. 
To exploit the generalized contraction of Theorem \ref{thm:dsch}, 
we assume that the input acts uniformly on the system, that is, $\delta u(\cdot) = 0$.
We endow the state manifold $\calX$ with a (smooth) Riemannian structure
{ and we define 
$\calB(x) := \{\delta x \in T_x\calX \,|\, |\delta x|_x=1\}\subseteq T_x\calX$,
to make the following assumption.
\begin{assumption}
\label{assume:completeness}
$(\calK_\calX(x,u)\cap \calB(x),d_{\calK_\calX(x,u)})$ is a \emph{complete metric space}
for any given $(x,u)\in \calX\times\calU$  \footnote{
The reader is referred to \cite[Section 4]{Bushell1973}, \cite[Section 2.5]{Lemmens2012},
or \cite{Zhai2011} for examples of complete metric spaces on cones.
}. 
\end{assumption}

To introduce the \emph{Perron-Frobenius vector field} we 
study the { asymptotic behavior} of $\delta \Sigma$,}
looking at solutions pairs
$(z(\cdot),u(\cdot))\in \Sigma$ with domain $I := (-\infty,t)$ (backward completeness of $\Sigma$).
Recall that for any $(z(\cdot),u(\cdot)) \in \Sigma$, if $\delta u(\cdot) = 0$ then
$\delta z(t) = \partial_{z(t_0)}\psi(t,t_0,z(t_0),u(\cdot))\delta z(t_0)$ is a trajectory
of the variational components of $\delta \Sigma$ along $(z(\cdot),u(\cdot))$.

\begin{theorem}
\label{thm:differential_pf}
Let $\Sigma$ be a dynamical system on the state/input manifold $\calX\times\calU$.
{Suppose that} $\Sigma$ is strictly differentially positive with respect to the cone field 
 $\calK(x,u) := \calK_\calX(x,u)\times \{0\} $
such that $\calK_\calX(x,u) \subseteq T_x\calX$ for each $(x,u)\in \calX\times\calU$
and suppose that  Assumption \ref{assume:completeness} holds.

For any input $u(\cdot):\real\to\calU$ which makes $\Sigma$ backward complete, 
there exists a time-varying vector field $\mathbf{w}_{u(\cdot)}(x,t) \in \mathrm{int} \calK_\calX(x,u(t)) \cap \calB(x)$,
$x\in \calX$ and $t\in\real$, such that any solution pair $(z(\cdot),u(\cdot)) \in \Sigma$ satisfies
\begin{equation}
\label{eq:PF_vector_field}
\begin{array}{c}
\lim\nolimits\limits_{t_0\to -\infty} \partial_{z(t_0)}\psi(t,t_0,z(t_0),u(\cdot))\calK_{\calX}(z(t_0),u(t_0)) =\\
\hspace{34mm} = \{ \lambda \mathbf{w}_{u(\cdot)}(z(t),t) \,|\, \lambda \geq 0 \} \ .
\end{array}
\end{equation}
We call this vector field the \emph{Perron-Frobenious vector field.} 
\end{theorem}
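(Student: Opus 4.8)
The plan is to read the fundamental solution $\Phi(t,s):=\partial_{z(s)}\psi(t,s,z(s),u(\cdot))$ as a strictly positive, invertible linear map from $\calK_\calX(z(s),u(s))$ into $\calK_\calX(z(t),u(t))$, and to show that, as $s\to-\infty$, the normalized image of the whole cone collapses onto a single ray; its direction will be $\mathbf{w}_{u(\cdot)}(z(t),t)$. The argument is a differential Perron--Frobenius fixed-point argument resting on Theorem~\ref{thm:dsch} (Hilbert-metric contraction), Assumption~\ref{assume:completeness} (completeness), and a Cantor-type intersection. First I would fix $t$ and, for each $s<t$, form the normalized image $\calS(s):=\left(\Phi(t,s)\calK_\calX(z(s),u(s))\right)\cap\calB(z(t))$ inside the complete metric space $(\calK_\calX(z(t),u(t))\cap\calB(z(t)),\,d_{*(t)})$. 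The semigroup identity $\Phi(t,s_1)=\Phi(t,s_2)\,\Phi(s_2,s_1)$ for $s_1<s_2<t$, together with the cone invariance $\Phi(s_2,s_1)\calK_\calX(z(s_1),u(s_1))\subseteq\calK_\calX(z(s_2),u(s_2))$ granted by \eqref{eq:diff+}, shows that the $\calS(s)$ form a nested decreasing family of nonempty, compact (by finite dimensionality) sets as $s\to-\infty$.

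The crux is to show that the Hilbert diameter of $\calS(s)$ tends to $0$. I would first establish that $\calR_\calX$ has a \emph{finite} Hilbert diameter $\Delta$ inside $\calK_\calX$, uniformly over $(x,u)$: by the technical condition \eqref{eq:restriction_R} the pair $(\calR_\calX,\calK_\calX)$ is carried from point to point by the linear isomorphisms $\Gamma$, and since the Hilbert metric is a projective invariant, the diameter is independent of $(x,u)$; it is finite because $\calR_\calX\cap\calB$ is a compact subset of $\mathrm{int}\,\calK_\calX$. Now for $s\leq t-T$, strict differential positivity \eqref{eq:USdiff+} maps the \emph{entire} cone at $s$ into $\calR_\calX(z(s+T),u(s+T))$ at time $s+T$, so any two image rays are already within $d_{*(s+T)}$-distance at most $\Delta$; propagating from $s+T$ to $t$ with the exponential contraction \eqref{eq:Hilbert_contraction} gives $d_{*(t)}\leq\rho\,e^{-\lambda(t-s-T)}\Delta$, which vanishes as $s\to-\infty$.

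With a nested family of nonempty closed sets of vanishing diameter in the complete space of Assumption~\ref{assume:completeness}, the Cantor intersection theorem yields $\bigcap_{s<t}\calS(s)=\{\mathbf{w}_{u(\cdot)}(z(t),t)\}$, a single unit vector lying in $\mathrm{int}\,\calK_\calX(z(t),u(t))$ since every $\calS(s)$ with $s\leq t-T$ is contained in $\calR_\calX\subseteq\mathrm{int}\,\calK_\calX\cup\{0\}$. Finally I would argue that collapse of the normalized images to a single direction forces the Painlev\'{e}--Kuratowski limit of the un-normalized cones $\Phi(t,s)\calK_\calX(z(s),u(s))$ to equal the ray $\{\lambda\,\mathbf{w}_{u(\cdot)}(z(t),t)\mid\lambda\geq0\}$, which is exactly \eqref{eq:PF_vector_field}. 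Because the entire construction depends only on the state $z(t)=x$ and the time $t$ (given the fixed input $u(\cdot)$), it defines the time-varying vector field at every $(x,t)\in\calX\times\real$.

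The main obstacle I anticipate is the clean interface between the uniform finiteness of the Hilbert diameter of $\calR_\calX$ and Theorem~\ref{thm:dsch}: one must feed the \emph{whole} boundary of the starting cone through the strict-positivity step first, so that no pair of image rays begins at infinite Hilbert distance, and only then invoke the exponential contraction. A secondary care point is justifying that the Hilbert-metric collapse (a statement about rays/directions) upgrades to genuine set convergence of the cones in the sense used in the statement, and that the extracted singleton is independent of which solution pair $(z(\cdot),u(\cdot))$ passes through $x$ at time $t$.
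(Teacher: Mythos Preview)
Your proposal is correct and follows essentially the same approach as the paper: define the nested family of image cones, use strict differential positivity to land in $\calR_\calX$ after time $T$ (giving finite Hilbert diameter $\Delta_T$), then use the contraction of Theorem~\ref{thm:dsch} to drive the diameter to zero, and conclude via a nested-intersection argument that the normalized image collapses to a singleton in $\mathrm{int}\,\calK_\calX\cap\calB$. The paper phrases the limit step via Painlev\'e--Kuratowski set convergence \cite{Rockafellar2004} rather than an explicit Cantor intersection, and it packages the uniform finiteness of the diameter of $\calR_\calX$ into the proof of Theorem~\ref{thm:dsch} (as $\Delta_T$) rather than re-deriving it here, but the skeleton is identical.
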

\begin{corollary}
\label{cor:differential_time-invariant_pf}
Under the assumptions of Theorem \ref{thm:differential_pf},
if $u(\cdot) = u\in\calU$ (constant), then the Perron-Frobenius vector field
reduces to a continuous time-invariant vector field $\mathbf{w}_u(x)$.
For linear systems the { Perron-Frobenius} vector field reduces to the (constant)
Perron-Frobenius vector.
\end{corollary}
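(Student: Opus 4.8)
The plan is to read off both conclusions from the structure of the limit \eqref{eq:PF_vector_field} together with the uniform contraction \eqref{eq:Hilbert_contraction} of Theorem~\ref{thm:dsch}. For time-invariance, note first that when $u(\cdot)=u$ is constant the system $\Sigma$ is autonomous, so its flow obeys the shift identity $\psi(t,t_0,\cdot,u)=\psi(t+s,t_0+s,\cdot,u)$ for all $s\in\real$, and the backward-complete solution through a fixed $x$ at time $t$ is a time translate of the one through $x$ at any other time $t'$. I would compare the defining limits at $t$ and $t'$ by setting $t_0=t_0'+(t-t')$: the shift identity identifies $\partial_{z'(t_0')}\psi(t',t_0',z'(t_0'),u)$ with $\partial_{z(t_0)}\psi(t,t_0,z(t_0),u)$, while the base cone $\calK_\calX(\cdot,u)$ is unchanged because $u$ is constant. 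Letting $t_0',t_0\to-\infty$ simultaneously, the two limiting rays coincide, so $\mathbf{w}_{u}(x,t)$ does not depend on $t$ and may be written $\mathbf{w}_u(x)$.

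For continuity, fix $t=0$ by time-invariance. For each horizon $t_0<-T$, condition \eqref{eq:USdiff+} places the image $\partial_{z(t_0)}\psi(0,t_0,z(t_0),u)\,\calK_\calX(z(t_0),u)$ inside $\calR_\calX(x,u)\subseteq\mathrm{int}\,\calK_\calX(x,u)$, and the $\Gamma$-invariance \eqref{eq:restriction_R} of $\calR_\calX$ gives this image a uniformly bounded (finite) Hilbert diameter. Normalizing the image of a smoothly chosen reference direction produces a candidate $\mathbf{w}_u^{t_0}(x)\in\calB(x)\cap\mathrm{int}\,\calK_\calX(x,u)$ that depends continuously on $x$, by smooth dependence of the flow and its linearization on the initial condition and smoothness of the cone field. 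The estimate \eqref{eq:Hilbert_contraction}, with constants $\rho,\lambda$ independent of the trajectory, forces the Hilbert diameter of the image cone to decay like $e^{\lambda t_0}$, hence uniformly to $0$ on compact sets as $t_0\to-\infty$; thus $\{\mathbf{w}_u^{t_0}\}$ is uniformly Cauchy and converges to the ray of \eqref{eq:PF_vector_field}. Since that ray lies in the interior, on a compact neighborhood of it the Hilbert metric and the Riemannian angular metric on $\calB(x)$ are bi-Lipschitz equivalent, so uniform Hilbert convergence transfers to uniform convergence $\mathbf{w}_u^{t_0}\to\mathbf{w}_u$ in the unit tangent bundle. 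A uniform limit of continuous sections is continuous, which yields continuity of $\mathbf{w}_u$.

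For the linear case, take $\dot{x}=Ax$ (resp.\ $x^+=Ax$) on $\calV$ with the constant cone field $\calK_\calX(x)=\calK_\calV$. Then the linearization $\partial_{z(t_0)}\psi(0,t_0,z(t_0))=e^{-At_0}$ (resp.\ $A^{-t_0}$) is independent of the base point, so \eqref{eq:PF_vector_field} collapses to $\lim_{\tau\to\infty}e^{A\tau}\calK_\calV$ (resp.\ $\lim_{k\to\infty}A^{k}\calK_\calV$), which by classical Perron-Frobenius theory is the dominant eigenray of $A$; independence of $x$ makes $\mathbf{w}_u$ the constant Perron-Frobenius vector.

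I expect the continuity step to be the main obstacle. The delicate point is converting the uniform contraction of the projective Hilbert metric into genuine uniform convergence of unit tangent vectors: this relies both on the trajectory-independence of the constants in \eqref{eq:Hilbert_contraction} and on the local bi-Lipschitz equivalence between the Hilbert and Riemannian angular metrics in the interior of the cone, which must be made uniform along the (noncompact) backward trajectory using the $\Gamma$-equivalence of the cone field.
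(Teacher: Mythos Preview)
Your proposal is correct and follows essentially the same route as the paper. Time-invariance is argued identically via the shift property of autonomous flows; continuity is obtained by constructing a family of continuous approximating sections (the paper uses the discrete iterates $g_k(x):=\text{normalize}\,\partial_x\psi(T,0,\cdot,u)\,g_{k-1}$, which is exactly your $\mathbf{w}_u^{t_0}$ at $t_0=-kT$), showing their uniform convergence in the Hilbert metric from \eqref{eq:Hilbert_contraction}, and then passing to coordinate/Riemannian convergence---the paper does this last step by a short contradiction in local coordinates rather than an explicit bi-Lipschitz statement, but the content is the same; the linear case is handled identically by observing that the linearization in \eqref{eq:PF_vector_field} is base-point independent.
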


The evolution of an initial cone $\calK_\calX(z(t_0),u(t_0))$ along the (variational) flow of the system 
asymptotically converges to the span of the Perron-Frobenius vector field $\mathbf{w}_{u(\cdot)}(x,t)$ attached to each $x\in \calX$,
as illustrated in Figure \ref{fig:thm3_illustration}. 
Decomposing the variational trajectory $\delta z(\cdot)$ along $z(\cdot)$ into
a \emph{directional} component 
$\vartheta(t) := \frac{\delta z(t)}{|\delta z(t)|_{z(t)}} \in \calK_\calX(z(t),u(t))\cap \calB_{z(t)}$, 
and a \emph{magnitude} component $\rho(t):=|\delta z(t)|_{z(t)}$,
Theorem \ref{thm:differential_pf} establishes that 
$\vartheta(t)$ is { guaranteed} to converge
to $\mathbf{w}_{u(\cdot)}(z(t),t)$, for any initial condition $\vartheta(t_0)$.  

For constant inputs the Perron-Frobenius vector field has a simple geometric characterization.
Take any trajectory $(x(\cdot),\delta x(\cdot))$ of the prolonged system $\delta \Sigma$
under the action of the constant input $u$, and suppose that 
$d_{(x(t),u)} (\mathbf{w}_u(x(t)),\delta x(t)) = 0$ for some $t\in \real$. Then,
from \eqref{eq:Hilbert_contraction}, 
$d_{(x(t+\tau),u)} (\mathbf{w}_u(x(t+\tau)),\delta x(t+\tau)) = 0$ for each $\tau \geq 0$, 
which shows that $\mathbf{w}_u(x(t))$ must be a time-reparametrized trajectory of $\Sigma$.
Therefore, $\mathbf{w}_u(x)$ belongs to $\mathrm{int} \calK_\calX(x,u)$ for all $x$ and satisfies
the partial differential equation
$[\partial_x \mathbf{w}_u(x)]f(x,u) = [\partial_x f(x,u)] \mathbf{w}_u(x) - \lambda(x,u) \mathbf{w}_u(x)$
for continuous time systems, for some $\lambda(x,u)\in\real$ which guarantees
$|\mathbf{w}_u(x)|_x = 1$. In a similar way, for discrete dynamics we have
$\mathbf{w}_u(f(x,u)) = \lambda(x,u) [\partial_x f(x,u)]w(x,u)$. As before, 
$\lambda(x,u)\in\real$ is selected to guarantee $|\mathbf{w}_u(x)|_x = 1$.
Existence and uniqueness of the solution $\mathbf{w}_u(x)$ follow from the
contraction of the Hilbert metric,
under the assumption of backward and forward invariance of
$\calX$.

\begin{figure}[htbp]
\centering
\includegraphics[width=0.96\columnwidth]{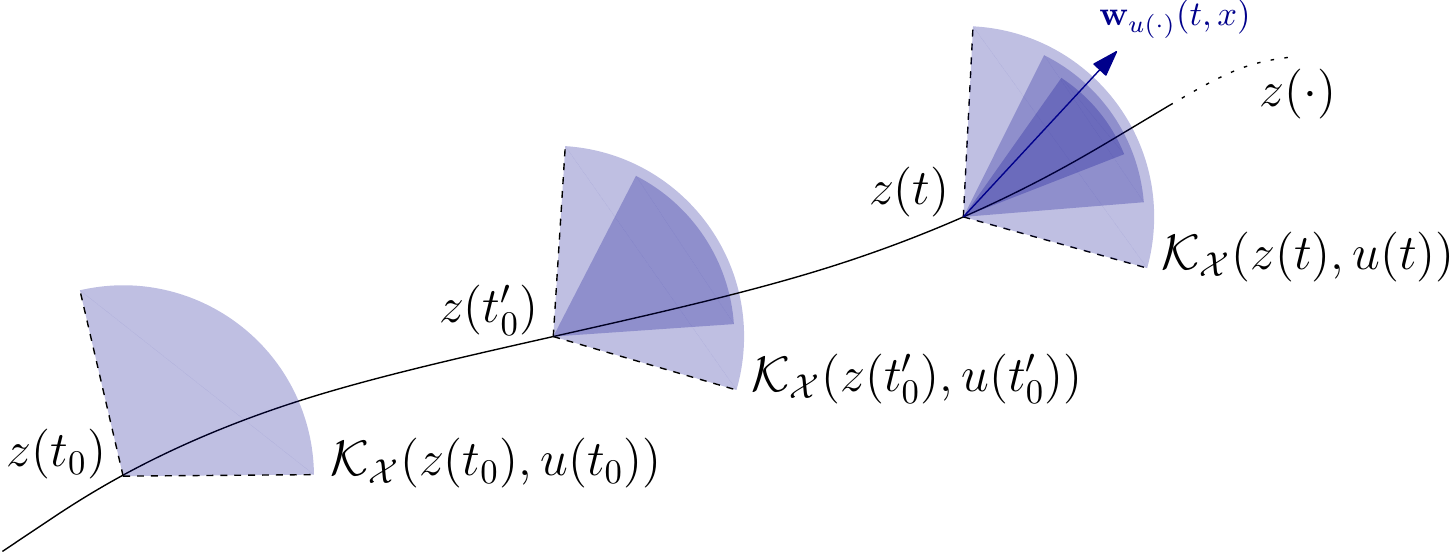}  
\caption{The contraction at time $t$ from different initial cones, for $t_0 < t_0' < t$.  
Note that 
$\partial_{z(t_0)}\psi(t,t_0,z(t_0),u(\cdot))\calK_{\calX}(z(t_0),u(t_0)) \subseteq \partial_{z(t_0')}\psi(t,t_0',z(t_0'),u(\cdot))\calK_{\calX}(z(t_0'),u(t_0')) \subseteq \calK_\calX(z(t),u(t))$.
At time $t$, for $t-t_0 \to \infty$, the cone reduces to a line.}
\label{fig:thm3_illustration}
\end{figure}

\section{Limit sets of (closed) \\ differentially positive systems}
\label{sec:limit_sets}

\subsection{Behavior dichotomy}
\label{sec:main_theorem}
For closed continuous-time
dynamical systems (or open continuous-time systems
with constant inputs)
the combination of the local order on the system 
state manifold and the projective contraction of the 
variational dynamics toward the Perron-Frobenius 
vector field $\mathbf{w}(x)$
{restrict the asymptotic} behavior
of differentially positive systems. 
The next theorem characterizes the $\omega$-limit
sets of those systems.

\begin{theorem}
\label{thm:limit_sets}
Let $\Sigma$ be a 
closed continuous (complete) system $\dot{x}=f(x)$
with state manifold $\calX$,
 strictly differentially positive 
with respect to the cone field $\calK_\calX(x)\subseteq T_x \calX$.
Under Assumption \ref{assume:completeness}, suppose that
the trajectories of $\Sigma$ are bounded. Then, 
for every $\xi\in \calX$,
the $\omega$-limit set $\omega(\xi)$ satisfies
one of the following two properties:
\begin{itemize}
\item[(i)] The vector field $f(x)$ is aligned with the Perron-Frobenius vector field $\mathbf{w}(x)$
for each $x\in \omega(\xi)$ {(i.e. $f(x)=\lambda(x) \mathbf{w}(x)$, $\lambda(x) \in \real$)}, and $\omega(\xi)$ is either a fixed point 
or a limit cycle or a set of fixed points and connecting arcs; 
\item[(ii)] The vector field $f(x)$ is { not} aligned with the 
Perron-Frobenius vector field $\mathbf{w}(x)$ for each $x \in \omega(\xi)$ 
{ such that $f(x)\neq 0$}, 
and either 
$
\liminf\nolimits\limits_{t\to\infty} 
|\partial_x \psi(t,0,x)\mathbf{w}(x)|_{\psi(t,0,x)} = \infty
$
or 
$\lim\nolimits\limits_{t\to\infty} f(\psi(t,0,x)) = 0$.
\vspace{-4mm}
\end{itemize}
\end{theorem}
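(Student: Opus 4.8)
The plan is to exploit two facts established earlier: that the vector field is itself a solution of the variational dynamics, and that strict differential positivity collapses every cone direction onto the Perron--Frobenius field $\mathbf{w}$ (Theorems \ref{thm:dsch} and \ref{thm:differential_pf}). First I would record the structural facts about $\omega(\xi)$: since trajectories are bounded and $\Sigma$ is complete, $\omega(\xi)$ is nonempty, compact, connected and invariant; being a compact invariant set it is backward complete, so Theorem \ref{thm:differential_pf} applies on it and $\mathbf{w}$ is defined along every trajectory of $\omega(\xi)$, with $f$ bounded there. The key observation is that along any trajectory $x(\cdot)$ one has $\frac{d}{dt} f(x(t)) = \partial f(x(t)) f(x(t))$, so $t\mapsto f(x(t))$ is a solution of the variational component of $\delta\Sigma$. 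Hence the flow direction $f$ and the Perron--Frobenius direction $\mathbf{w}$ are transported by the same linear (projective) dynamics, and the ray spanned by $f(x(t))$ either coincides with that of $\mathbf{w}(x(t))$ for all $t$ or for no $t$: alignment is a \emph{flow-invariant} property.

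Next I would prove (i)/(ii) as an all-or-nothing alternative on $\omega(\xi)$. If $f(x_0)\in \mathrm{int}\calK_\calX(x_0)\cup-\mathrm{int}\calK_\calX(x_0)$ at some $x_0\in\omega(\xi)$, then $\pm f(x(t))$ stays in the open cone, so $d_{*(t)}(\pm f(x(t)),\mathbf{w}(x(t)))$ is finite and, by the strict contraction \eqref{eq:Hilbert_contraction}, tends to $0$; thus $f$ aligns with $\mathbf{w}$, and by invariance of alignment together with the recurrence of $\omega(\xi)$ this propagates to all of $\omega(\xi)$, giving case (i). Otherwise $f(x)\notin\calK_\calX(x)\cup-\calK_\calX(x)$ at every $x\in\omega(\xi)$ with $f(x)\neq 0$: the flow is transverse to the cone, which is case (ii).

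In case (i), $\dot x=\lambda(x)\mathbf{w}(x)$ with $\mathbf{w}\in\mathrm{int}\calK_\calX$ makes every orbit arc in $\omega(\xi)$ a conal (or anti-conal) curve, hence monotone for the local order $\sqsubseteq_{\calK_\calX}$; a Poincar\'e--Bendixson-type argument, using that $\sqsubseteq_{\calK_\calX}$ is a genuine partial order locally and that contraction forbids transverse recurrence, reduces the secondary $\omega$-limit set of any point of $\omega(\xi)$ to a fixed point, yielding the trichotomy fixed point / limit cycle / fixed points with connecting arcs. In case (ii) I would compare magnitudes: writing $f(x(0))=p-q$ with $p,q\in\calK_\calX(x(0))$ (possible since the cone is solid), both $\partial_x\psi(t,0,x)p$ and $\partial_x\psi(t,0,x)q$ align with $\mathbf{w}(x(t))$ with magnitude comparable to $\mu(t):=|\partial_x\psi(t,0,x)\mathbf{w}(x)|_{\psi(t,0,x)}$, which satisfies $\mu(t)=\exp(\int_0^t\lambda)$ from the Perron--Frobenius PDE. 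Transversality forces the dominant $\mu(t)\mathbf{w}$-parts of $p$ and $q$ to cancel, so $f$ grows strictly slower than $\mu$; since $f$ is bounded on $\omega(\xi)$, either $\mu$ is unbounded, giving $\liminf_{t\to\infty}|\partial_x\psi(t,0,x)\mathbf{w}(x)|=\infty$, or $\mu$ stays bounded and the strictly subdominant $f$ must decay, giving $f(\psi(t,0,x))\to 0$.

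The main obstacle is twofold. The delicate part of case (i) is the Poincar\'e--Bendixson rigidity: turning local monotonicity along Perron--Frobenius curves into the global exclusion of any recurrence beyond equilibria, one periodic orbit, or heteroclinic arcs, which is precisely where smoothness of the cone field and the impossibility of a homoclinic loop (Figure \ref{fig:homoclinic}) enter. The delicate part of case (ii) is the nonautonomous spectral-gap step: making rigorous that a direction transverse to $\mathbf{w}$ grows strictly slower than $\mu(t)$, so that boundedness of $f$ cleanly separates expansion of the dominant mode from convergence of the flow to rest.
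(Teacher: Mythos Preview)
Your overall architecture matches the paper: the dichotomy is driven by whether $\pm f(x)$ lies in $\mathrm{int}\,\calK_\calX(x)$ at some $x\in\omega(\xi)$, and the bridge is that $t\mapsto f(\psi_t(\cdot))$ is a solution of the variational equation. Two points deserve comment.

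\emph{Propagation of alignment in case (i).} You argue alignment at $x_0$, then invoke ``invariance of alignment together with the recurrence of $\omega(\xi)$'' to spread it. This step is shaky: $\omega(\xi)$ need not be chain-recurrent in a way that lets you reach every point from the forward orbit of $x_0$ (think of a heteroclinic arc). The paper avoids this by working along the \emph{approaching} trajectory $\psi_t(\xi)$ rather than inside $\omega(\xi)$: once $\psi_{t_{k^*}}(\xi)$ is close enough to $x_0$ that $f(\psi_{t_{k^*}}(\xi))\in\calK_\calX$, contraction of the Hilbert metric from that moment on forces $d_{*}(f,\mathbf{w})\to 0$ along the whole $\xi$-trajectory; since every $z\in\omega(\xi)$ is a limit of $\psi_{t_{k^*}+\tau_k}(\xi)$ for some $\tau_k\to\infty$, alignment at $z$ follows directly. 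This is a cleaner route and uses only the definition of $\omega(\xi)$.

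\emph{The trichotomy in case (i).} Here your route and the paper's genuinely diverge. You propose a Poincar\'e--Bendixson argument via local conal order and the exclusion of homoclinic loops. The paper instead proves a separate lemma (Lemma \ref{lem:Hirsch}) with no appeal to order: for a trajectory with $f=\lambda\mathbf{w}$ and $|\lambda|\ge\rho>0$, it first shows (via Lemmas \ref{lem:boundedness1}--\ref{lem:boundedness2}) that the full variational flow is bounded in a tube around the orbit, then that horizontal components $H_x(\delta x)=\delta x-\langle\delta x,\mathbf{w}(x)\rangle_x\mathbf{w}(x)$ contract to zero, hence nearby trajectories are \emph{attracted} to the orbit. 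A transverse section $\calS$ at a recurrent point then carries a return map that is a strict contraction of a ball into itself, and Banach's fixed-point theorem gives a periodic point on the orbit. So the mechanism is ``attractivity $\Rightarrow$ contracting first-return map'', not ``monotonicity along conal curves''. Your monotonicity sketch may be workable, but it is not the argument the paper uses, and you correctly flag it as the delicate step; the paper's approach buys a concrete contraction estimate at the cost of the auxiliary boundedness lemmas.

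For case (ii) your $f=p-q$ decomposition is essentially equivalent to the paper's choice $\delta x=f(x)+\lambda\mathbf{w}(x)\in\calK_\calX(x)$ for large $\lambda$; the latter is slightly more direct since one only tracks a single cone vector. The conclusion is the same: if $f(\psi_t(x))\not\to 0$ then along a subsequence $|f|\ge\varepsilon$ stays transverse to $\mathbf{w}$, while $\partial\psi_t(x)\delta x$ must projectively converge to $\mathbf{w}$, which forces $|\partial\psi_t(x)\mathbf{w}(x)|$ to diverge. Your ``$\mu$ bounded $\Rightarrow f\to 0$'' phrasing is the contrapositive but needs the transversality to be uniform along the subsequence, which is exactly what the paper isolates.
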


The interpretation of Theorem \ref{thm:limit_sets} is that
the {asymptotic} behavior of $\Sigma$ is either 
described by a 
\emph{Perron-Frobenius curve} $\gamma^{\mathbf{w}}(\cdot)$,
that is, a curve $\dot{\gamma}^{\mathbf{w}}(s) = \mathbf{w}(\gamma^{\mathbf{w}}(s))$ for all $s\in \dom \gamma^{\mathbf{w}}(\cdot)$;
or is the union of the limit points 
of some trajectory $\psi(\cdot,0,\xi)$,
$\xi\in \calX$, 
nowhere tangent to the Perron-Frobenius vector field,
as clarified in Section \ref{sec:complex_attractors},
and characterized by high sensitivity with respect to 
initial conditions, because of the unbounded linearization.
The proof of Theorem \ref{thm:limit_sets} in Appendix, Section \ref{sec:proof_main_theorem},
is of interest on its own
since it illustrates how differential Perron-Frobenius theory
impacts the behavior of $\Sigma$.
In the next two subsections we further discuss the implications
of Theorem \ref{thm:limit_sets} in case (i) and in case (ii),
respectively.

\subsection{Simple attractors of differentially positive systems}
\label{sec:simple_attractors}
A first consequence of Theorem \ref{thm:limit_sets} is a result
akin to Poincare-Bendixson characterization of limit sets of planar systems.

\begin{corollary}
\label{thm:PBtheorem}
Under the assumptions of Theorem \ref{thm:limit_sets},
consider an open, forward invariant region $\calC \subseteq \calX$
that does not contain any fixed point. 
If the vector field $f(x) \in \mathrm{int}\calK_\calX(x)$ for any $x\in \calC$, then
there exists a unique attractive periodic orbit contained in $\calC$.
\end{corollary}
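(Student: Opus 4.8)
The plan is to combine the contraction result of Theorem \ref{thm:dsch} with the behavior dichotomy of Theorem \ref{thm:limit_sets}, specialized to the situation where the vector field lies strictly inside the cone. The key observation is that if $f(x) \in \mathrm{int}\calK_\calX(x)$ for all $x \in \calC$, then $f$ itself is a nowhere-vanishing conal vector field on $\calC$, and moreover the hypothesis that $\calC$ contains no fixed point rules out case (ii) of Theorem \ref{thm:limit_sets}: the alternative $\lim_{t\to\infty} f(\psi(t,0,x)) = 0$ would force the trajectory toward a fixed point (by boundedness and continuity, an $\omega$-limit point would be a fixed point, contradicting forward invariance and the absence of fixed points in $\calC$), and the other alternative of case (ii) requires $f$ and $\mathbf{w}$ to be \emph{non}-aligned on the $\omega$-limit set. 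I would therefore first argue that on $\omega(\xi)$ we must be in case (i).

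Next I would pin down which subcase of (i) occurs. Case (i) asserts that $f$ is aligned with the Perron-Frobenius vector field $\mathbf{w}$ on $\omega(\xi)$, and that $\omega(\xi)$ is a fixed point, a limit cycle, or a set of fixed points joined by connecting arcs. Since $f(x) \in \mathrm{int}\calK_\calX(x)$ is nonzero throughout $\calC$, and $\omega(\xi) \subseteq \overline{\calC}$ lies in the forward-invariant closure where $f$ does not vanish, the fixed-point and connecting-arc options are excluded. This leaves the limit cycle as the only possibility, giving \emph{existence} of a periodic orbit $\Gamma \subseteq \calC$. Here I would take some $\xi \in \calC$ and use forward invariance to ensure $\omega(\xi) \subseteq \overline{\calC}$; a small amount of care is needed to confirm $\omega(\xi)$ actually sits inside the open region $\calC$ rather than on its boundary, which I expect to follow from forward invariance of $\calC$ together with boundedness of trajectories.

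For \emph{uniqueness and attractivity}, the plan is to invoke the projective contraction of Theorem \ref{thm:dsch}. On the limit cycle $\Gamma$, the tangent vector field $f$ is aligned with $\mathbf{w}$, so $f$ restricted to $\Gamma$ generates a periodic Perron-Frobenius curve. The strict contraction \eqref{eq:Hilbert_contraction} of the Hilbert metric forces every variational direction in $\calK_\calX$ to converge projectively to $\mathbf{w}$, hence to the direction of $f$; integrating this infinitesimal alignment along the flow shows that any two conally-ordered nearby trajectories are squeezed onto the same Perron-Frobenius curve. Because $\calC$ contains no fixed point and the contraction rate $\lambda > 0$ is uniform, two distinct periodic orbits in $\calC$ would each carry the Perron-Frobenius direction yet could be connected by a conal arc, and the contraction would collapse them — yielding a contradiction unless they coincide. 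Attractivity of $\Gamma$ follows by the same contraction estimate applied to a generic trajectory in $\calC$ against a trajectory on $\Gamma$.

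The main obstacle I anticipate is the transition from the \emph{infinitesimal} projective contraction of directions (which Theorem \ref{thm:dsch} delivers for the variational dynamics) to a \emph{finite} statement about convergence and uniqueness of actual trajectories in the state space $\calC$. The Hilbert metric controls only the angular/directional part $\vartheta(t)$ of the variational flow, not the magnitude $\rho(t)$ nor the separation of base points; bridging this gap requires integrating the directional convergence along a conal curve joining the two trajectories and controlling how the base-point separation evolves. I would handle this by foliating a neighborhood of $\Gamma$ by conal curves transverse to the flow and tracking the Poincar\'e return map, whose contraction in the transverse direction is exactly what the Hilbert-metric estimate encodes; showing this return map is a strict contraction (so it has a unique attracting fixed point corresponding to $\Gamma$) is the technical heart of the argument.
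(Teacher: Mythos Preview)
Your overall plan---use Theorem~\ref{thm:limit_sets} to force case~(i), rule out fixed points to get a limit cycle, then argue uniqueness via contraction---matches the paper's structure. But there is a genuine gap in how you rule out case~(ii), and your uniqueness argument takes a different (and harder) route than the paper's.

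\textbf{Ruling out case~(ii).} You dispose of the alternative $\lim_{t\to\infty} f(\psi(t,0,x))=0$ correctly, but you never actually argue why the \emph{other} alternative $\liminf_{t\to\infty}|\partial_x\psi(t,0,x)\mathbf{w}(x)|=\infty$ cannot occur; you only observe that case~(ii) also entails non-alignment of $f$ and $\mathbf{w}$, without saying why non-alignment is impossible here. The paper closes this gap with Lemmas~\ref{lem:boundedness1} and~\ref{lem:boundedness2}: since $f(\psi_t(x))=\partial\psi_t(x)f(x)$ is bounded (trajectories are bounded, $f$ is continuous) and $f(x)\in\mathrm{int}\,\calK_\calX(x)$, Lemma~\ref{lem:boundedness2} gives $\limsup_{t\to\infty}|\partial\psi_t(x)\mathbf{w}(x)|<\infty$, which directly excludes the unbounded-linearization branch of case~(ii).

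\textbf{Uniqueness.} Your Poincar{\'e}-map plan is different from the paper and runs into exactly the obstacle you flag: Hilbert contraction controls only directions, not magnitudes. The paper's argument is more direct and avoids building a return map. Suppose $\calA_1$ and $\calA_2$ are two periodic orbits in $\calC$; connect them by any curve $\gamma:[0,1]\to\calC$. Along the flow, $\frac{d}{ds}\psi_t(\gamma(s))=[\partial\psi_t(\gamma(s))]\dot\gamma(s)$. Because $f\in\mathrm{int}\,\calK_\calX$ everywhere in $\calC$, Lemmas~\ref{lem:boundedness1} and~\ref{lem:boundedness2} give that $|\frac{d}{ds}\psi_t(\gamma(s))|$ stays bounded; combined with the projective contraction (as in the proof of Lemma~\ref{lem:Hirsch}), the horizontal component goes to zero and $\frac{d}{ds}\psi_t(\gamma(s))$ asymptotically aligns with $\mathbf{w}(\psi_t(\gamma(s)))$, hence with $f(\psi_t(\gamma(s)))$. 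The image of the flowed curve therefore limits onto a single integral curve of $f$ joining a point of $\calA_1$ to a point of $\calA_2$, forcing $\calA_1\cap\calA_2\neq\emptyset$.

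The ingredient that resolves your ``infinitesimal-to-finite'' obstacle is precisely the boundedness of the full variational dynamics (Lemmas~\ref{lem:boundedness1},~\ref{lem:boundedness2}): once $|\partial\psi_t(x)\delta x|$ is uniformly bounded, projective convergence of the direction to $\mathbf{w}$ upgrades to honest decay of the component transverse to $\mathbf{w}$. You should invoke these lemmas rather than attempting to set up a contracting return map from scratch.
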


The result shows the potential of differential positivity
for the analysis of limit cycles in possibly high dimensional spaces. 
Since stable limit cycles must 
correspond to Perron-Frobenius curves, stable limit cycles are
excluded when Perron-Frobenius curves are open, a property
always satisfied in vector spaces with constant cone field.
For a differentially positive system defined in a vector space, 
the cone field must necessarily ``rotate'' with the periodic orbit 
in order to allow for limit cycle attractors (see, for example,
Section \ref{sec:first_nonlinear_oscillator}).

Beyond isolated fixed point and limit cycles, the limit sets of 
differentially positive systems are severely restricted by (local)
order properties, see Figure \ref{fig:local_order} for an illustration.
In particular, the intuitive argument ruling out homoclinic orbits like in
Figure \ref{fig:homoclinic} is made rigorous with
Theorem \ref{thm:limit_sets}. 
A limit set given by a connecting arc between two
hyperbolic fixed points can exists only 
if it is everywhere tangent 
to the Perron-Frobenius vector field
(Theorem \ref{thm:limit_sets}.i),
or nowhere tangent 
to the Perron-Frobenius vector field
(Theorem \ref{thm:limit_sets}.ii). 
Because any orbit between two
hyperbolic fixed points must belong to the unstable
manifold of its $\alpha$-limit set and to the stable manifold
of its $\omega$-limit set, it can be a Perron-Frobenius curve
only if, whenever it is tangent to the Perron-Frobenius eigenvector
of its $\alpha$-limit, it is also tangent to the 
Perron-Frobenius eigenvector of its $\omega$-limit.
\begin{corollary}
\label{thm:homoclinic}
Under the assumptions of Theorem \ref{thm:limit_sets},
consider an orbit that connects two hyperbolic fixed points $y_e$, $z_e$,
respectively as $t \to -\infty$ and $t \to \infty$.
If the orbit is tangent to $\mathbf{w}(y_e)$ at $y_e$,
then it is tangent to $\mathbf{w}(z_e)$ at $z_e$.
\end{corollary}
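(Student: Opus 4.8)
We have an orbit connecting two hyperbolic fixed points along the unstable manifold of $y_e$ and the stable manifold of $z_e$. The key tool is the Perron-Frobenius vector field $\mathbf{w}$ together with the projective contraction of the Hilbert metric from Theorem~\ref{thm:dsch}. At a hyperbolic fixed point, the linearization is a strictly positive linear map (see the Remark after Section~\ref{sec:first_nonlinear_oscillator}), so by classical Perron-Frobenius theory $\mathbf{w}(y_e)$ and $\mathbf{w}(z_e)$ coincide with the (dominant) Perron eigenvectors of $\partial f(y_e)$ and $\partial f(z_e)$, respectively. The strategy is to transport the tangency hypothesis along the orbit using the contraction, and show it forces tangency at the other endpoint.

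The plan is as follows. First I would observe that the orbit itself, viewed as a trajectory $x(\cdot)$ of $\Sigma$, carries the tangent vector $\dot{x}(t)=f(x(t))$, which is a solution of the variational equation (since differentiating $\dot x = f(x)$ gives $\ddot x = \partial f(x)\dot x$). Thus $f(x(t))$ is itself a variational trajectory $\delta x(t)$. The hypothesis that the orbit is tangent to $\mathbf{w}(y_e)$ at $y_e$ means that, as $t\to-\infty$, the normalized direction $f(x(t))/|f(x(t))|$ converges to $\mathbf{w}(y_e)$; equivalently, $d_{*(t)}(\mathbf{w}(x(t)),f(x(t)))\to 0$ as $t\to-\infty$.

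Next I would exploit the invariance of the Perron-Frobenius direction under the flow: by the geometric characterization of $\mathbf{w}$ given after Corollary~\ref{cor:differential_time-invariant_pf}, $\mathbf{w}(x(t))$ is (up to reparametrization and scaling) the image of $\mathbf{w}(x(t_0))$ under the variational flow, so the Hilbert distance $d_{*(t)}(\mathbf{w}(x(t)),f(x(t)))$ between the two variational trajectories $\mathbf{w}(x(\cdot))$ and $f(x(\cdot))$ is \emph{nonincreasing} in $t$ by \eqref{eq:Hilbert_invariance}. Since this distance tends to $0$ as $t\to-\infty$ and is nonnegative and nonincreasing, it must be identically zero for all $t$. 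Letting $t\to+\infty$ and using that $f(x(t))/|f(x(t))|\to \mathbf{w}(z_e)$ along the stable manifold (again by classical Perron-Frobenius at the hyperbolic point $z_e$), I conclude $d_{*}(\mathbf{w}(z_e),\dot x)=0$ at $z_e$, i.e. the orbit is tangent to $\mathbf{w}(z_e)$.

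The main obstacle I anticipate is the limiting behavior of the direction $f(x(t))/|f(x(t))|$ as the orbit approaches the fixed points, where $f\to 0$ so the normalization is delicate. One must argue that the direction still converges, which requires the hyperbolicity of the fixed points: near a hyperbolic equilibrium the unstable (resp.\ stable) manifold is tangent to the corresponding eigenspace, and strict differential positivity forces the Perron eigenvector to be the dominant direction, so the incoming/outgoing orbit aligns with $\mathbf{w}$ precisely when it enters along that eigendirection. Making the convergence of the projective direction rigorous -- and ensuring the Hilbert distance between $\mathbf{w}(x(\cdot))$ and $f(x(\cdot))$ is well-defined and finite along the whole orbit (so that the monotonicity argument applies) -- is the crux; the finiteness follows from strict positivity placing both vectors in $\mathrm{int}\,\calK_\calX$ for $t$ in the interior of the time interval, per the last claim of Theorem~\ref{thm:dsch}.
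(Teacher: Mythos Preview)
Your approach is essentially the same as the paper's: both observe that $f(x(t))$ is itself a trajectory of the variational system, use the tangency hypothesis at $y_e$ together with continuity to place $f(x(t))$ in $\mathrm{int}\,\calK_\calX(x(t))$ for $t$ near $-\infty$, and then invoke the projective contraction of Theorem~\ref{thm:dsch} to force alignment with $\mathbf{w}$ as $t\to+\infty$.

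The paper's execution is slightly more direct than yours. Rather than arguing that $d_{*(t)}(\mathbf{w}(x(t)),f(x(t)))\to 0$ as $t\to-\infty$ and then using monotonicity to conclude the distance is identically zero, the paper simply picks one point $y=\psi_{-T}(x)$ on the orbit close enough to $y_e$ that $f(y)\in\mathrm{int}\,\calK_\calX(y)$ by continuity, and applies the exponential contraction \eqref{eq:Hilbert_contraction} forward in time from $y$. This sidesteps any discussion of the $t\to-\infty$ limit of the Hilbert distance and avoids the delicate normalization issue you flag. Your monotonicity route is correct but does more work than necessary. Also, your final sentence (``using that $f(x(t))/|f(x(t))|\to\mathbf{w}(z_e)$ along the stable manifold, again by classical Perron--Frobenius at $z_e$'') is superfluous and risks circularity: once you have $d_{*(t)}(\mathbf{w}(x(t)),f(x(t)))\equiv 0$, continuity of $\mathbf{w}$ alone gives the conclusion --- no separate appeal to linear Perron--Frobenius at $z_e$ is needed.
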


The corollary rules out the possibility of a homoclinic orbit 
with a one-dimensional unstable manifold, a typical ingredient of 
strange attractors.
For system depending on parameters, the corollary rules out 
the possibility of homoclinic bifurcations 
\cite[Chapter 8]{Strogatz1994} where the homoclinic orbit
is tangent to the dominant eigenvector of the saddle point.
In accordance with Theorem \ref{thm:limit_sets}, 
a limit set given by a homoclinic orbit can only exist 
if it is nowhere tangent 
to the Perron-Frobenius vector field, which rules out
the possibility of being part of a simple attractor.
The two situations are illustrated in Fig \ref{fig:homoclinic2}.

\begin{figure}[htbp]
\centering
\includegraphics[width=0.8\columnwidth]{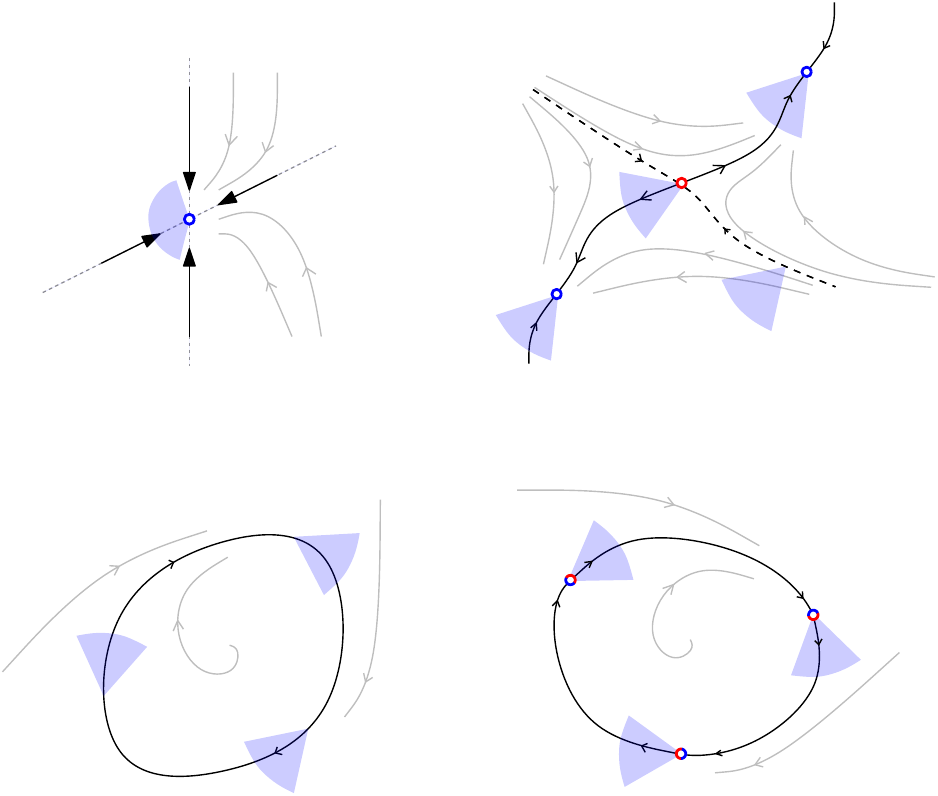} 
\caption{Simple attractors consistent with the ordering
property of a Perron-Frobenius curve.
Cones are shaded in blue. 
Top-left: fixed point with dominant eigenvector. 
Top-right: bistable systems, the heteroclinic orbits 
connecting the saddle (red) to the stable fixed points (blue) 
coincides with the image of some Perron-Frobenius curves.
Bottom: limit cycles (left) 
or fixed points and connecting arcs (right)
coincides with the image of some closed Perron-Frobenius curve,
thus require require a rotating cone field (on vector spaces).
}
\label{fig:local_order}
\end{figure}

\begin{figure}[htbp]
\centering
\includegraphics[width=0.46\columnwidth]{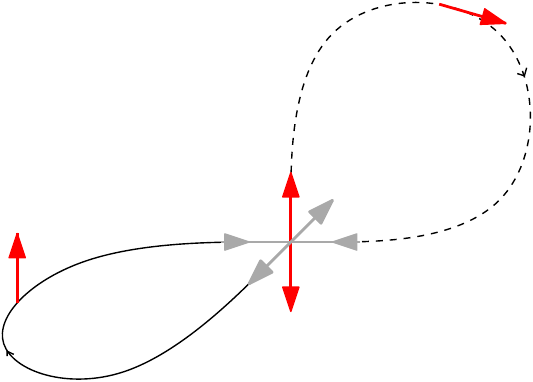}  
\caption{The stable and unstable manifolds of the saddle have 
dimension 1 and 2, respectively. The Perron-Frobinius vector
field is represented in red. The limit set given by the 
homoclinic orbit on the right part of the figure (dashed line) 
is ruled out by Corollary \ref{thm:homoclinic}.
The limit set given by the homoclinic orbit on the 
left part of the figure (solid line) is compatible
with Theorem \ref{thm:limit_sets} but the Perron-Frobenius vector
field is nowhere tangent to the curve.}
\label{fig:homoclinic2}
\end{figure}

\subsection{Complex limit sets of differentially 
positive systems are not attractors.}
\label{sec:complex_attractors}

Part (ii) of Theorem \ref{thm:limit_sets} allows for
more complex limit sets than those described in Part (i), but those
limit sets cannot be attractors, because they are nowhere tangent to the
dominant direction of the linearization.
This property has been well studied for monotone systems.
For instance, Smale proposed a construction to
imbed chaotic behaviors in a cooperative irreducible system
\cite[Chapter 4]{Smith1995}. The transversality of those limit
sets to the Perron-Frobenius vector field 
extends to the trajectories that converge to them.
For instance, consider any $\omega$-limit set $\omega(\xi)$,
$\xi\in \calX$, satisfying Part (ii) of Theorem \ref{thm:limit_sets}.
Any trajectory whose $\omega$-limit points belong to $\omega(\xi)$ 
is nowhere tangent to the Perron-Frobenius vector field.
Moreover, if the trajectory does not converge to a fixed point
then it shows high sensitivity with respect to initial conditions.

\begin{corollary}
\label{thm:unstable_trajectories}
Under the assumptions of Theorem \ref{thm:limit_sets},
suppose that for some $\xi\in\calX$, $\omega(\xi)$
satisfies Part (ii) of Theorem \ref{thm:limit_sets}.
Then, for any $z\in \calX$ such that 
$\omega(z)\subseteq \omega(\xi)$, 
the trajectory $\psi(\cdot,0,z)$ satisfies
$f(\psi(t,0,z)) \notin \calK_\calX(\psi(t,0,z))\setminus\{0\}$
for each $t\geq 0$. 
If $\omega(z)$ is not a singleton, then
$
\liminf\nolimits\limits_{t\to\infty} 
|\partial_z \psi(t,0,z)\mathbf{w}(z)|_{\psi(t,0,z)} = \infty
$.
\end{corollary}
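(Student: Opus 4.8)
The plan is to build everything on one elementary but decisive observation: along any trajectory $x(\cdot):=\psi(\cdot,0,z)$ of the closed system, the velocity $f(x(\cdot))$ is itself a solution of the variational dynamics. Indeed, differentiating $\dot{x}(t)=f(x(t))$ gives $\frac{d}{dt}f(x(t))=\partial f(x(t))f(x(t))$, so $\delta x(t):=f(x(t))$ solves $\dot{\delta x}=\partial f(x(t))\delta x$. This is the lever that feeds $f$ into the Perron-Frobenius machinery of Theorems \ref{thm:dsch} and \ref{thm:differential_pf}.

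For the transversality claim I would argue by contradiction. Suppose $f(x(t^*))\in\calK_\calX(x(t^*))\setminus\{0\}$ for some $t^*\geq 0$. By differential positivity $f(x(t))\in\calK_\calX(x(t))$ for all $t\geq t^*$, and by strict differential positivity $f(x(t))\in\mathrm{int}\calK_\calX(x(t))$ for $t\geq t^*+T$, so that $d_{*(t)}(f(x(t)),\mathbf{w}(x(t)))$ becomes finite. Taking $\partial_z\psi(t,t^*,x(t^*))\mathbf{w}(x(t^*))$ as a second variational trajectory, whose direction stays aligned with $\mathbf{w}(x(t))$ by the projective invariance of the Perron-Frobenius field established after Theorem \ref{thm:differential_pf}, Theorem \ref{thm:dsch} yields $d_{*(t)}(f(x(t)),\mathbf{w}(x(t)))\to 0$. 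Choosing any limit point $p\in\omega(z)\subseteq\omega(\xi)$ with $f(p)\neq 0$, along $x(t_k)\to p$ the continuity of $f$, of $\mathbf{w}$ (Corollary \ref{cor:differential_time-invariant_pf}), and of the Hilbert metric in its base point force $d_{\calK_\calX(p)}(f(p),\mathbf{w}(p))=0$, i.e. $f(p)=\lambda\mathbf{w}(p)$ with $\lambda\geq 0$. This contradicts Part (ii) of Theorem \ref{thm:limit_sets} for $\omega(\xi)$, which asserts $f$ is nowhere aligned with $\mathbf{w}$ on $\{x\in\omega(\xi)\,:\,f(x)\neq 0\}$.

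It then remains to exclude the degenerate case in which every $p\in\omega(z)$ is a fixed point. Here I would use that an equilibrium $p\in\omega(\xi)$ of a Part-(ii) set cannot be attracting along the dominant direction: were it attracting, the trajectory of $\xi$ would be absorbed and $\omega(\xi)=\{p\}$ would be of Part (i), a contradiction. Hence the approach of $x(\cdot)$ to such a $p$ is tangent to a non-dominant (stable or center) direction, transverse to $\mathbf{w}(p)$, again contradicting $d_{*(t)}(f(x(t)),\mathbf{w}(x(t)))\to 0$. This closes the first claim, $f(x(t))\notin\calK_\calX(x(t))\setminus\{0\}$ for every $t\geq 0$. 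For the growth estimate I would invoke Theorem \ref{thm:limit_sets} directly for the initial condition $z$: since $\omega(z)\subseteq\omega(\xi)$ and $\omega(\xi)$ is of type (ii), the aligned structure of Part (i) is incompatible with $\omega(z)$ at any non-fixed point, so $\omega(z)$ is itself of Part (ii); its dichotomy gives either $\liminf_{t\to\infty}|\partial_z\psi(t,0,z)\mathbf{w}(z)|_{\psi(t,0,z)}=\infty$ or $\lim_{t\to\infty}f(\psi(t,0,z))=0$. When $\omega(z)$ is not a singleton the second alternative is impossible, since $f(x(t))\to 0$ confines $\omega(z)$ to the equilibrium set and, equilibria being isolated, forces convergence to a single fixed point; hence the blow-up survives.

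The hard part will not be the Perron-Frobenius contraction, which is supplied ready-made by Theorem \ref{thm:dsch}, but the book-keeping around equilibria sitting inside $\omega(\xi)$: ruling out convergence to a fixed point tangent to $\mathbf{w}$, and ruling out $f\to 0$ with a non-singleton limit set. Both steps rest on the claim that an equilibrium embedded in a Part-(ii) limit set is never attracting in the Perron-Frobenius direction together with an isolatedness (hyperbolicity) hypothesis on the equilibria, and making these transversality arguments airtight is where I expect the real effort to lie.
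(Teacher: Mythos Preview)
Your overall strategy—treating $f(x(t))$ as a variational trajectory and feeding it into the Perron--Frobenius contraction—is sound and indeed underlies the paper's argument. But two points deserve attention.

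\emph{First claim.} The paper takes a shorter route than your Hilbert-metric alignment argument. From the proof of Theorem~\ref{thm:limit_sets} (Case~3), Part~(ii) is equivalent to the stronger statement $f(x)\notin\calK_\calX(x)\setminus\{0\}$ for \emph{every} $x\in\omega(\xi)$, not merely ``$f$ not aligned with $\mathbf{w}$''. With this in hand, if $f(\psi_t(z))\in\calK_\calX(\psi_t(z))\setminus\{0\}$ for some $t$, differential positivity keeps it in the cone forever; passing to a limit point $x\in\omega(z)\subseteq\omega(\xi)$ with $f(x)\neq 0$ gives $f(x)\notin\calK_\calX(x)$, and closedness of the cone plus continuity yields $f(\psi_{t_k}(z))\notin\calK_\calX(\psi_{t_k}(z))$ for large $k$—a contradiction. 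This avoids your detour through equilibria and the hyperbolicity hypothesis you flag as the hard part.

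\emph{Second claim.} Here there is a genuine gap in your plan. Theorem~\ref{thm:limit_sets}(ii) asserts the blow-up/convergence dichotomy for trajectories starting from points $x\in\omega(\xi)$, not for the initial point $\xi$ itself. So applying Theorem~\ref{thm:limit_sets} with $z$ in the role of $\xi$ tells you about trajectories from points of $\omega(z)$, not about $\psi_t(z)$. The paper does not invoke the theorem but repeats its Case~3 argument directly at $z$: set $\delta z=f(z)+\lambda\mathbf{w}(z)\in\calK_\calX(z)$ for large $\lambda$; then $\partial\psi_{t_k}(z)\delta z=f(\psi_{t_k}(z))+\lambda\,\partial\psi_{t_k}(z)\mathbf{w}(z)$, and since $\omega(z)$ is not a singleton one can choose $t_k$ with $f(\psi_{t_k}(z))\to f(x)\neq 0$ for some $x\in\omega(z)$. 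Projective contraction forces $\partial\psi_{t_k}(z)\delta z$ to align with $\mathbf{w}(\psi_{t_k}(z))$, which is impossible unless $|\partial\psi_{t_k}(z)\mathbf{w}(z)|\to\infty$. This yields the conclusion without any isolatedness assumption on equilibria; your ``$f\to 0$ plus isolated equilibria $\Rightarrow$ singleton'' step is not needed.
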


The reason why the possibly complex limit sets of differentially 
positive systems are of little importance
for the overall behavior is that 
their basin of attraction $\calW$ seems strongly repelling.
In accordance to Corollary \ref{thm:unstable_trajectories},
it is very ``likely'' for a trajectory in a small
neighborhood of $\calW$ to move away from $\calW$ 
along the Perron-Frobenius vector field 
and ``unlikely'' to return to $\calW$ at later time.
The argument can be made rigorous for strongly order
preserving monotone systems, allowing
to recover the following celebrated result for 
monotone systems \cite{Smith1995,Hirsch1988}.
\begin{corollary}
\label{thm:almost_convergence}
Let $\Sigma$ be a continuous dynamical system of the form $\dot{x}=f(x)$
on a vector space $\calX$,  strict differentially positive 
with respect to the constant cone field $\calK_\calX \subseteq T_x\calX = \calX$.
Under boundedness of trajectories, 
the $\omega$-limit set $\omega(\xi)$ is a fixed point 
for almost all $\xi\in \calX$.
\end{corollary}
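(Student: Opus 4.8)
The plan is to combine Theorem \ref{thm:limit_sets} with the genericity machinery available for strongly order-preserving monotone systems. Since $\Sigma$ is strictly differentially positive on a vector space with a \emph{constant} cone field $\calK_\calX$, the earlier equivalence (Theorem on monotone systems) tells us that $\Sigma$ is a monotone system with respect to the partial order $\preceq_{\calK_\calX}$. The strictness of the differential positivity forces the linearization to map the boundary of the cone into its interior after time $T$, which is precisely the infinitesimal signature of a strongly order-preserving (SOP) flow. First I would make this translation explicit: show that strict differential positivity implies that $\Sigma$ is strongly monotone (or at least strongly order preserving) in the classical sense, so that the hypotheses of Hirsch's generic convergence theorem are in force.

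Once monotonicity and strong order preservation are established, the second step is to invoke Theorem \ref{thm:limit_sets} to exclude the non-fixed-point alternatives. On a vector space with a constant cone field, the Perron-Frobenius curves are integral curves of the constant-direction vector field $\mathbf{w}(x)$, and these are \emph{open} curves -- they cannot close up. Therefore case (i) of Theorem \ref{thm:limit_sets} cannot produce a limit cycle (which would require a closed Perron-Frobenius curve), so in case (i) the only bounded, invariant, aligned limit sets are fixed points or sets of fixed points joined by connecting arcs. In case (ii), the limit set is nowhere tangent to $\mathbf{w}$ and is non-attracting, with the unbounded-linearization sensitivity from Corollary \ref{thm:unstable_trajectories}. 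The thrust of the argument is that case (ii) limit sets, together with the arc-connected degenerate pieces of case (i), form a measure-zero exceptional set of initial conditions.

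The third step is the genericity argument proper. I would appeal to the SOP structure to apply the Hirsch-type result (as cited, \cite{Hirsch1988,Smith1995}): for a strongly order-preserving semiflow with bounded orbits, the set of initial conditions whose $\omega$-limit set is \emph{not} a single fixed point is meagre and of Lebesgue measure zero. Concretely, one shows that the union of the strongly ordered pairs $x \prec y$ along a trajectory cannot persist under the strong order preservation unless the trajectory converges, and that the ``bad'' initial conditions lie on the boundaries between basins of attraction -- a set of lower dimension. Combining this with the exclusion of limit cycles (open Perron-Frobenius curves) and the repelling nature of the case-(ii) limit sets from Corollary \ref{thm:unstable_trajectories} yields convergence to a fixed point for almost every $\xi$.

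The main obstacle I anticipate is the rigorous passage from the \emph{differential} (infinitesimal, variational) strictness in the sense of \eqref{eq:USdiff+} to the \emph{global} strong order preservation required by the classical genericity theorems. Strict differential positivity is a statement about the linearized flow pushing $\mathrm{bd}\,\calK_\calX$ into $\mathrm{int}\,\calK_\calX$; one must integrate this along ordered pairs of trajectories to conclude that $x(t_0) \prec y(t_0)$ (with $x \neq y$) implies $x(t) \ll y(t)$ (strong order) for $t \geq t_0 + T$. This requires an integration argument across a conal curve joining the two solutions, using the contraction of the Hilbert metric from Theorem \ref{thm:dsch} to guarantee that the connecting curve's tangent enters the interior cone field $\calR_\calX$. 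Handling the boundary cases -- where the initial difference lies on the boundary of the cone, or where trajectories touch the exceptional tangency set -- will be the delicate part, and is the reason the statement asserts convergence only for \emph{almost all} $\xi$ rather than all $\xi$.
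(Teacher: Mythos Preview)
Your proposal is correct in outline but takes a genuinely different route from the paper's own argument. You reduce the problem to classical strongly-order-preserving theory: establish SOP from strict differential positivity, then invoke Hirsch's generic convergence theorem as a black box. The paper, by contrast, never passes through SOP or cites Hirsch's result as an input; it gives a self-contained argument that stays entirely inside the differential Perron--Frobenius framework developed earlier. Both proofs begin the same way---noting that on a vector space with a constant cone field every Perron--Frobenius curve is open, so case~(i) of Theorem~\ref{thm:limit_sets} cannot yield a limit cycle---but they diverge on how to handle case~(ii). The paper argues directly that any non-fixed-point attractor arising from case~(ii) has a basin of attraction of dimension at most $n-1$: if the basin were open, pick $x$ in it and a short Perron--Frobenius curve segment $\gamma^{\mathbf{w}}$ through $x$ lying in the basin; by Corollary~\ref{thm:unstable_trajectories} one has $\liminf_{t\to\infty}|\partial\psi_t(\gamma^{\mathbf{w}}(s))\mathbf{w}(\gamma^{\mathbf{w}}(s))|=\infty$ for every $s$, so the flowed curve $\psi_t(\gamma^{\mathbf{w}}(\cdot))$---which remains a (reparametrized, hence open) Perron--Frobenius curve---has unbounded length, contradicting boundedness of trajectories.

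Your approach buys access to the full strength of the SOP machinery at the cost of the translation step you flag as an obstacle (integrating strict differential positivity along a conal curve to get $x\prec y\Rightarrow x(t)\ll y(t)$ for $t\ge T$). The paper's approach avoids that translation entirely and illustrates that the Perron--Frobenius vector field alone, together with the growth estimate in case~(ii), suffices---which is more in keeping with the paper's goal of showing what the differential viewpoint delivers without appealing to the classical order-theoretic results it is meant to generalize.
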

For general differentially positive systems, the above discussion leads to
the following conjecture. 
\begin{conjecture}
\label{conj:limit_sets}
Under the assumptions of Theorem \ref{thm:limit_sets},
for almost every $\xi\in \calX$,
the $\omega$-limit set $\omega(\xi)$ is given by
either a fixed point, or a limit cycle, or 
fixed points and connecting arcs.
\end{conjecture}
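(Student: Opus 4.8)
The plan is to reduce the conjecture to a measure estimate: let $\calW\subseteq\calX$ be the set of initial conditions $\xi$ whose $\omega$-limit set $\omega(\xi)$ falls in case (ii) of Theorem \ref{thm:limit_sets}. By that dichotomy, every $\xi\notin\calW$ has an $\omega$-limit set of case (i), which is a fixed point, a limit cycle, or fixed points and connecting arcs; so it suffices to prove that $\calW$ has Lebesgue measure zero. The mechanism I would exploit is the strong transversality already extracted in Corollary \ref{thm:unstable_trajectories}: along any trajectory with $\omega(z)\subseteq\omega(\xi)$ of case (ii), the vector field is nowhere conal, and (when $\omega(z)$ is not a singleton) the linearization expands the Perron-Frobenius direction without bound, $\liminf_{t\to\infty}|\partial_z\psi(t,0,z)\mathbf{w}(z)|_{\psi(t,0,z)}=\infty$. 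This unbounded expansion in the dominant direction says precisely that $\calW$ is strongly repelling along $\mathbf{w}$.

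First I would make the repelling structure precise. Using the contraction of the Hilbert metric (Theorem \ref{thm:dsch}) transversally to a case (ii) limit set, I would show that nearby variational directions align with the Perron-Frobenius vector field $\mathbf{w}$ of Theorem \ref{thm:differential_pf} and are then driven away, so that a neighborhood of $\calW$ is foliated by one-dimensional unstable fibres tangent to $\mathbf{w}$. Under forward completeness and boundedness of trajectories, this should confine $\calW$ to a countable union of stable manifolds of the case (ii) minimal sets, each of positive codimension because the flow is strictly expanding in the $\mathbf{w}$-direction there. A Fubini-type disintegration of the volume measure along the Perron-Frobenius fibres, combined with the blow-up of $|\partial_z\psi(t,0,z)\mathbf{w}(z)|$, would force the fibrewise measure of $\calW$ to vanish, and hence $\calW$ to be null.

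The second ingredient is a genericity step in the spirit of Corollary \ref{thm:almost_convergence}. In the monotone, constant-cone case that corollary follows from Hirsch's generic convergence machinery, which rests on the \emph{global} partial order induced by a constant cone on a vector space: strongly ordered trajectories converge, the convergent set is open and dense, and its complement is meager. I would try to localize this argument, replacing global comparison by the infinitesimal Hilbert contraction and the Perron-Frobenius alignment, and patching the resulting local orders across $\calX$ using the smoothness of the cone field and the local conal order $\sqsubseteq_{\calK_\calX}$.

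The hard part --- and the reason the statement is posed as a conjecture rather than a theorem --- is exactly this localization. On a general conal manifold the order $\sqsubseteq_{\calK_\calX}$ need not be a global partial order, since antisymmetry may fail (for instance on the circle), so the comparison arguments underlying Hirsch's theorem have no direct global counterpart. One must upgrade the pointwise contraction and $\mathbf{w}$-alignment into a statement about the measure of basins without a globally consistent ordering, control how the local orders patch across regions where conal curves close up or where several case (i) attractors coexist, and handle the non-uniformity of the expansion rate $\lambda$ across $\calX$ so that the Fubini disintegration remains valid and no positive-measure accumulation of case (ii) sets can occur. Absent a global order to organize the dynamics, I do not see how to close these gaps in general, which is why only the monotone specialization (Corollary \ref{thm:almost_convergence}) is established.
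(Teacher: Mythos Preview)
Your proposal is appropriate in the sense that there is nothing to compare it against: the statement is explicitly a \emph{conjecture} in the paper, and no proof is given. The paper's own treatment is limited to the heuristic remark that the basin of attraction $\calW$ of any case (ii) limit set ``seems strongly repelling,'' and to the observation that the argument can be made rigorous only in the monotone specialization of Corollary \ref{thm:almost_convergence}, where a global partial order is available.

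Your outline tracks that heuristic closely: reduce to showing $\calW$ is null, invoke the unbounded expansion in the $\mathbf{w}$-direction from Corollary \ref{thm:unstable_trajectories}, and try to run a Fubini/foliation argument along Perron-Frobenius fibres. You are also correct in identifying the obstruction --- the failure of antisymmetry of $\sqsubseteq_{\calK_\calX}$ on a general conal manifold blocks the Hirsch-type comparison arguments that underlie the monotone case, and there is no known substitute that controls the measure of $\calW$ from the infinitesimal contraction alone. The additional technical issues you flag (patching local orders across regions where conal curves close, non-uniformity of the expansion rate) are genuine and are precisely why the paper leaves this open. So your write-up is an honest assessment rather than a proof, and that is exactly the status of the statement in the paper.
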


The implication of Conjecture \ref{conj:limit_sets}
would be that any limit set not covered by case (i)
in Theorem \ref{thm:limit_sets} could at best
attract a set of initial conditions of zero measure.

\section{Extended example: differential positivity of the damped pendulum}
\label{sec:pendulum} 

The results of the paper are briefly illustrated on 
the analysis of the classical (adimensional) nonlinear pendulum model:
\begin{equation}
\label{eq:pendulum1}
\Sigma:\left\{
\begin{array}{rcl}
\dot{\vartheta} &=& v \\
\dot{v} &=& -\sin(\vartheta) - k v + u
\end{array}
\right. 
\qquad (\vartheta,v) \in \calX := \mathbb{S}\times \real \ ,
\end{equation}
where $k\geq0$ is the damping coefficient and $u$ is the (constant) torque input.

The analysis of the state matrix $A(\vartheta,k)$ for $\vartheta \in \mathbb{S}$
of the variational system
\begin{equation}
\label{eq:pendulum2}
\mymatrix{c}{ \dot{\delta \vartheta} \\ \dot{\delta v} } =
\underbrace{\mymatrix{cc}{ 0 & 1 \\ -\cos(\vartheta) & - k }}_{=:A(\vartheta,k)}
\mymatrix{c}{ \delta \vartheta \\ \delta v } 
\qquad (\delta \vartheta, \delta v) \in T_{(\vartheta,v)}\calX 
\end{equation}
reveals that the pendulum is  strictly differentially positive for
$k>2$ and differentially positive for $k=2$ with respect to the cone field
\begin{equation}
 \calK_\calX(\vartheta,v) := \{(\delta \vartheta, \delta v)\in T_{(\vartheta,v)}\calX \,|\, \delta \vartheta \geq 0, \delta \vartheta+\delta v \geq 0   \}  \ .
\end{equation}

The differential positivity of \eqref{eq:pendulum1} for $k\geq 2$
has the following simple geometric interpretation.
For any $k \geq 2$ and any value of $\vartheta$, the matrix $A(\vartheta,k)$ 
has only real eigenvalues. The blue and the red lines in Figure \ref{fig:pendulum} 
show the direction of the eigenvectors of 
$A(\vartheta,4)$ (left) - $A(\vartheta,3)$ (center) - $A(\vartheta,2)$ (right),
for sampled values of $\vartheta \in \mathbb{S}$.
The blue eigenvectors ($\delta \vartheta \leq 0$)
are related to the smallest eigenvalues, which is negative for each $\vartheta$. 
The red eigenvectors ($\delta \vartheta \geq 0$) are
related to the largest eigenvalues.
$\calK_\calX(\vartheta,v)$ is represented by the shaded area in Figure \ref{fig:pendulum}.
The black arrows represent the  vector field of the variational
dynamics along the boundary of the cone. By continuity and homogeneity
of the vector field on the boundary of the cone,
$\Sigma$ is  strictly differentially positive for each $k > 2$.
It reduces to a differentially positive system in the limit of $k = 2$. 
The loss of contraction in such a case has a simple geometric explanation: 
one of the two eigenvectors of $A(0,2)$ belongs to the boundary of the cone and 
the eigenvalues of $A(0,2)$ are both in $-1$. 
The issues is clear for $u = 0$ at the equilibrium $x_e := (0,0)$. In such a case
$A(0,2)$ gives the linearization of $\Sigma$ at $x_e$ and the 
eigenvalues in  $-1$ makes the positivity of the linearized system non 
strict for any selection of $\calK_\calX(x_e)$.

\begin{figure}[htbp]
\centering
\includegraphics[width=0.32\columnwidth]{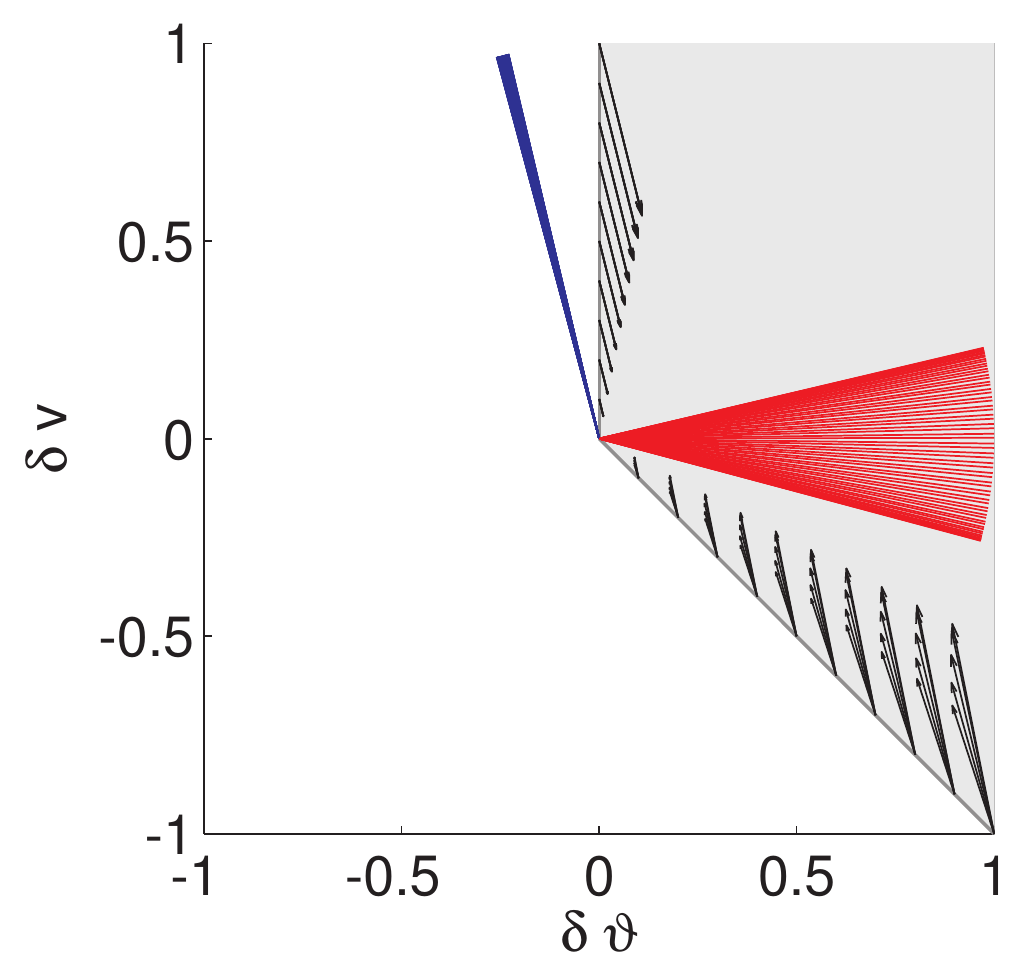}  
\includegraphics[width=0.32\columnwidth]{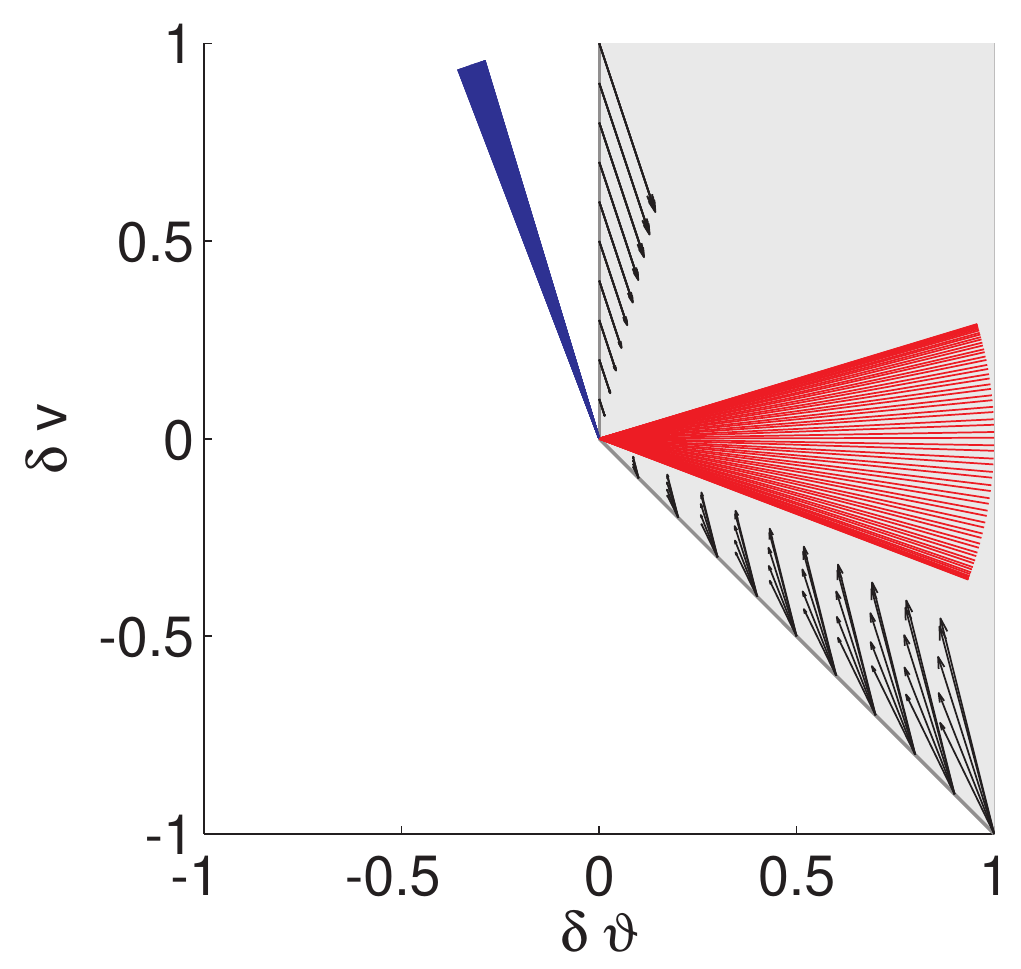}  
\includegraphics[width=0.32\columnwidth]{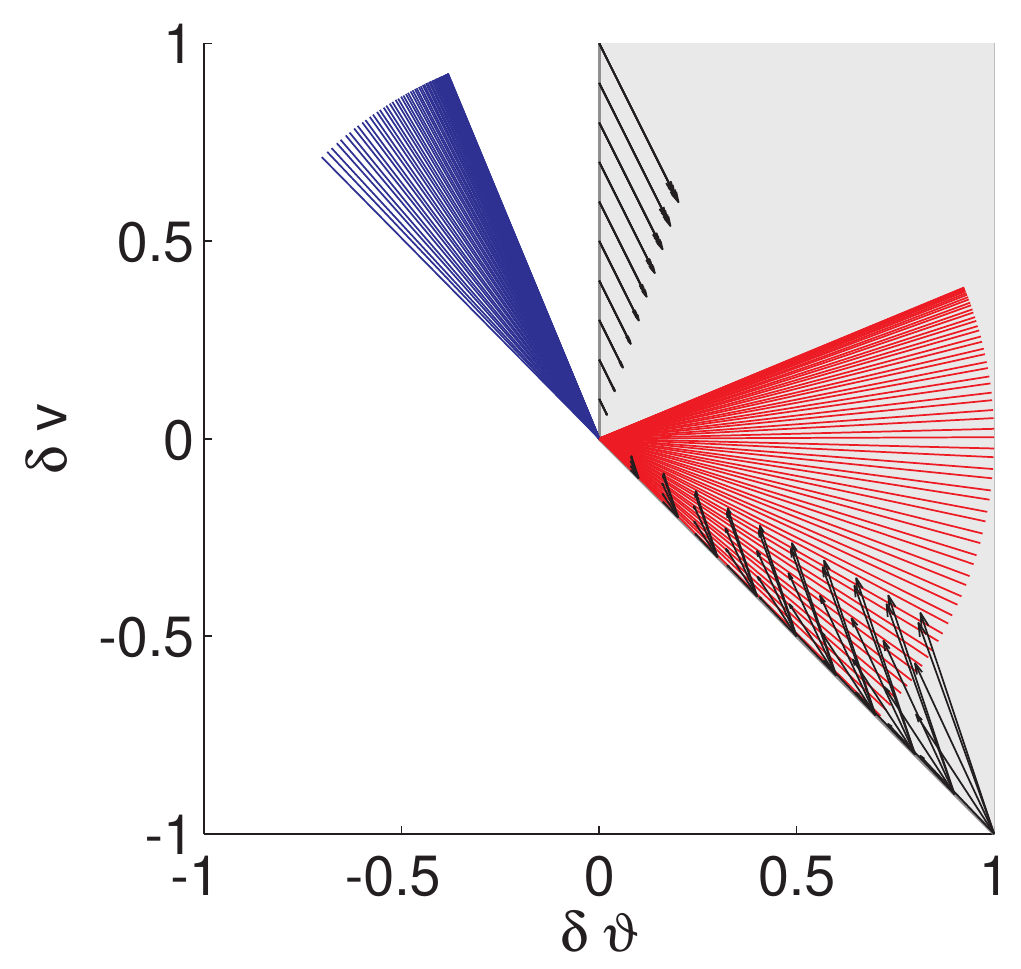}  
\caption{$\calK_\calX(\vartheta,v)$ is shaded in gray. 
The two eigenvector of $A(\vartheta,k)$ for a selection $\vartheta \in \mathbb{S}$
are represented in red and blue.
Red eigenvectors - largest eigenvalues. Blue eigenvectors - smallest eigenvalues.
Left figure: $k=4$. Center figure: $k=3$. Right figure: $k=2$.
}
\label{fig:pendulum} 
\end{figure}

For $k>2$ the trajectories of the pendulum are bounded.
he velocity component converges in finite time to the set
$
\calV := \{v \in\real \,|\, -\rho \frac{|u|+1}{k} \leq v \leq \rho \frac{|u|+1}{k}\}
$,
for any given $\rho > 1$, 
since the kinetic energy $E := \frac{v^2}{2}$ satisfies
$\dot{E} = -kv^2 + v(u - \sin(\vartheta)) \leq (|u|+1-k|v|)|v|  < 0$
for each $|v| > \frac{|u|+1}{k}$. 
The compactness of the set $\mathbb{S}\times \calV$ opens the way
to the use of the results of Section \ref{sec:limit_sets}. 
For $u = 1+ \varepsilon$, $\varepsilon > 0$, 
we have that 
$\dot{v} \geq  \varepsilon - kv$ which, after a transient,
guarantees that 
$\dot{v} > 0$, thus eventually $\dot{\vartheta} > 0$.
Denoting by $f(\vartheta, v)$ the right-hand side in \eqref{eq:pendulum1},
it follows that, after a finite amount of time,
every trajectory belongs to a forward invariant set $\calC \subseteq \mathbb{S}\times \calV$
such that $f(\vartheta,v) \in \mbox{int}\calK_\calX(\vartheta,v)$.
By Corollary \ref{thm:PBtheorem}, there is a unique
attractive limit cycle in $\calC$.

It is of interest to interpret differential positivity against the textbook analysis \cite{Strogatz1994}. 
Following \cite[Chapter 8]{Strogatz1994}, 
Figure \ref{fig:critical_damping} summarizes the 
qualitative behavior of the pendulum for different
values of the damping coefficient $k\geq 0$ and of 
the constant torque input $u\geq 0$
(the behavior of the pendulum for $u\leq0$ is symmetric).
\begin{figure}[bp]
\centering
\includegraphics[width=0.9\columnwidth]{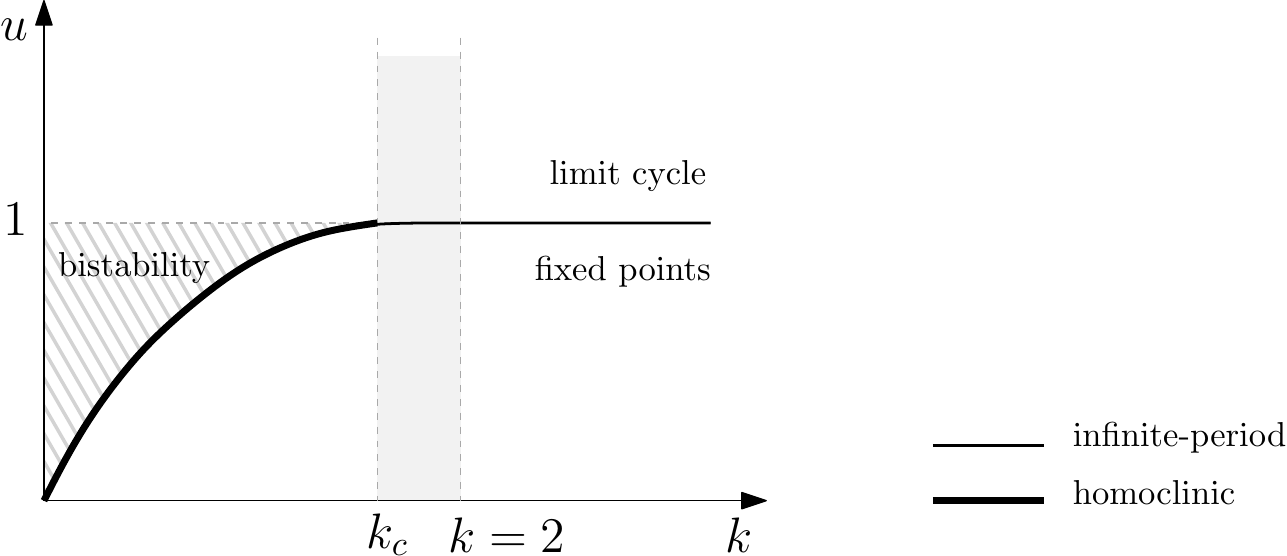}  
\caption{The qualitative behavior of the pendulum for large/small values of torque and damping,
as reported in \cite[p. 272]{Strogatz1994}.}
\label{fig:critical_damping}
\end{figure} 
The nonlinear pendulum cannot be 
differentially positive for arbitrary values of the torque when 
$k\leq k_c$. This is because the region of bistable behaviors
(coexistence of small and large oscillations) is delineated by 
a homoclinic orbit, which is ruled out by differental positivity
(Corollary \ref{thm:homoclinic}).
For instance, looking at Figure \ref{fig:critical_damping},
for any $k< k_c$ there exists a 
value $u = u_c(k)$ for which the pendulum encounters
a \emph{homoclinic bifurcation} 
(see \cite[Section 8.5]{Strogatz1994} and
Figures 8.5.7 - 8.5.8 therein).
In contrast, the \emph{infinite-period bifurcation} 
at $k > 2$, $u=1$ \cite[Chapter 8]{Strogatz1994}
is compatible with differential positivity.

It is plausible that the ``grey'' area between $k_c$ and $k=2$ 
is a region where the nonlinear pendulum is differentially positive
over a uniform time-horizon rather than pointwise. A detailed analysis
of the region $k_c \leq k < 2$ is postponed to a further publication.

\section{Conclusions}

The paper introduces the concept of differential positivity, a local characterization of
monotonicity through the infinitesimal contraction properties of a cone field. The theory
of differential positivity reduces to the theory of monotone systems when the state-space
is linear and when the cone field is constant.
The differential framework allows for a generalization of the Perron-Frobenius theory 
on nonlinear spaces and/or  non constant cone fields.
The paper focuses on the characterization of limit sets of differentially positive systems,
showing that those systems enjoy properties akin to the Poincare-Bendixson theory of
planar systems. In particular, differential positivity is seen as a novel analysis tool for
the analysis of limit cycles and as a property that precludes complex behaviors in
a significant class of nonlinear systems.
Many issues of interest remain to be addressed beyond the material of the present paper.
{The most pressing of those is probably the topic of feedback
interconnections: negative feedback interconnections of monotone systems
are known to provide a key mechanism of oscillation \cite{Gedeon2007,Gedeon2010} and it is appealing to
analyze their differential positivity by inferring a (non-constant) cone field
from the order properties of the subsystems and from the interconnection structure only.
More generally,  the construction of particular cone fields for interconnections of relevance in system theory
(e.g. Lure systems) as well as  the relationship between differential positivity and horizontal
contraction recently studied in \cite{Forni2014} will be the topic of further research.}

\appendix

\subsection{Proofs of Section \ref{sec:Perron-Forbenius}}
\label{sec:PF_proofs}

\begin{proofof}\emph{Theorem \ref{thm:dsch}:}
{
Using $\Gamma_{(x_1,u_1,x_2,u_2)}$ to denote the
linear and invertible mapping 
$\Gamma(x_1,u_1,x_2,u_2)$
that satisfies 
$\Gamma(x_1,u_1,x_2,u_2)\calK_\calX(x_1,u_1) = \calK_\calX(x_2,u_2)$
for each $(x_1,u_1),(x_2,u_2)\in\calX\times\calU$ (see Section \ref{sec:conal_manifold}),
and $d_{*(x,u)}$ to denote $d_{\calK_\calX(x,u)}$ (for readability), 
note that the Hilbert metric satisfies
\begin{equation}
\label{eq:Gamma_invariance}
d_{*(x_2,u_2)}(v,w) = 
 d_{*(x_1,u_1)}(\Gamma_{(x_1,u_1,x_2,u_2)}^{-1}v,\Gamma_{(x_1,u_1,x_2,u_2)}^{-1}w) 
\end{equation}
 for any $(x_1,u_1),(x_2,u_2)\in\calX\times\calU$, and 
any $v,w \in \calK_{\calX}(x_2,u_2)$.
\eqref{eq:Gamma_invariance} follows by the combination of \eqref{eq:Mm} with the 
identity
$\Gamma_{(x_1,u_1,x_2,u_2)}^{-1}\mathrm{bd}\calK_\calX(x_2,u_2)=\mathrm{bd}\calK_\calX(x_1,u_1)$
(by linearity). 

Along any given solution pair $(x(\cdot),u(\cdot)):[t_0,\infty)\to\calX \in \Sigma$ 
define the linear operator 
\begin{equation}
\label{eq:A_definition}
\begin{array}{l}
A_{(x(\cdot),u(\cdot))}[\tau_2,\tau_1] := \vspace{1mm}\\
:= \Gamma_{(x(\tau_1),u(\tau_1),x(\tau_2),u(\tau_2))}^{-1}\partial_{x(\tau_1)}\psi(\tau_2,\tau_1,x(\tau_1),u(\cdot)) 
\end{array}
\end{equation}
where $t_0\leq \tau_1 \leq \tau_2$.
Thus, using $d_{*(t)}$ to denote $d_{\calK_\calX(x(t),u(t))}$,
and recalling that, in Theorem \ref{thm:dsch},
$x(t) = \psi(t,t_0,x(t_0),u(\cdot))$,
$\delta x_1(t) = \partial_{x(t_0)}\psi(t,t_0,x(t_0),u(\cdot))\delta x_1(t_0)$,
and $\delta x_2(t) = \partial_{x(t_0)}\psi(t,t_0,x(t_0),u(\cdot))\delta x_2(t_0)$,
for each $t\geq t_0$ we get 
\begin{equation}
\label{eq:Hilbert_invariance6}
\begin{array}{l}
d_{*(t)}(\delta x_1(t),\delta x_2(t)) \ = \\
= \  d_{*(t_0)}(A_{(x(\cdot),u(\cdot))}[t,t_0]\delta x_1(t_0),A_{(x(\cdot),u(\cdot))}[t,t_0]\delta x_2(t_0))  \\
\leq \  d_{*(t_0)}(\delta x_1(t_0),\delta x_2(t_0))  \ .
\end{array}
\end{equation}
The identity follows by the combination of 
\eqref{eq:Gamma_invariance}, \eqref{eq:A_definition},
and differential positivity.
The inequality follows from the fact that 
$A_{(x(\cdot),u(\cdot))}[t,t_0]$ is a linear operator in $\calK_\calX(x(t_0),u(t_0)) \to \calK_\calX(x(t_0),u(t_0))$,
as in \cite{Pratt1982, Bushell1973}. 

Strict differential positivity guarantees that there exists $T>0$ such
that, for any given $\tau \in [t_0,\infty)$, 
\begin{equation}
\label{eq:A_contraction}
A_{(x(\cdot),u(\cdot))}[\tau+T,\tau] \calK_\calX(x(\tau),u(\tau)) \subseteq \calR_\calX(x(\tau),u(\tau)) \ .
\end{equation}
Thus, following \cite{Bushell1973,Pratt1982}, 
define \emph{projective diameter} 
$\Delta_T := \sup \{ d_{\calK_\calX(x,u)}(v_1,v_2) \,|\, v_1,v_2 \in \calR_\calX (x,u) \} < \infty$
and \emph{contraction ratio} $\mu_T := \tanh \left(\frac{\Delta_T}{4}\right) < 1$.
Then, using again $d_{*(t)}$ to denote $d_{\calK_\calX(x(t),u(t))}$,
\cite[Theorem 3.2]{Bushell1973} and \cite[Proposition 3.14]{Pratt1982}
guarantee the inequality 
\begin{equation}
\begin{array}{l}
d_{*(\tau+T)}(\delta x_1(\tau+T),\delta x_2(\tau+T)) \ = \\
=   d_{*(\tau)}(A_{(x(\!\cdot\!),u(\!\cdot\!))}[\tau\!+\!T,\!\tau]\delta x_1(\tau),A_{(x(\!\cdot\!),u(\!\cdot\!))}[\tau\!+\!T,\!\tau]\delta x_2(\tau))  \\
\leq  \mu_T d_{*(\tau)}(\delta x_1(\tau),\delta x_2(\tau))  \ ,
\end{array}
\end{equation}
for all $\tau \geq t_0$.
By the semigroup property, for any integer $k$, and $t \geq t_0 + kT$ we get
\begin{equation}
\label{eq:Hilbert_invariance11}
d_{*(t)}(\delta x(t),\delta y(t)) 
\leq \mu_T^k d_{*(t_0)}(\delta x(t_0),\delta y(t_0)) 
\end{equation}
which establishes the exponential convergence.

Finally, combining \eqref{eq:Hilbert_invariance11} and \eqref{eq:A_contraction},
for all $t\geq t_0 + (k+1)T$ we get $d_{*(t)}(\delta x(t),\delta y(t)) \leq \mu_T^k \Delta_T$.}
\end{proofof} \vspace{1mm}

\begin{proofof}\emph{Theorem \ref{thm:differential_pf}:}
Consider the solution pair $(z(\cdot),u(\cdot)) \in \Sigma$ such that $z(t) = x$ and define
$
\calC(t,t_0,x,u(\cdot)) := \partial_{z(t_0)}\psi(t,t_0,z(t_0),u(\cdot))\calK_{\calX}(z(t_0),u(t_0))
$.
By construction, strict differential positivity
guarantees that 
\begin{equation}
\label{eq:nonlin_pf_proof2}
\calC(t,t-T_2,x,u(\cdot)) \subseteq \calC(t,t-T_1,x,u(\cdot)) 
\end{equation}
for each $T_2 \geq T_1\geq 0$.
{Also, from Theorem \ref{thm:dsch}
there exists $T>0$ such that 
\begin{equation}
\label{eq:nonlin_pf_proof3}
\limsup\nolimits\limits_{k\to\infty} \{d_{\calK_\calX(x,u(t))}(v_1,\!v_2) \,|\, v_1,\!v_2 \!\in\! \calC(t,t-kT,x,u(\cdot))\} 
= 0,
\end{equation}
since the right-hand side is bounded from above by $\lim_{k\to\infty} \mu_T^{k-1} \Delta_T$,
where $\Delta_T < \infty$ and $\mu_T<1$ are respectively the projective diameter 
and the contraction ratio defined in the proof of Theorem \ref{thm:dsch}.}

Following \cite[Chapter 4, Exercise 4.3]{Rockafellar2004},
the combination of \eqref{eq:nonlin_pf_proof2} and \eqref{eq:nonlin_pf_proof3}
guarantees that 
the set $\calC(t,t-kT,x,u(\cdot))\cap \calB(x)$ converges to the singleton 
$\{\mathbf{w}_{u(\cdot)}(x,t)\}\subseteq \mathrm{int}\calK_\calX(x,u(t)) \cap\calB(x)$
as $k\to \infty$.
We write $\mathbf{w}_{u(\cdot)}(x,t)$ since the limit
of the set $\calC(t,t-kT,x,u(\cdot))\cap \calB(x)$ for $k \to \infty$ 
depends only on
the input signal $u(\cdot)$, the state $x$, and the time $t$.
\end{proofof} \vspace{1mm}

\begin{proofof}\emph{Corollary \ref{cor:differential_time-invariant_pf}:}
To show \emph{time-invariance}, 
consider $u(\cdot) = u$.
Under the action of the constant input $u$, consider the 
trajectories $z(\cdot)$ and $y(\cdot)$ such that $z(t) = x$ and $y(t+T) = x$,
{ for some $t,T\in \real$}.
Using $\calB(x) := \{\delta x \in T_x\calX \,|\, |\delta x|_x=1\}\subseteq T_x\calX$,
from Theorem \ref{thm:differential_pf},
$\lim\nolimits\limits_{t_0\to -\infty}(\partial_{z(t_0)}\psi(t,t_0,z(t_0),\!u)\calK_{\calX}(z(t_0),\!u))\cap\calB(x) \!= \mathbf{w}_u(x,t)$
and $\lim\nolimits\limits_{t_0\to -\infty}(\partial_{y(t_0)}\psi(t+T,t_0,y(t_0),u)\calK_{\calX}(y(t_0),u))\cap\calB(x) = \mathbf{w}_u(x,t+T)$.
However, for constant $u$, $\Sigma$ is time invariant, therefore uniqueness of trajectories from initial conditions 
guarantees that
$z(\tau) = y(\tau+T)$ for all $\tau \in (-\infty,t)$, which guarantees $\calK_{\calX}(z(t_0),u)) = \calK_{\calX}(y(t_0+T),u))$.
$\mathbf{w}_u(x,t) = \mathbf{w}_u(x,t+T)$ follows.

{
To show \emph{continuity}, consider the family of vector fields $\{g_k\}$
given by $g_0(x) \in \calK_\calX(x,u)\cap \calB(x)$
and
$g_{k+1}(\psi(T,0,x,u)) := \frac{\partial_x\psi(T,0,x,u) g_{k}(x)}{|\partial_x\psi(T,0,x,u) g_{k}(x)|_x}$.
Note that 
each $g_k$ is a continuous vector field because $\partial_x\psi(T,0,x,u)$ is continuous in $x$
and the Riemannian structure is smooth in $x$.
Moreover,
$d_{\calK_\calX(x,u)}(g_{k+1}(x),\mathbf{w}_u(x)) \leq \mu_T^{k} \Delta_T$
for all $x \in \calX$ and all $k \geq 1$,
where $\mu_T$ and $\Delta_T$ are respectively the contraction ratio and the projective diameter
defined in the proof of Theorem \ref{thm:dsch}. Indeed, $g_k$ converge uniformly to $\mathbf{w}_u$ 
(with respect to the Hilbert metric at $x$).

By contradiction, suppose that the Perron-Frobenius vector field is not continuous at $x$.
Then, following \cite[Definition 2.1]{boothby2003}, in local coordinates, the
$i$th component $[\mathbf{w}_u(x)]_i$ of $\mathbf{w}_u(x)$ is not a continuous function: 
there exist $\varepsilon > 0$, a sequence of points $y_j\to x$ as $j \to \infty$, and a bound $J$ 
such that the absolute value
$|[\mathbf{w}_u(x)]_i - [\mathbf{w}_u(y_j)]_i| \geq \varepsilon$ for any $j\geq J$.
By the uniform convergence of $g_k$ to $\mathbf{w}_u$, there exists a bound $K$ such that 
$|[g_k(x)]_i - [\mathbf{w}_u(x))]_i| \leq \frac{\varepsilon}{3}$
and
$|[g_k(y_j)]_i - [\mathbf{w}_u(y_j))]_i| \leq \frac{\varepsilon}{3}$ for all $k\geq K$ and all $j$.
Therefore,
 $|[g_k(x)]_i - [g_k(y_j)]_i| \geq \frac{\varepsilon}{3}$ for all $k\geq K$ and $j\geq J$, 
 which contradicts the continuity of $g_k$.
 }
 
Finally, the coincidence between the Perron-Frobenius vector field and 
the Perron-Frobenius vector for \emph{linear systems}
is a straightforward consequence of \eqref{eq:PF_vector_field}.
\end{proofof}

\subsection{Proofs of Section \ref{sec:limit_sets}}
\label{sec:proof_main_theorem}
For readability, in what follows we use
$\psi_t(x):=\psi(t,0,x)$,
$\partial \psi_t(x):=\partial_x \psi(t,0,x)$,
and $d_x(\cdot,\cdot) := d_{\calK_\calX(x)}(\cdot,\cdot)$.
Recall that the pair
$(\psi_t(x),\partial \psi_t(x)\delta x)$ is the trajectory
of the prolonged system $\delta \Sigma$ given by
$\dot{x}=f(x)$, $\dot{\delta x} = \partial f(x) \delta x$
from the initial condition $(x,\delta x) \in T \calX$.

We develop first some technical results.
The claims of the next two lemmas are about the boundedness of the
trajectories of the variational systems. The claims holds for both
continuous and discrete systems (closed or with constant inputs).

{
\begin{lemma}
\label{lem:boundedness1}
Let $u(\cdot) = u$ be constant. Under the assumptions of Theorem \ref{thm:differential_pf}, 
for any $x\in \calX$ and any $\delta x \in T_x \calX$, if
$
|\partial \psi_t(x)\mathbf{w}(x)|_{\psi_t(x)}  < \infty$
then 
$
\limsup\nolimits\limits_{t\to\infty} |\partial \psi_t(x)\delta x|_{\psi_t(x)} < \infty
$.
\end{lemma}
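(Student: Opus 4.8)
The plan is to use that the Perron--Frobenius vector field is the \emph{dominant} direction of the linearized flow: the cone order squeezes every tangent vector between the origin and a positive multiple of $\mathbf{w}$, and the magnitude of that multiple is exactly the quantity the hypothesis controls (which I read as $\limsup_{t\to\infty}|\partial\psi_t(x)\mathbf{w}(x)|_{\psi_t(x)}<\infty$). First I would show that the Perron--Frobenius direction is \emph{flow-invariant}. Taking the backward/forward complete orbit $z(s)=\psi_s(x)$ under the constant input $u$, I apply $\partial\psi_t(x)$ to the characterization \eqref{eq:PF_vector_field} of $\mathbf{w}(x)$ and use the cocycle identity $\partial\psi_t(x)\,\partial_{z(s)}\psi(0,s,z(s),u)=\partial_{z(s)}\psi(t,s,z(s),u)$ together with continuity of the set limit, obtaining $\partial\psi_t(x)\,\mathrm{span}_{\geq 0}\mathbf{w}(x)=\mathrm{span}_{\geq 0}\mathbf{w}(\psi_t(x))$, i.e. $\partial\psi_t(x)\mathbf{w}(x)=c(t)\,\mathbf{w}(\psi_t(x))$ with $c(t)>0$. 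Since $\mathbf{w}(\psi_t(x))\in\calB(\psi_t(x))$, this gives $c(t)=|\partial\psi_t(x)\mathbf{w}(x)|_{\psi_t(x)}$, so the hypothesis is precisely $\limsup_{t\to\infty}c(t)<\infty$.

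Next I would reduce to cone vectors and produce an order sandwich. Since each cone is solid, $\calK_\calX(x)-\calK_\calX(x)=T_x\calX$, so I write $\delta x=\delta x_+-\delta x_-$ with $\delta x_\pm\in\calK_\calX(x)$; by linearity of $\partial\psi_t(x)$ and the triangle inequality it suffices to bound $|\partial\psi_t(x)v|_{\psi_t(x)}$ for $v\in\calK_\calX(x)$. For such $v$, because $\mathbf{w}(x)\in\mathrm{int}\calK_\calX(x)$ the quantity $M_0:=M_{\calK_\calX(x)}(v,\mathbf{w}(x))$ of \eqref{eq:Mm} is finite, so $M_0\mathbf{w}(x)-v\in\calK_\calX(x)$. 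Applying the positive operator $\partial\psi_t(x)$ (differential positivity) and the invariance of Step~1 yields, in the cone order $\preceq$ of $\calK_\calX(\psi_t(x))$, the sandwich $0\preceq\partial\psi_t(x)v\preceq M_0\,c(t)\,\mathbf{w}(\psi_t(x))$. Each $\calK_\calX(\psi_t(x))$ is a closed pointed solid convex cone in the finite-dimensional space $T_{\psi_t(x)}\calX$, hence \emph{normal}: the order interval $\{a:0\preceq a\preceq \mathbf{w}(\psi_t(x))\}$ is compact, and writing $C_t$ for its radius in $|\cdot|_{\psi_t(x)}$ one gets $|\partial\psi_t(x)v|_{\psi_t(x)}\leq C_t\,M_0\,c(t)$. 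With $M_0$ fixed and $\limsup_t c(t)<\infty$, the lemma follows as soon as $\sup_t C_t<\infty$.

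The main obstacle is exactly this uniform normality of the cone field along the orbit. The Hilbert metric cannot help here: being projective it is blind to magnitude (indeed $d_{\psi_t(x)}(\partial\psi_t(x)v,\mathbf{w}(\psi_t(x)))$ is bounded by Theorem~\ref{thm:dsch}, yet this says nothing about $|\partial\psi_t(x)v|_{\psi_t(x)}$), so the order-to-norm conversion through normality is unavoidable. To control $C_t$ I would use the $\Gamma$-invariance of the cone field to transport the order interval back to the fixed reference cone $\calK_\calX(x)$ and bound its radius uniformly; where the orbit $\{\psi_t(x)\}$ is precompact (the setting of Theorem~\ref{thm:limit_sets}) this is immediate from continuity of $x\mapsto C_x$ on a compact set. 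As a sanity check, in the linear constant-cone case $\partial\psi_t=e^{At}$, $C_t$ is a genuine constant, and the statement reduces to the familiar fact that the Perron--Frobenius eigenvalue dominates the spectral abscissa, so $\limsup_t|e^{At}\mathbf{w}|<\infty$ forces boundedness of $e^{At}\delta x$ for every $\delta x$.
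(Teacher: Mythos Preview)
Your proposal is correct and reaches the same conclusion as the paper, but by a somewhat different route that is worth contrasting.

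The paper's proof is a short contradiction argument: assume $|\partial\psi_t(x)\delta x|_{\psi_t(x)}$ is unbounded, and push $\delta x$ into the cone by the single shift $\delta y:=\delta x+\alpha\,\mathbf{w}(x)$ for large $\alpha$ (rather than your difference decomposition $\delta x=\delta x_+-\delta x_-$). Then $|\partial\psi_t(x)\delta y|$ is also unbounded, yet $\delta y\in\calK_\calX(x)$, so by projective contraction (Theorem~\ref{thm:dsch}) $\partial\psi_t(x)\delta y$ becomes projectively aligned with $\partial\psi_t(x)\mathbf{w}(x)$; the paper then argues (under the symbol $\simeq$) that this forces comparable norms, hence boundedness, a contradiction. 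Your approach is direct and constructive: the order sandwich $0\preceq\partial\psi_t(x)v\preceq M_0\,c(t)\,\mathbf{w}(\psi_t(x))$ followed by normality of the cone.

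What each buys: the paper's version is shorter (one shift into the cone instead of two cone summands, and no explicit normality constant), but it hides exactly the step you flag as the ``main obstacle''---converting the projective/order relation into a norm bound uniformly along the orbit. Your version makes that conversion explicit and correctly isolates the uniform normality constant $\sup_t C_t<\infty$ as the one place where geometry of the cone field along the orbit enters. Your proposed resolution via precompactness of $\{\psi_t(x)\}$ is the natural one and is precisely the standing hypothesis of Theorem~\ref{thm:limit_sets}, which is the only place the lemma is invoked; the paper's proof implicitly leans on the same uniformity without saying so. The flow-invariance $\partial\psi_t(x)\mathbf{w}(x)=c(t)\mathbf{w}(\psi_t(x))$ that you derive from \eqref{eq:PF_vector_field} is also used in the paper (see the discussion after Corollary~\ref{cor:differential_time-invariant_pf}), so that step is on solid ground.
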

\begin{lemma}
\label{lem:boundedness2}
Let $u(\cdot) = u$ be constant. Under the assumptions of Theorem \ref{thm:differential_pf},
let $x$ be any point of $\calX$ and suppose that there exists 
$\delta x \in \mbox{int}\calK_\calX(x)$ such that
$
\limsup\nolimits\limits_{t\to\infty} |\partial \psi_t(x)\delta x|_{\psi_t(x)}  < \infty
$.
Then, 
$
\limsup\nolimits\limits_{t\to\infty} |\partial \psi_t(x)\mathbf{w}(x)|_{\psi_t(x)} < \infty
$.
\end{lemma}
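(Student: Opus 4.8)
The plan is to push the conal relation between $\mathbf{w}(x)$ and $\delta x$ forward along the variational flow and then convert the resulting order inequalities into norm bounds. Since both $\mathbf{w}(x)$ and $\delta x$ lie in $\mathrm{int}\,\calK_\calX(x)$, the quantities in \eqref{eq:Mm} satisfy $0 < m := m_{\calK_\calX(x)}(\mathbf{w}(x),\delta x)$ and $M := M_{\calK_\calX(x)}(\mathbf{w}(x),\delta x) < \infty$: finiteness of $M$ holds because $\lambda\,\delta x - \mathbf{w}(x) \to \delta x \in \mathrm{int}\,\calK_\calX(x)$ as $\lambda$ grows, and positivity of $m$ because small perturbations of the interior point $\mathbf{w}(x)$ stay interior. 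By the very definition of these quantities, $\mathbf{w}(x) - m\,\delta x \in \calK_\calX(x)$ and $M\,\delta x - \mathbf{w}(x) \in \calK_\calX(x)$; writing $a \preceq_y b$ for $b-a \in \calK_\calX(y)$, this reads $m\,\delta x \preceq_x \mathbf{w}(x) \preceq_x M\,\delta x$.

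Next I would invoke positivity of the variational flow. Differential positivity makes $\partial\psi_t(x)$ a positive linear operator, $\partial\psi_t(x)\calK_\calX(x) \subseteq \calK_\calX(\psi_t(x))$, so applying it to the two memberships above and using linearity gives, for every $t \ge 0$, $m\,\partial\psi_t(x)\delta x \preceq_{\psi_t(x)} \partial\psi_t(x)\mathbf{w}(x) \preceq_{\psi_t(x)} M\,\partial\psi_t(x)\delta x$. The crucial feature is that the constants $m,M$ are the \emph{same} for all $t$, since they are fixed at the initial time and merely transported by the positive map. In particular $0 \preceq_{\psi_t(x)} \partial\psi_t(x)\mathbf{w}(x) \preceq_{\psi_t(x)} M\,\partial\psi_t(x)\delta x$.

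The last step is to turn this sandwich into a bound on $|\partial\psi_t(x)\mathbf{w}(x)|_{\psi_t(x)}$. Here I would use \emph{normality} of the cone: for a solid, pointed, closed cone in a finite-dimensional inner-product space there is a constant $C$ such that $0 \preceq a \preceq b$ implies $|a| \le C\,|b|$. Applying this in $T_{\psi_t(x)}\calX$ yields $|\partial\psi_t(x)\mathbf{w}(x)|_{\psi_t(x)} \le C_t\,M\,|\partial\psi_t(x)\delta x|_{\psi_t(x)}$, where $C_t$ is the normality constant of $\calK_\calX(\psi_t(x))$ for $|\cdot|_{\psi_t(x)}$. Taking $\limsup_{t\to\infty}$ then gives $\limsup_{t\to\infty}|\partial\psi_t(x)\mathbf{w}(x)|_{\psi_t(x)} \le \big(\sup_t C_t\big)\,M\,\limsup_{t\to\infty}|\partial\psi_t(x)\delta x|_{\psi_t(x)} < \infty$, which is the claim, provided the $C_t$ are uniformly bounded.

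The main obstacle is exactly this uniformity of $C_t$. The normality constant of a solid pointed cone is a continuous function of the pair (cone, metric) as long as the cone stays solid and pointed, and along the orbit the smooth cone field and the Riemannian metric vary continuously; hence $\sup_t C_t < \infty$ whenever $\psi_t(x)$ remains in a compact set, which is the regime in which the lemma is used inside Theorem \ref{thm:limit_sets}. Alternatively, the homogeneity map $\Gamma(\psi_t(x),x)$ of Section \ref{sec:conal_manifold} transports every $\calK_\calX(\psi_t(x))$ back to the fixed cone $\calK_\calX(x)$, where a single normality constant suffices, at the cost of controlling the metric distortion of $\Gamma$ along the orbit. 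I expect the order-pushforward with fixed constants $m,M$ (via positivity of $\partial\psi_t(x)$) to be routine, and this uniform control of the normality constant to be the only delicate point; it is the mirror image of the argument used for Lemma \ref{lem:boundedness1}, where $\mathbf{w}(x)$ plays the role of the upper bound and a general $\delta x$ is split through the solid cone.
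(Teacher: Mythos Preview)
Your argument is correct but follows a different route from the paper. The paper does \emph{not} invoke cone normality. Instead it writes $\delta x = \alpha\,\mathbf{w}(x) + \beta\,\delta z$ with $\alpha>0$ small and $\delta z \in \calK_\calX(x)$ (possible because $\delta x$ is interior), pushes forward by $\partial\psi_t(x)$, and then uses the projective contraction of Theorem~\ref{thm:dsch} together with the flow-invariance of the Perron--Frobenius direction to conclude that, asymptotically, both $\partial\psi_t(x)\mathbf{w}(x)$ and $\partial\psi_t(x)\delta z$ are nonnegative multiples of the \emph{same} unit vector $\mathbf{w}(\psi_t(x))$. The norm of the sum is then just the sum of the coefficients times $|\partial\psi_t(x)\mathbf{w}(x)|_{\psi_t(x)}$, giving $|\partial\psi_t(x)\delta x|_{\psi_t(x)} \ge \alpha\,|\partial\psi_t(x)\mathbf{w}(x)|_{\psi_t(x)}$ with no cone-dependent constant whatsoever. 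Your order-pushforward uses only positivity (not strict positivity) and is conceptually appealing, but the price is precisely the uniform bound $\sup_t C_t < \infty$ that you flag: the paper's argument sidesteps this and therefore proves the lemma under the hypotheses of Theorem~\ref{thm:differential_pf} alone, whereas yours implicitly needs the orbit $\{\psi_t(x)\}$ to sit in a compact set, an assumption that appears only later in Theorem~\ref{thm:limit_sets}. In the setting where the lemma is actually applied both proofs go through; for the lemma as stated, the paper's decomposition via the Perron--Frobenius direction is the cleaner path.
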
 \vspace{1mm}
}

\begin{proofof}\emph{Lemma \ref{lem:boundedness1}:}
For the first item suppose that the implications does not hold
and $ |\partial \psi_t(x)\delta x|_{\psi_t(x)}$ grows unbounded.
Take the vector $\delta y = \delta x + \alpha \mathbf{w}(x)$.
Note that for $\alpha$ sufficiently large $\delta y \in \calK_\calX(x)$. 
{
These facts and the linearity of $\partial \psi_t(x)$ guarantee that
$|\partial \psi_t(x)\delta y|_{\psi_t(x)}$ grows unbounded and 
there exists $\overline{t}$ sufficiently large such that 
either (i) $\partial \psi_{\overline{t}}(x) \delta y \notin \calK_\calX(\psi_{\overline{t}}(x))$,
which contradicts differential positivity,
or (ii) $\partial \psi_{\overline{t}}(x) \delta y \simeq \rho \psi_{\overline{t}}(x) \mathbf{w}(x)$
where $\beta \in \real$ is a scaling factor.
Thus, for all $t\geq \overline{t}$, by linearity, 
$|\partial \psi_{t}(x) \delta y|_{\psi_{t}(x)} 
\simeq \rho |\psi_{t}(x) \mathbf{w}(x)|_{\psi_{t}(x)}$ 
which grows unbounded contradicting the assumption on $|\psi_{t}(x) \mathbf{w}(x)|_{\psi_{t}(x)}$.}
\end{proofof} \vspace{1mm}

\begin{proofof}\emph{Lemma \ref{lem:boundedness2}:}
For the second item,
consider any decomposition 
$\delta x = \alpha \mathbf{w}(x) + \beta \delta z$
where $\alpha,\beta \in \real_{\geq 0}$ and $\delta z \in \calK_\calX(x)$,
which can always be achieved for $\alpha$ sufficiently small
since $\delta x \in \mbox{int}\calK_\calX(x)$. Then, 
$
\partial\psi_t(x)\delta x
=\partial\psi_t(x)[\alpha \mathbf{w}(x) + \beta \delta z]  
= \alpha \partial\psi_t(x) \mathbf{w}(x) + \beta \partial\psi_t(x) \delta z 
$
and, by projective contraction, $\partial\psi_t(x) \delta z$ converges asymptotically to
$\rho_t \mathbf{w}(\psi_t(x))$ for some $\rho_t \in \real_{\geq 0}$. Thus,
\begin{equation}
\begin{array}{l}
\limsup\nolimits\limits_{t\to\infty} |\partial \psi_t(x) \delta x|_{\psi_t(x)} = \\
= \limsup\nolimits\limits_{t\to\infty} |\alpha \partial \psi_t(x) \mathbf{w}(x) + \beta \rho_t \mathbf{w}(\psi_t(x))|_{\psi_t(x)} \\ 
= \limsup\nolimits\limits_{t\to\infty} \left|\alpha \partial \psi_t(x) \mathbf{w}(x) + \beta \rho_t \frac{\partial\psi_t(x) \mathbf{w}(x)}{|\partial\psi_t(x) \mathbf{w}(x)|_{\psi_t(x)}}\right|_{\psi_t(x)} \\ 
=  \limsup\nolimits\limits_{t\to\infty}  \left(\alpha+  \frac{\beta\rho_t}{|\partial\psi_t(x) \mathbf{w}(x)|_{\psi_t(x)}}\right) |\partial\psi_t(x) \mathbf{w}(x)|_{\psi_t(x)} \\
\geq \alpha \limsup\nolimits\limits_{t\to\infty} |\partial\psi_t(x) \mathbf{w}(x)|_{\psi_t(x)} \ . \vspace{-5mm}
\end{array} 
\end{equation}
\end{proofof}

The next lemma shows that any
trajectory of a continuous and closed differentially positive system
whose motion follows the Perron-Frobenius vector field
either converges to a fixed point or 
defines a periodic orbit.
In what follows we will use $\psi_t(x):=\psi(t,0,x)$
and $\partial \psi_t(x):=\partial_x \psi(t,0,x)$.

\begin{lemma}
\label{lem:Hirsch}
Under the assumptions of Theorem \ref{thm:limit_sets},
consider any $x$ such that,
for all $t$, $f(\psi_t(x)) = \lambda(\psi_t(x)) \mathbf{w}(\psi_t(x))$
and $|\lambda(\psi_t(x))| \geq \rho>0$.
Then, the trajectory $\psi_t(x)$ is periodic.
\end{lemma}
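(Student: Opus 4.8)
The plan is to read the hypothesis as saying that the orbit of $x$ is a \emph{monotone conal curve that never slows down}, and then to run a monotone Poincar\'e--Bendixson argument.

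First I would fix the sign of $\lambda$. Since $f$ is continuous and, by Corollary \ref{cor:differential_time-invariant_pf}, $\mathbf{w}$ is a continuous unit field, the function $t\mapsto\lambda(\psi_t(x))=\langle f(\psi_t(x)),\mathbf{w}(\psi_t(x))\rangle_{\psi_t(x)}$ is continuous and, by hypothesis, bounded away from $0$; hence it has constant sign along the (connected) orbit, say $\lambda>0$ after reversing time if needed. Then $\dot\psi_t(x)=\lambda(\psi_t(x))\mathbf{w}(\psi_t(x))\in\mathrm{int}\,\calK_\calX(\psi_t(x))$, so $t\mapsto\psi_t(x)$ is a conal curve and $\psi_s(x)\sqsubseteq_{\calK_\calX}\psi_t(x)$ for all $s\le t$. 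Because $\mathbf{w}$ is a \emph{unit} field, the speed equals $|\lambda(\psi_t(x))|\ge\rho>0$, so the trajectory can never converge to a point; taking limits along a sequence $t_k\to\infty$ realising a point $p\in\omega(x)$ (nonempty by boundedness) gives $f(p)=\lambda(p)\mathbf{w}(p)$ with $|\lambda(p)|\ge\rho$. Thus $\omega(x)$ contains no equilibrium and every orbit contained in it is again a monotone conal curve of speed $\ge\rho$.

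Next I would produce a periodic orbit inside $\omega(x)$. Fix a regular point $p\in\omega(x)$, a flow box around it and a local transversal section, together with a neighbourhood on which $\sqsubseteq_{\calK_\calX}$ restricts to a genuine partial order (possible by the local antisymmetry recorded in the Example of Section \ref{sec:conal_manifold}). The orbit of $x$ meets the section at times $\tau_1<\tau_2<\cdots\to\infty$ with $y_k:=\psi_{\tau_k}(x)\to p$, and since differential positivity makes the flow order preserving, the first-return map $P$ on the section is continuous and monotone with $P(y_k)=y_{k+1}$. Continuity and $y_k\to p$ then force $P(p)=p$, i.e. the orbit of $p$ is a periodic orbit contained in $\omega(x)$; by monotonicity and closedness of $\sqsubseteq_{\calK_\calX}$ one also gets $y_k\sqsubseteq_{\calK_\calX}p$.

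The remaining and hardest step is to upgrade ``$\omega(x)$ is a periodic orbit'' to ``$\psi_t(x)$ is itself periodic'', i.e. to exclude that the monotone orbit merely spirals onto, or connects heteroclinically to, the periodic orbit of $p$. The delicate point is that $\sqsubseteq_{\calK_\calX}$ is only \emph{locally} antisymmetric: an ordering of two points realised by a long conal curve that winds around the periodic orbit is perfectly compatible with that orbit being a nontrivial conal loop, so the order alone cannot close the trajectory. I would therefore combine the order with the contraction of the Hilbert metric (Theorem \ref{thm:dsch}): running the same construction on $\alpha(x)$ yields a periodic orbit ``below'' the trajectory, the contraction forces the asymptotic Perron--Frobenius behaviour at $p$ to be unique, and together with $y_k\sqsubseteq_{\calK_\calX}p$ and local antisymmetry this should pin the relevant return point exactly onto $p$. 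Once a genuine recurrence $\psi_{t'}(x)=\psi_{t''}(x)$ with $t'<t''$ is obtained, uniqueness of solutions of the autonomous system immediately gives periodicity with period $t''-t'$. I expect this closing-up step---reconciling the globally non-antisymmetric conal order with the local recurrence so as to rule out spiralling---to be the main obstacle.
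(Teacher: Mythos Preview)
Your outline diverges from the paper's proof at the decisive point, and the gap you yourself flag in step 3 is real. You try to run a monotone Poincar\'e--Bendixson argument: find a periodic orbit inside $\omega(x)$ via continuity of the first-return map, and then somehow upgrade to periodicity of $\psi_t(x)$ itself. The upgrade sketch (``combine the order with the Hilbert contraction \ldots together with $y_k\sqsubseteq_{\calK_\calX}p$ and local antisymmetry this should pin the return point onto $p$'') is not an argument; on a nonlinear space with only local antisymmetry, a conal loop is perfectly compatible with a spiralling conal curve accumulating on it, and nothing you have written rules this out.

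The paper avoids this difficulty by never passing through $\omega(x)$. Its mechanism is \emph{metric contraction of the return map}, not order-theoretic monotonicity. The steps are: (1) since $f(\psi_t(x))=\partial\psi_t(x)f(x)$ stays bounded and lies in $\mathrm{int}\,\calK_\calX$ on a tubular neighbourhood $\calE$ of the orbit, Lemmas~\ref{lem:boundedness2} and~\ref{lem:boundedness1} give a uniform bound $\overline{\alpha}$ on $|\partial\psi_t(z)\delta z|$ for all $z\in\calE$, $\delta z\in T_z\calX$; (2) combining boundedness with the Hilbert contraction of Theorem~\ref{thm:dsch}, the component of $\partial\psi_t(z)\delta z$ transverse to $\mathbf{w}$ goes to zero; (3) hence any short curve through $x$ is flowed into an arc of length $\le\overline{\alpha}\varepsilon$ asymptotically tangent to $f$, i.e.\ nearby trajectories converge onto a bounded arc $\calC_t(x)$ of the orbit of $x$; (4) on a local section $\calS$ at a recurrence point, this forces the induced return map $P_K$ to send $\calS\cap\calB_\varepsilon(x)$ into $\calS\cap\calB_{2\varepsilon/3}(x)$, and a fixed-point argument gives $\psi_{t_K}(x)=x$ directly. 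The orbit is its own attractor, so there is no spiralling to exclude. The ingredient you are missing is the boundedness of the linearization (Lemmas~\ref{lem:boundedness1}--\ref{lem:boundedness2}), which is what converts projective contraction of the Hilbert metric into genuine metric contraction on the section.
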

\begin{proofof}\emph{Lemma \ref{lem:Hirsch}:}
In what follows we use $\calA:=\{\psi_t(x)\,|\, t\in\real\}$ and 
$\calB_\varepsilon(x)$ to denote
a ball of radius $\varepsilon$ centered at $x$:
for any two points $z$ 
in $\calB_\varepsilon(x)$ there exists a curve $\gamma(\cdot)$ such that
$\gamma(0) = x$, $\gamma(1) = z$ and whose length
$L(\gamma(\cdot)) = \int_0^1 |\dot{\gamma}(s)|_{\gamma(s)}ds \leq \varepsilon$.
We make also use of the notion
of \emph{local section} at 
$x$, which is any open set $\calS\subseteq\calX$ 
of dimension $n-1$
($\calX$ has dimension $n$)
contained within a (sufficiently) small neighborhood 
$\calB_\varepsilon(x)$ of $x$ such that
$x \in \calS$ and $\mathbf{w}(z) \notin T_z\calS$ 
for each $z\in \calS$. 
Finally, for any given Rimannian tensor such that 
$\langle \delta x, \mathbf{w}(x) \rangle_x \geq 0$ for any $x\in \calX$ and $\delta x \in \calK_\calX(x)$,
define the vertical projection
$W_x(\delta x) := \langle \delta x, \mathbf{w}(x)\rangle_x \mathbf{w}(x)$, 
and the horizontal projection
$H_x(\delta x) := \delta x - W_x(\delta x)$. 

\textbf{1)} \underline{Bounded variational dynamics}: 
$0\neq f(\psi_t(x)) = \lambda(\psi_t(x))\mathbf{w}(\psi_t(x))$ for all $t\geq 0$ therefore,
by continuity of the vector field and boundedness of trajectories,
for $\varepsilon > 0$ sufficiently small,
$f(z) \in \mbox{int}\calK_\calX(z)$ or
$-f(z) \in \mbox{int}\calK_\calX(z)$ 
for all $z \in \calE := \bigcup_{t\geq 0}\calB_\varepsilon(\psi_t(x))$.
Note that $\calE$ is a compact set by boundedness of trajectories.
Without loss of generality consider
$f(z) \in \mbox{int}\calK_\calX(z)$. Then,
$f(\psi_t(z)) \in \mbox{int}\calK(\psi_t(z))$
for all $z \in \calE$ and $t\geq 0$,
by differential positivity combined with
the identity $f(\psi_t(z)) = \partial \psi_t(z)f(z)$, which
makes { $(\psi_t(z),f(\psi_t(z)))$ a trajectory of the prolonged system.} 

{
By boundedness of trajectories, 
$|\partial \psi_t(z)f(z)|_{\psi_t(z)}=|f(\psi_t(z))|_{\psi_t{z}}$ is necessarily bounded.}
Lemma \ref{lem:boundedness2} guarantees that
$\limsup\nolimits\limits_{t\to\infty} |\partial \psi_t(z)\mathbf{w}(z)|_{\psi_t(z)} \!\!<\! \infty$
for all $z \in \calE$.
By Lemma \ref{lem:boundedness1}, 
$\limsup\nolimits\limits_{t\to\infty} |\partial \psi_t(z)\delta z|_{\psi_t(z)} \!\!<\! \infty$
for all $z\!\in\!\calE$ and $\delta z \!\in\! T_z\calX$.
Similar results can be obtained for the case
$-f(z) \in \mbox{int}\calK_\calX(z)$ exploiting the linearity
of $\partial\psi_t(z)$.
Finally,
consider any $z\in \calE$ and define 
$
\alpha_z := \sup\nolimits\limits_{\delta z \in T_z\calX,|\delta z|_z = 1} 
\!\limsup\nolimits\limits_{t\to\infty} |\partial \psi_t(z)\delta z|_{\psi_t(z)} < \infty 
$.
Then, for any $\delta z \in T_z\calX$,
$\limsup\nolimits\limits_{t\to\infty} |\partial \psi_t(z)\delta z|_{\psi_t(z)} < \alpha_z |\delta z|_z$.
Since $\calE$ is a compact set, there exists 
$\overline{\alpha} := \sup_{z\in \calE} \alpha_z$. 

\textbf{2)} \underline{Contraction of the horizontal component}:
Take $z\in \calE$.
For $\delta z \in \calK_\calX(z)$, combining the contraction property
$\lim_{t\to\infty}
d_{\psi_t(z)} (\partial \psi_t(z) \delta z, \mathbf{w}(\psi_t(z))) = 0$ 
{ of
Theorem \ref{thm:dsch} and
the bound 
$\limsup\nolimits\limits_{t\to\infty} |\partial \psi_t(z)\delta z|_{\psi_t(z)} \leq \overline{\alpha} |\delta z|_z$
in \textbf{1)}, we get}
$\lim\nolimits\limits_{t\to\infty} \partial \psi_t(z) \delta z - W_{\psi_t(z)}( \partial \psi_t(z)  \delta z) = 0$,
that is, $\lim\nolimits\limits_{t\to\infty}  H_{\psi_t(z)}(\partial \psi_t(z) \delta z) = 0$.
A similar result holds for $\delta z \in -\calK_\calX(z)$.
Consider now $\delta z \notin\calK_\calX(z)$. 
Define the new 
vector $\delta z^* = \delta z + \alpha \mathbf{w}(z)$. For $\alpha$
sufficiently large $\delta z^* \in \calK_\calX(z)$. Therefore
$\lim\nolimits\limits_{t\to\infty}  H_{\psi_t(z)}(\partial \psi_t(z) \delta z^*) = 0$ which implies 
$\lim\nolimits\limits_{t\to\infty} H_{\psi_t(z)}(\partial \psi_t(z) \delta z) = 0$.

\textbf{3)} \underline{Attractiveness of $\psi_t(x)$}:
{Consider the case $t=0$ since $\psi_{0}(x) = x$ 
(the argument is the same for $t> 0$) and take} 
any curve $\gamma(\cdot):[0,1]\to \calE$ such that 
$\gamma(0) =x$ and $L(\gamma(\cdot)) = \varepsilon$,
and consider the evolution of $\gamma(\cdot)$ along the flow 
of the system, that is, $\psi_t(\gamma(s))$ for $s\in [0,1]$.
We observe that
$\frac{d}{ds} \psi_t(\gamma(s)) \!=\! [\partial \psi_t(\gamma(s))] \dot{\gamma}(s)$.
Thus, by \textbf{1)},
$
\limsup\nolimits\limits_{t\to\infty}
|\frac{d}{ds} \psi_t(\gamma(s))|_{\psi_t(\gamma(s))} \leq 
\overline{\alpha} |\dot{\gamma}(s)|_{\gamma(s)}
$, {which guarantees} $\limsup\nolimits\limits_{t\to\infty}
L(\psi_t(\gamma(\cdot))) \leq \varepsilon\overline{\alpha}$.

{
By \textbf{2)}, $\lim\nolimits\limits_{t\to\infty} 
H_{\psi_t(\gamma(s))}( \frac{d}{ds} \psi_t (\gamma(s)))|_{\psi_t(\gamma(s))}
 = 0$
for all $s\in [0,1]$. Thus, 
$\frac{d}{ds}\psi_t(\gamma(s))$ either converges to zero or aligns to 
the Perron-Frobenius vector field. Precisely, three cases may occur:
\begin{itemize}
\item
$\lim\nolimits\limits_{t\to\infty} 
\frac{d}{ds} \psi_t (\gamma(s))) = 0$,
\item
$\lim\nolimits\limits_{t\to\infty} d_{\psi_t(\gamma(s))} ( \mathbf{w}(\psi_t(\gamma(s))) , \frac{d}{ds}\psi_t(\gamma(s)) ) = 0$,
\item
$\lim\nolimits\limits_{t\to\infty} d_{\psi_t(\gamma(s))} ( \mathbf{w}(\psi_t(\gamma(s))) , -\frac{d}{ds}\psi_t(\gamma(s)) ) = 0$.
\end{itemize}
Furthrmore, 
$\lim\nolimits\limits_{t\to\infty} d_{\psi_t(\gamma(s))} ( \mathbf{w}(\psi_t(\gamma(s))) , f(\psi_t(\gamma(s))) ) = 0$
since $f(\psi_t(\gamma(s))) \in \calK_\calX(\psi_t(\gamma(s)))$.
Thus, in the limit, the image of $\psi_t(\gamma(\cdot))$ 
is given by the image of a (time-dependent Perron-Frobenius) curve $\gamma^{\mathbf{w}}_t(\cdot)$ that satisfies
either $\frac{d}{ds}\gamma^\mathbf{w}_t(s) = \mathbf{w}(\gamma^\mathbf{w}_t(s))$
or $\frac{d}{ds}\gamma^\mathbf{w}_t(s) = -\mathbf{w}(\gamma^\mathbf{w}_t(s))$
at any fixed $t$.
By construction,
$\frac{d}{ds}\gamma^\mathbf{w}_t(s) = \frac{1}{\lambda(\gamma^\mathbf{w}_t(s))} f(\gamma^\mathbf{w}_t(s))$.

At each fixed $t$, $\gamma^{\mathbf{w}}_t(\cdot)$ is a (reparameterized) integral curve of the vector field $f$
from the initial condition $\gamma^{\mathbf{w}}_t(0) = \psi_t(x)$. Therefore, 
trajectory with initial condition in} $\gamma(s)$ converges
asymptotically to $\calA$, for all $s\in[0,1]$. 
In particular, { using the bound 
$L(\psi_t(\gamma(\cdot))) \leq \varepsilon\overline{\alpha}$ characterized above},
each trajectory converges asymptotically to
{
\begin{equation}
\label{eq:wow1}
\calC_t(x):= \left\{\psi_{t+\tau}(x)\in \calA\,|\, 
-\frac{\overline{\alpha}\varepsilon}{\rho}  \leq \tau \leq  \frac{\overline{\alpha}\varepsilon}{\rho}  \right\} \ .
\end{equation}
}
\textbf{4)} \underline{Periodicity of the orbit}:
$\psi_t(x)$ does not converge to a fixed point and belongs to a compact set for each $t$, 
therefore there exists a point $\psi_{t_0}(x)$ whose neighborhood $\calB_{\varepsilon}(\psi_{t_0}(x))$
is visited by the trajectory infinitely many times for any given $\varepsilon>0$.
{
For simplicity, without any loss of generality, we consider this point given at $t_0 = 0$, 
that is, $\psi_{0}(x) = x$.}

Consider a local section $\calS$ at $x$
and consider the sequence $t_k\to\infty$ such that $\psi_{t_k}(x)\in \calS$.
{
Since $f(x)$ is aligned with $\mathbf{w}(x)$, 
for $\varepsilon$ sufficiently small, 
the continuity of the system vector field $f$
guarantees that} 
$\calS$ is transverse to $f(z)$ for all $z \in \calS\cap\calB_\varepsilon(x)$, 
that is, $f(z) \notin T_z\calS\cap\calB_\varepsilon(x)$.

By {\textbf{3)}}, for every $z \in \calS\cap\calB_\varepsilon(x)$, 
$\psi_{t_k}(z)$ converges asymptotically to the set $\calC_{t_k}(x)$ as $k\to\infty$.
{For any positive integer $N$, define the ($\frac{\varepsilon}{N}$-inflated) set 
\begin{equation}
\label{eq:wow2}
\begin{array}{l}
\calC_{t}^{(\varepsilon/N)}(x) := \{ y\,|\, \exists \gamma:[0,1]\to \calE, \exists s \in [0,1] \mbox{ such that } \vspace{0.5mm}\\
\hspace{26mm}\gamma(0)\in \calC_{t}(x),\, 
 \gamma(s)=y, 
 L(\gamma(\cdot)) \leq \varepsilon/N\} \, . \vspace{1mm}
\end{array}
\end{equation}
}
Then, by continuity, for every $N>0$, there exists a $k\geq k_N$ sufficiently large such that
$
 \psi_{t_{k}}(z) \in  \calC_{t_{k}}^{(\varepsilon/N)}(x) 
$
for all  $z \in \calS\cap\calB_\varepsilon(x)$.

By the transversality of the section $\calS$ with respect to the system vector field,
{ for $N$ sufficiently large},
we have that {the flow} from $z\in \calC_{t_{k}}^{(\varepsilon/N)}(x)$ 
satisfies $\psi_{\tau}(z) \in \calS$ for some 
$-(\frac{\overline{\alpha}}{\rho}+\delta_N) \varepsilon \leq \tau\leq (\frac{\overline{\alpha}}{\rho}+\delta_N)\varepsilon$,
where $\delta_N$ is some (small) positive constant such that $\delta_N \to 0$ as $N\to \infty$. Moreover,
by continuity with respect to initial conditions, 
for $N$ sufficiently large, we get
{
\begin{equation}
\label{eq:wow3}
\psi_{\tau}(z) \in \calS \cap \calB_{\frac{\varepsilon}{3}}(\psi_{t_k}(x)) \ .
\end{equation}
It follows that, for $t_k -(\frac{\overline{\alpha}}{\rho}+\delta_N) \varepsilon \leq t \leq t_k+(\frac{\overline{\alpha}}{\rho}+\delta_N)\varepsilon$,}
the flow $\psi_t(\cdot)$ maps every point of $\calS\cap\calB_\varepsilon(x)$ 
into $\calS \cap \calB_{\frac{\varepsilon}{3}}(\psi_{t_k}(x))$. 

{For $k \geq k_N$, denote by $P_{k}$ the mapping
from $\calS\cap\calB_\varepsilon(x)$ into $\calS \cap\calB_{\frac{\varepsilon}{3}}(\psi_{t_k}(x))$}.
Since $\psi_{t_k}(x)$ recursively visit any local section of $x$,
eventually, for some $K\geq k_N$, the flow satisfies 
$\psi_{t_K}(x) \in \calS\cap \calB_{\frac{\varepsilon}{3}}(x)$.
{
Using the results above, we conclude that 
the flow of the system maps $\calS\cap\calB_\varepsilon(x)$ 
into $\calS \cap \calB_{\frac{\varepsilon}{3}}(\psi_{t_{K}}(x))\subseteq \calS \cap\calB_{\frac{2\varepsilon}{3}}(x)$,
that is, $P_K$ is a contraction.
By Banach fixed-point theorem
$P_{K}(x) = x$, that is, $\psi_{t_K}(x)=x$.}
\end{proofof} \vspace{1mm}

We are now ready for the proof of the main theorem. \\
\begin{proofof}\emph{Theorem \ref{thm:limit_sets}:}
For any $\xi \in \calX$ consider $\omega(\xi)$.
Three cases may occur: 

\noindent\textbf{1)} $f(x)=0$ for some $x\in \omega(\xi)$.
$x$ is a fixed point.

\noindent\textbf{2)} 
$f(x) \in \mbox{int}\calK_\calX(x)\setminus\{0\}$ or $-f(x) \in \mbox{int}\calK_\calX(x)\setminus\{0\}$ for some $x\in \omega(\xi)$.
In such a case,
\begin{equation}
\label{eq:invariance}
f(z) = \lambda(z) \mathbf{w}(z) \quad \mbox{for all } z\in \omega(\xi) \ ,
\end{equation}
where $\lambda(z) \in \real$ is a scaling factor.
To see this, consider the case $f(x) \in \mbox{int}\calK_\calX(x)$ (wlog).
By definition of $\omega$-limit set, 
there exists a sequence $t_k\to\infty$ such that
$\lim\nolimits\limits_{k\to\infty} \psi_{t_k}(\xi) = x$. For $k\geq k^*$ sufficiently large, $\psi_{t_k}(\xi)$ belongs to an infinitesimal
neighborhood of $x$ therefore $f(\psi_{t_k}(\xi)) \in \calK(\psi_{t_k}(\xi)) $
by continuity of the cone field since $f(x) \in \mbox{int}\calK_\calX(x)$.
Then, by projective contraction, 
$\lim\nolimits\limits_{k\to\infty} d_{\psi_{t_k}(\xi)}(f(\psi_{t_k}(\xi)), w(\psi_{t_k}(x)))=0$, that is, 
$f(x) = \lambda(x) \mathbf{w}(x)$
for some scaling factor $\lambda(x)\in\real$.
By definition of $\omega$-limit set, 
starting from $t_{k^*}$, it is possible to find a 
sequence $\tau_k\to\infty$ as $k\to\infty$ such that
$\lim\nolimits\limits_{k\to\infty}\psi_{t_{k^*}+\tau_k}(\xi) = z$
for any $z\in \omega(\xi)$. Thus, by the argument above,
$f(z) = \lambda(z) \mathbf{w}(z)$ for all $z\in \omega(\xi)$.
\eqref{eq:invariance} guarantees 
that, for any $x\in \omega(\xi)$,
the image of the trajectory $\psi_t(x)$ 
is a subset of the image of some Perron-Frobenius curve $\gamma^{\mathbf{w}}(\cdot)$.
Note that $\lambda(\psi_{t}(x))$ may converge to zero. 
In such a case $\psi_t(x)$ converges to a fixed point.
Otherwise, $|\lambda(\psi_{t}(x))|\geq \varepsilon>0$ therefore,
by Lemma \ref{lem:Hirsch}, $\psi_t(x)$ is periodic.

\noindent\textbf{3)}
It remains to consider the case  
$f(x) \notin \calK_\calX(x)$ 
(or $-f(x) \notin \calK_\calX(x)$) for some $x\in \omega(\xi)$.
In such a case, from the previous item, 
$f(z) \notin \calK_\calX(z)\setminus\{0\}$ for all $z\in \omega(\xi)$. 
Then, either
$\lim_{t\to\infty} f(\psi_t(x)) = 0 $ 
($\psi_t(x)$ converges to a fixed point)
or the contraction of the Hilbert metric enforces
$
\liminf\nolimits\limits_{t\to\infty} |\partial \psi_{t}(x) \mathbf{w}(x)|_{\psi_t(x)} = \infty 
$.
For the latter, consider any sequence $t_k\to\infty$ as $k\to\infty$
such that $f(\psi_{t_k}(x)) \geq \varepsilon>0$.
Take $\delta x = f(x) + \lambda \mathbf{w}(x)$. 
For $\lambda$ sufficiently large $\delta x\in \calK_\calX(x)$.
Then, 
$\lim\nolimits\limits_{k\to\infty} d_{\psi_{t_k}(x)}(\partial \psi_{t_k}(x) \delta x, w(\psi_{t_k}(x))) = 0$
holds only if the evolution of $\delta x$ along the flow
$\partial \psi_{t_k}(x)\delta x 
= \partial \psi_{t_k}(x)f(x) + \lambda \partial \psi_{t_k}(x)\mathbf{w}(x) 
= f(\psi_{t_k}(x)) + \lambda \partial \psi_{t_k}(x)\mathbf{w}(x)$
shows an unbounded growth of the component $\partial \psi_{t_k}(x)\mathbf{w}(x)$.
\end{proofof} \vspace{1mm}

\begin{proofof}\emph{Corollary \ref{thm:PBtheorem}:} 
{
Recall that $f(\psi_t(x)) = \partial \psi_t(x)f(x)$. 
Since $f(x) \in \mathrm{int}\calK_\calX(x)$, 
Lemmas \ref{lem:boundedness1} and \ref{lem:boundedness2} guarantee that
$\limsup\nolimits\limits_{t\to\infty} |\partial \psi_{t}(x) \mathbf{w}(x)|_{\psi_t(x)} < \infty $. 
Since $f(x)\neq 0$ in $\calC$, we conclude that} Case (ii) of Theorem \ref{thm:limit_sets} does not occur.
{Exploiting again the assumption $f(x)\neq 0$ in $\calC$, Case (i)} 
of Theorem \ref{thm:limit_sets} guarantees that the trajectories of 
$\Sigma$ converge to periodic orbits. { We need to prove
the uniqueness.}

By contradiction,
suppose that $\calA_1$ and $\calA_2$ are two periodic
orbits such that $\calA_1 \cap \calA_2 = \emptyset$. 
Take any curve  $\gamma(\cdot):[0,1]\to \calC$
such that $\gamma(0) \in \calA_1$ and
$\gamma(1)\in \calA_2$, and recall that 
$\frac{d}{ds}\psi_t(\gamma(s)) = [\partial\psi_t(x)_{x=\gamma(s)}]\dot{\gamma}(s)$.
{Since $f(\psi_t(x)) = \partial \psi_t(x)f(x)$ and $f(x) \in \mathrm{int}\calK_\calX(x)$,
Lemmas \ref{lem:boundedness1} and \ref{lem:boundedness2} guarantee that}
$\limsup\nolimits\limits_{t\to\infty} |\frac{d}{ds}
\psi_t(\gamma(s))|_{\psi_t(\gamma(s))} < \infty$.
{Since $f(x) \in \mathrm{int}\calK_\calX(x)$ for any $x\in \calC$ and 
the trajectories of $\Sigma$ are bounded, we can use the argument
in \textbf{2)} and \textbf{3)} of Lemma \ref{lem:Hirsch} to show that}
$\frac{d}{ds}\psi_t(\gamma(s))$ converges asymptotically
to $\lambda_a(\psi_t(\gamma(s))) \mathbf{w}(\psi_t(\gamma(s)))$,
thus to $\lambda_b(\psi_t(\gamma(s))) f(\psi_t(\gamma(s)))$,
for some (bounded) scaling factors 
$\lambda_a(\cdot), \lambda_b(\cdot) \in \real$,

As a consequence, {every trajectories whose initial 
conditions belongs to the image of $\gamma(\cdot)$
converges asymptotically to an integral curve of
the system vector field $f(x)$, for $x\in \calC$,
connecting $\calA_1$ and $\calA_2$,}
since $\psi_t(\gamma(0))\in \calA_1$ and 
$\psi_t(\gamma(1))\in \calA_2$ for all $t\geq 0$.
{ It follows that $\calA_1\cap\calA_2 \neq \emptyset$}. A contradiction.
\end{proofof} \vspace{1mm}

\begin{proofof}\emph{Corollary \ref{thm:homoclinic}:}
For some $x\in \calX$, suppose that
$y_e = \lim\nolimits\limits_{t\to\infty} \psi_{-t}(x)$
and $z_e = \lim\nolimits\limits_{t\to\infty} \psi_{t}(x)$
are hyperbolic fixed point.

Suppose that the orbit connecting $y_e$ to $z_e$
is tangential to $\mathbf{w}(y_e)$ at $y_e$.
Take now any point $y$ in a small neighborhood
of $y_e$ such that $y = \psi_{-T}(x)$ for
some $T>0$. By continuity, 
$f(y) \in \mbox{int} \calK_\calX(y)$. Thus,
$\lim\nolimits\limits_{t\to\infty} d_{\psi_t(y)}(
f(\psi_t(y)),\mathbf{w}(\psi_r(y)))=
\lim\nolimits\limits_{t\to\infty} d_{\psi_t(y)}(
f(\psi_t(x)),\mathbf{w}(x_e)) = 0$, 
by Theorem \ref{thm:dsch}.
\end{proofof} \vspace{1mm}

\begin{proofof}\emph{Corollary \ref{thm:unstable_trajectories}:} 
Consider the trajectory $\psi_t(z)$.
Following the proof of Corollary \ref{thm:homoclinic},
necessarily, $f(\psi_t(z)) \notin \calK_\calX(\psi_t(z))$
for any $t\geq 0$. For instance, by contradiction,
suppose that $f(\psi_t(z)) \in \calK_\calX(\psi_t(z))$
for some $t\geq 0$. 
By definition, there exists a sequence
$t_k\to\infty$ as $k\to\infty$ such that 
$\lim\nolimits\limits_{k\to\infty}\psi_{t+t_k}(z) = x\in\omega(\xi)$
thus $\partial \psi_{t+t_k}(z) f(z) = f(x) \notin \calK_\calX(x)$.
By continuity, since $\calK_\calX(x)$ is closed, there exists 
$k^*$ sufficiently large $\partial \psi_{t+t_{k^*}}(z) f(\xi) \notin \calK_\calX(\psi_{t+t_{k^*}}(z))$. But this contradicts differential positivity.

Suppose now that $\omega(z)\subseteq\omega(\xi)$ is not a fixed point.
Then, there exists a sequence
$t_k\to\infty$ as $k\to\infty$ such that 
$\lim_{k\to\infty}f(\psi_{t_k}(z)) = f(x)\neq 0$ for some $x\in \omega(\xi)$.
Take $\delta z = f(z) + \lambda \mathbf{w}(z)$. 
For $\lambda$ sufficiently large $\delta z\in \calK_\calX(z)$.
Then, 
$\lim\nolimits\limits_{t_k\to\infty} 
d_{\psi_{t_k}(z)}(\partial \psi_{t_k}(z) \delta z, w(\psi_{t_k}(z))) = 0$
holds only if the evolution of $\delta z$ along the flow
$\partial \psi_{t_k}(z)\delta z
= \partial \psi_{t_k}(z)f(z) + \lambda \partial \psi_{t_k}(z)\mathbf{w}(z) 
= f(x) + \lambda \partial \psi_{t_k}(z)\mathbf{w}(z)$
shows an unbounded growth of the component 
$\partial \psi_{t_k}(z)\mathbf{w}(z)$.
\end{proofof}

\begin{proofof}\emph{Corollary \ref{thm:almost_convergence}:}
Consider Part (i) of Theorem \ref{thm:limit_sets}.
For any $x\in \omega(\xi)$, we have 
$f(\psi_t(x))=\lambda(\psi_t(x))\mathbf{w}(\psi_t(x))$.
{On vector spaces, for constant cone fields,}
closed curves cannot occur because 
every Perron-Frobenius curve is open. 
Therefore, $\lim\nolimits\limits_{t\to\infty}|\lambda(\psi_t(x))| =0$
by boundedness of solutions.
Consider Part (ii) of Theorem \ref{thm:limit_sets}
and take any $x\in \omega(\xi)$.
Either $\lim\nolimits\limits_{t\to\infty} f(\psi_t(x)) = 0$,
thus $\psi_t(x)$ converges to a fixed point for $t\to\infty$,
or 
$\liminf\nolimits\limits_{t\to\infty}|\partial\psi_t(x)\mathbf{w}(x)|_{\psi_t(x)}=\infty$.
 
{ This last case covers attractors which are not fixed points.
We show that their basin of attraction has dimension $n-1$ at most. 
By contradiction, let $\calA$ be an attractor with a basin of attraction
$\calB_\calA$ of dimension $n$. 
By Corollary \ref{thm:unstable_trajectories},
from every $x\in \calB_\calA$, 
$\liminf\nolimits\limits_{t\to\infty}|\partial\psi_t(x)\mathbf{w}(x)|_{\psi_t(x)}=\infty$.
}
Let $\gamma^{\mathbf{w}}(\cdot)$ be any Perron-Frobenius curve 
such that $\gamma^{\mathbf{w}}(0)=x$.
{
Since $\calB_\calA$ has dimension $n$, there exists
an interval $[\underline{s},\overline{s}] \ni 0$ such that
$\gamma^{\mathbf{w}}(s) \in \calB_\calA$ for all $s\in [\underline{s},\overline{s}]$.
Also, 
$\liminf\nolimits\limits_{t\to\infty}|[\partial\psi_t(x)_{x=\gamma(s)}]\dot{\gamma}^{\mathbf{w}}(s)|_{\psi_t(x)}=\infty$
for all $s\in [\underline{s},\overline{s}]$, that is,
$\liminf\nolimits\limits_{t\to\infty} 
L(\psi_t(\gamma^{\mathbf{w}}(\cdot))) = \infty$.

For each $t > 0$, 
the curve $\gamma_t(\cdot) := \psi_t(\gamma^{\mathbf{w}}(\cdot))$
is a reparameterization of a Perron-Frobenius curve, that is,
$\frac{d}{ds} \gamma_t(s) = \lambda_t(s) w(\gamma_t(s))$
where $\lambda_t(s)$ is a scalar.
Thus, $\gamma_t(\cdot)$ is an open curve for each $t$
that grows unbounded as $t\to \infty$.
It follow that, for all $s\in [\underline{s},\overline{s}]$, 
the trajectory $\psi_t(\gamma^{\mathbf{w}}(s))$ 
grows unbounded},
contradicting the assumption on the 
boundedness of the trajectories of $\Sigma$. 
\end{proofof}

 \bibliographystyle{plain}

\end{document}